\DeclareFontFamily{U}{jkpmia}{}
\DeclareFontShape{U}{jkpmia}{m}{it}{<->s*jkpmia}{}
\DeclareFontShape{U}{jkpmia}{bx}{it}{<->s*jkpbmia}{}
\DeclareMathAlphabet{\mathfrak}{U}{jkpmia}{m}{it}
\SetMathAlphabet{\mathfrak}{bold}{U}{jkpmia}{bx}{it}
\tikzset{elabelcolor/.style={color=blue} 
    vertex/.style={circle,draw,minimum size=1.5em},
    edge/.style={->,> = latex'}
}  
\definecolor{VHcolor}{rgb}{0.7,0.3,0.9}
\definecolor{MRocolor}{rgb}{0.1,0,1}
\newtheorem{cor}{Corollary}
\newtheorem{defi}{Definition}
\newtheorem{lemma}{Lemma}
\newtheorem{thm}{Theorem}
\newcommand{\N}{{\sf N}}
\newcommand{\D}{\sf{D}}
\newcommand{\uI}{X}
\newcommand{\uJ}{Y}
\newcommand{\uK}{Z}
\newcommand{\mmC}{\text{\textit{\u{C}}}} 
\newcommand{\nsp}{\llbracket\N\rrbracket} 
\newcommand{\mgs}{\mathcal{E}} 
\newcommand{\p}{\mathcal{U}} 
\newcommand{\x}{\mathcal{X}} 
\newcommand{\ds}{\mathcal{D}} 
\newcommand{\cds}{\comp{\mathcal{D}}} 
\newcommand{\ac}{\Check{\mathcal{A}}}  
\newcommand{\bset}{\beta\text{-set}} 
\newcommand{\bsets}{\beta\text{-sets}}
\newcommand{\bs}[1]{\beta(#1)}
\newcommand{\ess}{\mathfrak{B}^{^{\!\vee}}}
\newcommand{\ness}{\mathfrak{B}^{^{\!\bigtriangledown}}}
\newcommand{\pnote}{\mathfrak{B}^{^{\!+\setminus\vee}}}
\newcommand{\cess}{\mathfrak{B}^{^{\!\vee\!\!\!\vee}}}
\newcommand{\ncess}{\mathfrak{B}^{^{\!\vee\setminus\!\vee\!\!\!\vee}}}
\newcommand{\pos}{\mathfrak{B}^{^{\!+}}}
\newcommand{\parv}{\mathfrak{B}^{^{\!\oplus}}}
\newcommand{\van}{\mathfrak{B}^{^{\!0}}}
\newcommand{\xx}{\mathfrak{X}}
\newcommand{\posj}{\pos_{\!\text{{\tiny $\subset$}}\uJ}} 
\newcommand{\ent}{{\sf S}}  
\newcommand{\mi}{{\sf I}}  
\newcommand{\pmi}{\mathcal{P}}   
\newcommand{\cpmi}{\comp{\mathcal{P}}} 
\newcommand{\face}{\mathscr{F}}   
\newcommand{\gf}[1]{{\mathrm{#1}}}
\newcommand{\hp}{\gf{H}_{\mathcal{P}}}  
\newcommand{\lhp}{\gf{L}_{\mathcal{P}}}  
\newcommand{\subh}[1]{\left.\gf{H}_{\pmi}\right|_{#1}}
\newcommand{\lat}[1]{\mathscr{L}_{\text{#1}}}
\definecolor{PminusEcol}{rgb}{1,0.8,0.8}
\definecolor{EminusMcol}{rgb}{1,0.85,0.65}
\definecolor{Mcol}{rgb}{1,0.97,0.5}
\definecolor{Qcol}{rgb}{0.7,1,0.9}
\definecolor{Vcol}{rgb}{0.83,0.83,1}
\definecolor{EminusPcol}{rgb}{0.8,1,0.6}
\definecolor{Ccol}{rgb}{0.9,0.9,0.9}
\definecolor{bulkcol}{rgb}{0.8,0.8,0.8}
\newcommand{\comp}[1]{#1^\complement}
\title{Correlation hypergraph: a new representation of a quantum marginal independence pattern}
\author[a]{Veronika E. Hubeny,}
\emailAdd{veronika@physics.ucdavis.edu}
\author[b]{Massimiliano Rota}
\emailAdd{max.rota@bristol.ac.uk}
\affiliation[a]{Center for Quantum Mathematics and Physics (QMAP)\\ 
Department of Physics \& Astronomy, University of California, Davis, CA 95616 USA}
\affiliation[b]{School of Mathematics, University of Bristol,
Woodland Road, Bristol, BS8 1UG UK}
\abstract{We continue the study of the quantum marginal independence problem, namely the question of which faces of the subadditivity cone are achievable by quantum states. We introduce a new representation of the patterns of marginal independence  (PMIs, corresponding to faces of the subadditivity cone) based on certain correlation hypergraphs, and demonstrate that this representation provides a more efficient description of a PMI, and consequently of the set of PMIs which are compatible with strong subadditivity. We then show that these correlation hypergraphs generalize to arbitrary quantum systems the well known relation between positivity of mutual information and connectivity of entanglement wedges in holography, and further use this representation to derive new results about the combinatorial structure of collections of simultaneously decorrelated subsystems specifying SSA-compatible PMIs.
In the context of holography, we apply these techniques to derive a necessary condition for the realizability of entropy vectors by simple tree graph models, which were conjectured in \cite{Hernandez-Cuenca:2022pst} to provide the building blocks of the holographic entropy cone. Since this necessary condition is formulated in terms of chordality of a certain graph, it can be tested efficiently.}
\begin{document}
 

\maketitle

\section{Introduction}

A central question in gauge/gravity  duality~\cite{Maldacena:1997re,Witten:1998qj,Gubser:1998bc} is how bulk spacetime is encoded in the boundary theory. For semiclassical states, substantial progress in this direction\footnote{\,See for example~\cite{Harlow:2018fse} and references therein.} has relied, at least in part, on the HRRT prescription for computing the entanglement entropy of spatial regions in the boundary~\cite{Ryu:2006bv,Hubeny:2007xt}, along with its various generalizations~\cite{Faulkner:2013ana,Engelhardt:2014gca}. It is well known that in the purely classical regime—i.e., neglecting quantum effects in the bulk—this prescription implies that the entropies of boundary subregions satisfy certain linear inequalities that are generically violated by quantum states~\cite{Hayden:2011ag}. These inequalities thus reflect features of the ``entanglement structure'' characteristic of boundary states that admit a classical bulk dual. One may then hope that a complete understanding of such inequalities, for arbitrarily intricate and fine-grained partitions of the boundary, could yield deep insight into the nature of this encoding.

While for dynamical spacetimes it is still an open question \cite{Rota:2017ubr,Bao:2018wwd,Czech:2019lps,Grado-White:2024gtx,Grado-White:2025jci}
whether new inequalities of this type exist besides monogamy of mutual information \cite{Hayden:2011ag,Wall:2012uf}, in the static case many new inequalities have been found for five or more parties \cite{Bao:2015bfa,Cuenca:2019uzx,Czech:2022fzb,Hernandez-Cuenca:2023iqh,Czech:2023xed,Czech:2024rco}. However, even in this restricted setting, a complete derivation of all inequalities, for an arbitrary number of parties, is still an open and challenging problem. 

A convenient framework to study these inequalities for a fixed, but arbitrary, number of parties $\N$ was first introduced in \cite{Bao:2015bfa} (mimicking analogous constructions for arbitrary quantum states \cite{1193790} and classical probability distributions \cite{641556}), and it is known as the \textit{holographic entropy cone} (HEC). As already shown in \cite{Bao:2015bfa}, the HEC is a convex polyhedral cone for any $\N$, and can be thought of as the set of entropy vectors in $\mathbb{R}^{\D}$, with $\D=2^\N-1$, realized by certain graph models. 

It is a general mathematical fact that any polyhedral cone can be equivalently described either by a finite set of linear inequalities or by its set of extreme rays. Given one of these two representations, it is in principle always possible to derive the other; however, when the number of dimensions, inequalities, or extreme rays is large, this conversion problem can become computationally intractable, as no efficient algorithm is currently known. Setting aside questions of computability, one may then ask which of the two descriptions is more naturally suited to systematic derivation and physical interpretation. Following~\cite{Hernandez-Cuenca:2022pst} (building on the ideas of~\cite{Hubeny:2018trv,Hubeny:2018ijt,Hernandez-Cuenca:2019jpv}), one goal of this work is to advance the characterization of the extreme rays of the HEC.

It was conjectured in~\cite{Hernandez-Cuenca:2022pst} that the extreme rays of the holographic entropy cone (HEC), for any fixed number of parties $\N$, can be obtained from a particular subset of the extreme rays of the $\N'$-party ``subadditivity cone'' (SAC), for some finite $\N' \geq \N$ that depends on $\N$. As the name suggests, the SAC is the polyhedral cone defined by all instances of the subadditivity inequality, and therefore constitutes an immediate outer bound to the HEC. The subset of interest comprises those extreme rays that admit a realization via ``simple tree'' graph models—namely, graph models with tree topology in which each ``boundary vertex'' (equivalently, each connected boundary region in an associated holographic setup) is labeled by a distinct party. The mapping from these rays to candidate extreme rays of the HEC corresponds to a  ``subsystem coarse-graining'': a combinatorial identification where collections of the $\N'$ parties are reinterpreted as single parties in the $\N$-party system.

Assuming this conjecture holds, the problem of characterizing the extreme rays of the HEC reduces to that of identifying the extreme rays of the SAC that can be realized by simple tree graph models. One of the goals of this work is to derive a necessary condition for such realizability which applies to an arbitrary number of parties. Importantly, while we establish only necessity at present, we expect the condition to also be sufficient and to yield a constructive procedure for building a simple tree graph model realizing a given entropy vector when it exists (or establishing a no-go statement when it does not)~\cite{2025wip}. Moreover, as we will explain, the condition we derive is not restricted to the extreme rays of the SAC, but extends to faces of arbitrary dimension, and can be tested efficiently.

The second main goal of this work is motivated by quantum information-theoretic considerations. The \textit{quantum marginal independence problem} (QMIP), introduced in \cite{Hernandez-Cuenca:2019jpv}, intuitively asks: what are the implications of the presence or absence of correlations among certain subsystems of a larger quantum system for the presence or absence of correlations among other subsystems? More formally, it can be phrased as the question of which face of the $\N$-party SAC is achievable by the entropy vector of a density matrix. This framing arises because the saturation of an instance of subadditivity is equivalent to the vanishing of the corresponding mutual information, which is zero if and only if there is no correlation between the subsystems involved. From this perspective, the QMIP can also be understood as a question about the possible structures of $\N$-party density matrices, viewed through the factorizations of their marginals.

While the QMIP is formulated for arbitrary quantum states, it is also interesting to consider special instances of the problem where the set of allowed states is restricted. The classical version of this problem, for example—where states are restricted to joint probability distributions—has been studied in the context of probabilistic reasoning in AI \cite{pearl1988probabilistic} and was completely solved in \cite{GEIGER1991128}. Notably, this problem was solved despite the fact that not all entropy inequalities for probability distributions are known; indeed, already at four parties, there exists an infinite number of such inequalities \cite{2007:Matus}. This situation stands in remarkable contrast to the holographic case, where the aforementioned conjecture would imply that even a specific partial solution to this problem (namely, restricting to one-dimensional faces for arbitrary $\N$) would not only yield the general solution, but also allow for the reconstruction of all inequalities. 

Although this remains a conjecture, similar behavior has already been observed more generally in the class of states whose entropies are described by the hypergraph models of \cite{Bao:2020zgx}. In particular, it was shown in \cite{He:2023aif} that all extreme rays of the ``hypergraph entropy cone'' for $\N = 4$ can be obtained via subsystem coarse-grainings of extreme rays of the SAC for $\N = 9$. Since this hypergraph cone is contained within the entropy cone of stabilizer states \cite{Linden:2013kal,Walter:2020zvt}, and thus within the full quantum entropy cone \cite{1193790}, it is conceivable that solving the QMIP could shed light on the structure of these cones, and hence on the corresponding entropy inequalities.

To make progress in this direction, it is important to obtain an efficient and intuitive representation of the combinatorial structure of the set of vanishing (or, alternatively, strictly positive) instances of the mutual information for a face of the SAC. In holography, it is well known that the positivity of the leading term of the mutual information between regions diagnoses the connectivity of their joint entanglement wedge \cite{Headrick:2013zda,Hernandez-Cuenca:2023iqh}. As we will explain, this relation between the positivity of the mutual information and connectivity admits a natural generalization to any face of the subadditivity cone consistent with strong subadditivity (SSA).\footnote{\,More precisely, as we will explain, not even SSA is strictly necessary, for it can be replaced by a weaker combinatorial constraint introduced in \cite{He:2022bmi} and dubbed ``Klein's condition''.} Specifically, this notion of connectivity pertains to sub-hypergraphs of a certain hypergraph representation of a face, which we refer to as the \emph{correlation hypergraph},\footnote{\,We caution the reader that this correlation hypergraph representation is distinct from the hypergraph models of \cite{Bao:2020zgx}.} or more precisely, of the set of instances of the mutual information that are strictly positive for any entropy vector in the interior of the face. Moreover, we will show that this representation is more efficient than those based on MI-poset down-sets used in previous works \cite{He:2022bmi,He:2023aif,He:2023cco,He:2024xzq}, and we will use it to prove new results regarding the combinatorial structure of the set of mutual information instances that characterize an ``SSA-compatible'' face. The discovery of this hypergraph representation, and its close analogy with properties of holographic graph models, is a remarkable example of how holography—and in particular, the ``geometrization'' of quantum information measures—can yield deep insights into structural properties of arbitrary quantum states.

\paragraph{Structure of the paper:} We begin in \S\ref{subsec:KC-review} by reviewing all fundamental definitions about entropy space and the SAC, as well as the necessary concepts from \cite{He:2022bmi}, in particular the poset of instances of the mutual information and Klein’s condition. In \S\ref{subsec:graph_review}, we then review the fundamental definitions concerning holographic graph models and highlight how they encode the relationship between subsystem correlations and entanglement wedge connectivity. This subsection also includes a few known results that, to our knowledge, have not appeared before in the literature. The key new tool introduced in this work is the notion of a $\bset$, presented in \S\ref{subsec:beta-sets}. The characterization of KC-compatible faces of the SAC in terms of $\bsets$ is presented in \S\ref{subsec:KC-PMIs-BS-language}, and it is then used in \S\ref{subsec:hypergraph-rep} to define the correlation hypergraph representation. The necessary condition for the realizability of an entropy vector by a simple tree graph model that we mentioned above is derived in \S\ref{sec:necessary} using this framework. Finally, \S\ref{sec:discussion} outlines a list of directions for future investigation.

\paragraph{Guide for the reader:} For the reader who is mostly interested in the derivation of the necessary condition for the realizability of entropy vectors by simple tree graph models, very little is needed from the core section of this work, \S\ref{sec:bsets-corr-hyp}. We recommend starting from the review section, \S\ref{sec:review}, and then proceeding to the definition of positive $\bsets$ at the beginning of \S\ref{subsec:beta-sets} and the definition of the correlation hypergraph at the beginning of \S\ref{subsec:hypergraph-rep}. This is sufficient to then read \S\ref{sec:necessary}, where the necessary condition is derived.

The reader who is instead mostly interested in the quantum information aspects of our construction can start from the review in \S\ref{subsec:KC-review} and proceed directly to \S\ref{sec:bsets-corr-hyp}. A reading of \S\ref{subsec:graph_review} is recommended only to fully appreciate the similarities between properties of the correlation hypergraph and holographic graph models.

\paragraph{Conventions:} To simplify notation, when we subtract an element $a$ from a set $A$ we write $A\setminus a$ instead of $A\setminus \{a\}$. The complement of a subset $B$ of a given set (which will be clear from context and hence left implicit) is denoted by $\comp{B}$. We use the compact notation $[n]$ for the set $\{1,2,\ldots,n\}$,
and $\llbracket n \rrbracket$ for the enlarged set  $\{0,1,\ldots,n\}$ (explained below).

\section{Review}
\label{sec:review}

\subsection{Entropy space, the MI-poset, and Klein's condition}
\label{subsec:KC-review}

Given an $\N$-party quantum system, and a density matrix $\rho$, we will often consider a purification of $\rho$ by introducing an ancillary subsystem conventionally denoted by 0. 
To denote the collection of all N parties along with the purifier, we use the shorthand
$\nsp\coloneqq \{0,1,\ldots,\N\}$, and an arbitrary party in this set is denoted by $\ell$. We will typically consider subsets of parties in the full system, which could include the purifier, and denote them by $\uI,\uJ,\uK\subseteq\nsp$. Occasionally, we will also need to consider subsets restricted to \textit{not} include the purifier; these we will denote by $I,J,K\subseteq [\N]$. Conventionally, the word \textit{subsystem} will refer to a non-empty subset of parties, unless stated otherwise. 

For an arbitrary density matrix $\rho$, its \textit{entropy vector} $\vec{\ent}(\rho)$ is defined as the vector with components\footnote{\,To define this vector one also needs to make a choice for the order of these components. A conventional choice is lexicographic order, however the specific choice will not play any role for us.}
\begin{equation}
\label{eq:entropy-vector}
    \ent_J=\ent(\rho_J)\qquad \forall\,J\subseteq [\N]\quad \text{with}\quad J\neq\varnothing,
\end{equation}
where $\rho_J$ is the reduced density matrix corresponding to a collection of parties $J$, and $\ent$ is the von Neumann entropy. 
The space $\mathbb{R}_+^{\D}$, with $\D=2^{\N}-1$, where these vectors live is called \textit{entropy space}.
Any vector of this space will be called an ``entropy vector'', even when it does not necessarily correspond to the vector of entropies of a density matrix as in Eq.~\eqref{eq:entropy-vector}. When for a given vector $\vec{\ent}$ there exists a density matrix $\rho$ such that $\vec{\ent}=\vec{\ent}(\rho)$, we will say that $\vec{\ent}$ is \textit{realizable} by a quantum state. Occasionally we will be interested in restricted classes of states (for example stabilizer states, or geometric states in holography), and we will say that $\vec{\ent}$ is realizable in that class if there is a density matrix $\rho$ within that class such that $\vec{\ent}=\vec{\ent}(\rho)$. The topological closure of the set of entropy vectors that can be realized by arbitrary quantum states (for arbitrary Hilbert spaces) is a convex cone known as the $\N$-party \textit{quantum entropy cone} QEC$_{\N}$ \cite{1193790}.

Since the von Neumann entropy satisfies subadditivity, an obvious outer bound of QEC$_{\N}$ is the $\N$-party \textit{subadditivity cone} SAC$_{\N}$ \cite{Hernandez-Cuenca:2019jpv}, which is defined as the convex polyhedral cone specified by all instances of the subadditivity and Araki-Lieb inequalities \cite{Wilde_2013}.\footnote{\,Of course one can get a more stringent bound by also considering SSA, however this inequality will not play any particular role here. One reason will become clear momentarily, and we refer the reader to \cite{Hernandez-Cuenca:2019jpv,He:2022bmi} for more details. In the holographic context, another simple reason is that SSA is automatically implied by \textit{monogamy of mutual information} (MMI) \cite{Hayden:2011ag,Bao:2015bfa}.} It is well known that these two classes of inequalities are equivalent, since one can be derived from the other by introducing a purification of a given density matrix. In what follows we will not distinguish between them, and refer to all the inequalities that specify the SAC$_{\N}$ simply as instances of subadditivity. 

Each inequality will be written as $\mi(\uJ:\uK)\geq 0$, where $\mi(\uJ:\uK)$ is the corresponding instance of the \textit{mutual information} (MI). The set of inequalities that specify the SAC$_{\N}$ can then be written as 
\begin{equation}
    \mgs_{\N}=\left\{\mi(\uJ:\uK),\ \ \ \forall\,\uJ,\uK\;\; \text{with}\;\; \uJ\cap\uK=\varnothing,\; \ \uJ,\uK\neq\varnothing\right\}.
\end{equation}
Notice that written in this form, $\mi(\uJ:\uK)\geq 0$ is not an inequality in entropy space (which is defined by the non-empty subsystems $J\subseteq [\N]$), but in a larger space where we also have additional variables for subsystems that include the purifier. However, we will always implicitly assume that in any such inequality, whenever $\uJ$ or $\uK$ includes the purifier, we replace its entropy with the entropy of the complementary subsystem, i.e.\ $\ent_{\uJ} \to \ent_{\comp{\uJ}}$ where $\comp{\uJ}\coloneqq \nsp\setminus\uJ$ (and similarly for $\uK$). In the particular case of $\uK=\comp{\uJ}$, subadditivity reduces to non-negativity of the corresponding entropy $\ent_{\uJ}\geq 0$, since $\ent_{\nsp}=\ent_{\varnothing}=0$.

Any face $\face$ of the SAC$_{\N}$ can be described by the subset $\pmi\subseteq\mgs_{\N}$ of instances of the mutual information that vanish for any entropy vector $\vec{\ent}\in\text{int}(\face)$. We call this set $\pmi$ a \textit{pattern of marginal independence} (PMI) and denote by $\face_{\pmi}$ (or $\face(\pmi)$) the corresponding face, and by $\mathbb{S}_{\pmi}$ (or $\mathbb{S}(\pmi)$) the linear subspace spanned by the vectors in $\face_{\pmi}$. This linear subspace can then be viewed as the space of solutions to the following system of equations
\begin{align}
    & \mi(\uJ:\uK)=0  \qquad\quad\;\;\, \forall\, \mi(\uJ:\uK)\in\pmi, \label{eq:pmi-equations}\\
        & \ent_{\uJ}=\ent_{\comp{\uJ}} \qquad\qquad\quad\; \forall\,\uJ\subseteq \nsp. \label{eq:purity}
\end{align}
Given an arbitrary entropy vector $\vec{\ent}$ in the SAC$_{\N}$, its PMI $\pmi(\vec{\ent})$ is the set of all the instances of mutual information that vanish for $\vec{\ent}$. Geometrically $\pmi(\vec{\ent})$ is associated to the lowest dimensional face $\face_{\pmi}$ of the SAC$_{\N}$ that contains $\vec{\ent}$.

A PMI $\pmi$ will be said to be \textit{realizable} if there is a realizable entropy vector $\vec{\ent}$ such that $\vec{\ent}\in\text{int}(\face_{\pmi})$. The question of which PMIs are realizable is the \textit{quantum marginal independence problem} first introduced in \cite{Hernandez-Cuenca:2019jpv}. Similarly, $\pmi$ is said to be SSA-\textit{compatible} if there is an entropy vector $\vec{\ent}\in\text{int}(\face_{\pmi})$ that satisfies all instances of SSA.\footnote{Although this condition does not guarantee that SSA holds for \emph{all} entropy vectors $\vec{\ent}\in\text{int}(\face_{\pmi})$ (a simple example being the full-dimensional face $\text{int}(\face_{\varnothing})$), it is both more analogous to the notion of realizability and more convenient to work with.}
Clearly, since SSA is satisfied by all quantum states, a PMI is realizable only if it is SSA-compatible. 

For this reason we will mainly be interested in SSA-compatible PMIs, but it will be convenient to replace SSA with a slightly weaker condition and consider a correspondingly larger set of PMIs. This condition is not an inequality in entropy space, but rather a combinatorial constraint on the form of the set $\pmi$, and it has the advantage that it can be phrased entirely in terms of SA. It follows from the fact that in quantum mechanics the mutual information between two subsystems vanishes if and only if their joint density matrix factorizes, and in particular from the chain of implications
\begin{equation}
\label{eq:kc}
    \mi(\uI\uJ:\uK)=0\quad \implies\quad \rho_{\uI\uJ\uK}=\rho_{\uI\uJ}\otimes\rho_{\uK} \implies\quad \mi(\uI:\uK)=\mi(\uJ:\uK)=0.
\end{equation}
We can then focus only on PMIs $\pmi$ such that for any instance of the mutual information contained in $\pmi$, all other instances given by Eq.~\eqref{eq:kc} also belong to $\pmi$. We call this condition on a PMI the \textit{Klein's condition} (KC), and a PMI that satisfies this condition a KC-PMI. It is immediate to verify that any SSA-compatible PMI satisfies this condition.

\begin{figure}[tb]    
    \centering
        \begin{tikzpicture}
    \node (0a123) at (2.2,3) {{\scriptsize $\mi(0\!:\!123)$}};
     \node (1a023) at (3.7,3) {{\scriptsize $\mi(1\!:\!023)$}};
     \node (2a013) at (5.2,3) {{\scriptsize $\mi(2\!:\!013)$}};
     \node (3a012) at (6.7,3) {{\scriptsize $\mi(3\!:\!012)$}};
     \node (01a23) at (8.2,3) {{\scriptsize $\mi(01\!:\!23)$}};
     \node (02a13) at (9.7,3) {{\scriptsize $\mi(02\!:\!13)$}};
     \node (03a12) at (11.2,3) {{\scriptsize $\mi(03\!:\!12)$}};
     
    \node (01a2) at (0.1,1) {{\scriptsize $\mi(01\!:\!2)$}};
    \node (02a1) at (1.2,1) {{\scriptsize $\mi(02\!:\!1)$}};
    \node (12a0) at (2.3,1) {{\scriptsize $\mi(12\!:\!0)$}};
    \node (01a3) at (3.7,1) {{\scriptsize $\mi(01\!:\!3)$}};
    \node (03a1) at (4.8,1) {{\scriptsize $\mi(03\!:\!1)$}};
    \node (13a0) at (5.9,1) {{\scriptsize $\mi(13\!:\!0)$}};
    \node (02a3) at (7.3,1) {{\scriptsize $\mi(02\!:\!3)$}};
    \node (03a2) at (8.4,1) {{\scriptsize $\mi(03\!:\!2)$}};
    \node (23a0) at (9.5,1) {{\scriptsize $\mi(23\!:\!0)$}};
    \node (12a3) at (10.9,1) {{\scriptsize $\mi(12\!:\!3)$}};
    \node (13a2) at (12,1) {{\scriptsize $\mi(13\!:\!2)$}};
    \node (23a1) at (13.1,1) {{\scriptsize $\mi(23\!:\!1)$}};

    \node (0a1) at (1.8,-1) {{\scriptsize $\mi(0\!:\!1)$}};
     \node (0a2) at (3.8,-1) {{\scriptsize $\mi(0\!:\!2)$}};
     \node (1a2) at (5.8,-1) {{\scriptsize $\mi(1\!:\!2)$}};
     \node (0a3) at (7.8,-1) {{\scriptsize $\mi(0\!:\!3)$}};
     \node (1a3) at (9.8,-1) {{\scriptsize $\mi(1\!:\!3)$}};
     \node (2a3) at (11.8,-1) {{\scriptsize $\mi(2\!:\!3)$}};

    \draw[-,gray,very thin] (0a123.south) -- (12a0.north);
    \draw[-,gray,very thin] (0a123.south) -- (13a0.north);
    \draw[-,gray,very thin] (0a123.south) -- (23a0.north);
    \draw[-,gray,very thin] (1a023.south) -- (02a1.north);
    \draw[-,gray,very thin] (1a023.south) -- (03a1.north);
    \draw[-,gray,very thin] (1a023.south) -- (23a1.north);
    \draw[-,gray,very thin] (2a013.south) -- (01a2.north);
    \draw[-,gray,very thin] (2a013.south) -- (03a2.north);
    \draw[-,gray,very thin] (2a013.south) -- (13a2.north);
    \draw[-,gray,very thin] (3a012.south) -- (01a3.north);
    \draw[-,gray,very thin] (3a012.south) -- (02a3.north);
    \draw[-,gray,very thin] (3a012.south) -- (12a3.north);
    \draw[-,gray,very thin] (01a23.south) -- (01a2.north);
    \draw[-,gray,very thin] (01a23.south) -- (01a3.north);
    \draw[-,gray,very thin] (01a23.south) -- (23a0.north);
    \draw[-,gray,very thin] (01a23.south) -- (23a1.north);
    \draw[-,gray,very thin] (02a13.south) -- (02a1.north);
    \draw[-,gray,very thin] (02a13.south) -- (13a0.north);
    \draw[-,gray,very thin] (02a13.south) -- (02a3.north);
    \draw[-,gray,very thin] (02a13.south) -- (13a2.north);
    \draw[-,gray,very thin] (03a12.south) -- (12a0.north);
    \draw[-,gray,very thin] (03a12.south) -- (03a1.north);
    \draw[-,gray,very thin] (03a12.south) -- (03a2.north);
    \draw[-,gray,very thin] (03a12.south) -- (12a3.north);

    \draw[-,gray,very thin] (01a2.south) -- (0a2.north);
    \draw[-,gray,very thin] (01a2.south) -- (1a2.north);
    \draw[-,gray,very thin] (02a1.south) -- (0a1.north);
    \draw[-,gray,very thin] (02a1.south) -- (1a2.north);
    \draw[-,gray,very thin] (12a0.south) -- (0a1.north);
    \draw[-,gray,very thin] (12a0.south) -- (0a2.north);
    \draw[-,gray,very thin] (01a3.south) -- (0a3.north);
    \draw[-,gray,very thin] (01a3.south) -- (1a3.north);
    \draw[-,gray,very thin] (03a1.south) -- (0a1.north);
    \draw[-,gray,very thin] (03a1.south) -- (1a3.north);
    \draw[-,gray,very thin] (13a0.south) -- (0a1.north);
    \draw[-,gray,very thin] (13a0.south) -- (0a3.north);
    \draw[-,gray,very thin] (02a3.south) -- (0a3.north);
    \draw[-,gray,very thin] (02a3.south) -- (2a3.north);
    \draw[-,gray,very thin] (03a2.south) -- (0a2.north);
    \draw[-,gray,very thin] (03a2.south) -- (2a3.north);
    \draw[-,gray,very thin] (23a0.south) -- (0a2.north);
    \draw[-,gray,very thin] (23a0.south) -- (0a3.north);
    \draw[-,gray,very thin] (12a3.south) -- (1a3.north);
    \draw[-,gray,very thin] (12a3.south) -- (2a3.north);
    \draw[-,gray,very thin] (13a2.south) -- (1a2.north);
    \draw[-,gray,very thin] (13a2.south) -- (2a3.north);
    \draw[-,gray,very thin] (23a1.south) -- (1a2.north);
    \draw[-,gray,very thin] (23a1.south) -- (1a3.north);
     
    \end{tikzpicture}
    \caption{The Hasse diagram of the $\N=3$ MI-poset.
    }
    \label{fig:N3-MI-poset}
\end{figure}
 
Klein's condition, and the structure of KC-PMIs, can be described more conveniently by introducing a partial order in $\mgs_{\N}$ \cite{He:2022bmi}
\begin{equation}
\label{eq:mi_order}
    \mi(\uJ:\uK)\preceq\mi(\uJ':\uK') \quad \iff \quad \uJ\subseteq\uJ'\;\text{and}\;\uK\subseteq\uK'\;\;\ \text{or}\;\;\uJ\subseteq\uK'\;\text{and}\;\uK\subseteq\uJ'.
\end{equation}
We call the resulting poset $(\mgs_{\N},\preceq)$ the $\N$-party \textit{mutual information poset}, or simply the MI-poset (as an example, the $\N=3$ MI-poset is shown in Fig.~\ref{fig:N3-MI-poset}).\footnote{\,To simplify the notation we will denote by $\mgs_{\N}$ (or simply $\mgs$) both the set of all MI instances for $\N$ parties, and the MI-poset.} It is then easy to see that a PMI $\pmi$ is a KC-PMI if and only if $\pmi$ is a \textit{down-set} in the MI-poset.\footnote{\,A down-set in a poset $\p$ is a subset $\x\subset\p$ such that for every $x\in\x$ and $y\prec x$, $y\in\x$. An \textit{up-set} is defined analogously.} However, it is important to keep in mind that an arbitrary down-set $\ds$ in the MI-poset is not necessarily a PMI. The reason is that $\ds$ is not necessarily ``closed'' under linear dependence, and even if it is, it does not necessarily respect all instances of subadditivity.\footnote{\,More precisely, as discussed in \cite{He:2022bmi}, given $\ds$, there might be some MI instance $\mi\notin\ds$ which is a linear combination of the instances in $\ds$. Furthermore, when this is not the case, one can associate to $\ds$ a linear subspace of entropy space, but this is not necessarily spanned by a face of the SAC.} We will present an explicit example in \S\ref{subsec:beta-sets}.

\subsection{Holographic graph models}
\label{subsec:graph_review}

As mentioned in the introduction, the HEC can be conveniently formulated entirely in terms of the so-called holographic graph models, which we will use in this work; we refer the reader to \cite{Bao:2015bfa} for a detailed explanation of why this is the case, and the mapping between these models and the holographic set-ups.

Following \cite{Bao:2015bfa}, let $\gf{G}=(V,E,W)$ be a weighted graph\footnote{\,We assume the graph is free of loops and multiple edges between any pair of vertices.} with vertex set $V$, edge set $E$, and edge weights $W$, which we assume to be non-negative. We choose an arbitrary subset $\partial V\subset V$ of vertices that we call \textit{boundary vertices},\footnote{\,This terminology is motivated by the holographic set-up, where for a given geometric state of the CFT, the graph is constructed from the partition of a time slice by RT surfaces \cite{Bao:2015bfa} and each vertex in $\partial V$ corresponds to a boundary region.} and we will occasionally refer to the remainig vertices (in $V\setminus\partial V$) as ``bulk vertices''. For an arbitrary number of parties $\N$, with $\N+1\leq |\partial V|$, we then assign a party to each boundary vertex in such a way that each party is assigned to at least one vertex. The weighted graph $\gf{G}$, with boundary vertices $\partial V$ labeled by the $\N+1$ parties, defines an \textit{$\N$-party holographic graph model}. 

To any $\N$-party holographic graph model one can associate an $\N$-party entropy vector by the so-called min-cut prescription: A \textit{vertex cut} is an arbitrary subset $C\subseteq V$, and we say that $C$ is ``homologous'' to a subsystem $\uJ$ if $C\cap\partial V=\uJ$. We call a vertex cut homologous to $\uJ$ a $\uJ$-cut and we denote it by $C_{\uJ}$. Given a cut $C$, we further say that an edge of $\gf{G}$ is cut by $C$ if it connects a vertex in $C$ to one in $\comp{C}$, where $\comp{C}$ is the complement of $C$ in $V$. Since the graph is weighted, any vertex cut $C$ has a natural \textit{cost} $\norm{C}$ given by the sum of the weights of all cut edges. To any subsystem $\uJ$ we then associate an ``entropy'' given by the minimal cost among all $\uJ$-cuts, i.e.,
\begin{equation}
    \ent_{\uJ}=\text{min}\left\{\norm{C_{\uJ}},\, \text{for all}\; C_{\uJ} \right\}.
\end{equation}

Clearly, for any given $\gf{G}$, this prescription gives a vector $\Vec{\ent}(\gf{G})\in\mathbb{R}^{\D}_{+}$, and the set of all such vectors, for all possible graph models at fixed $\N$, is the $\N$-party HEC. The fact that such vectors are indeed vectors of entropies (at least approximately) of quantum states is mathematically subtle, albeit naively obvious from a physical standpoint. This result was proven in \cite{hayden2016holographic}, which also showed that entropy vectors obtained from graph models can be realized by stabilizer states. Importantly, however, not all entropy vectors that can be realized by quantum states (or even stabilizer states) are obtainable from graph models \cite{Hayden:2011ag,Bao:2015bfa},\footnote{\,For the reader who is approaching this subject from the perspective of quantum information theory, this was in fact the reason why graph models were introduced by \cite{Bao:2015bfa}, namely to study entropy inequalities satisfied by geometric states in holography which instead can be violated by more general quantum states.} and when this is possible we will say that the entropy vector is realizable \textit{by a graph model} (or equivalently, \textit{holographically realizable}). Similarly, given a KC-PMI $\pmi$, we will say that it is realizable by a graph model if there is a graph model $\gf{G}$ such that $\vec{\ent}(\gf{G})\in\text{int}(\face_{\pmi})$. 

Given an arbitrary graph model $\gf{G}$ and an arbitrary subsystem $\uJ$, there might be more than one $\uJ$-cut that minimizes the cost, and any such cut is called a \textit{min-cut}. We call a min-cut for $\uJ$ a \textit{minimal min-cut} if it is minimal, with respect to set inclusion, among all min-cuts for $\uJ$, and we denote a minimal min-cut for $\uJ$ by $\mmC_{\uJ}$. 
The following properties of minimal min-cuts were established in \cite{Avis:2021xnz}.

\begin{thm}
\label{thm:minimal-min-cuts}
        For any holographic graph model $\gf{G}$ and subsystems $\uJ,\uK$\,:
    \begin{enumerate}[label={\emph{\footnotesize \roman*)}}]
        \item for any $\uJ$, there is a unique minimal min-cut $\mmC_{\uJ}$,
        \item   $\mmC_{\uJ}\subseteq \mmC_{\uK}$ if and only if $\uJ\subseteq\uK$\,, 
        \item $\mmC_{\uJ}\cap \mmC_{\uK}=\varnothing$ if and only if $\uJ\cap\uK=\varnothing$.
    \end{enumerate}
\end{thm}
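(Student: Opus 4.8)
The plan is to reduce everything to two structural properties of the cut cost function $\norm{\cdot}$ on subsets of $V$: \emph{submodularity}, $\norm{A}+\norm{B}\geq \norm{A\cup B}+\norm{A\cap B}$, and \emph{symmetry under complementation}, $\norm{C}=\norm{\comp{C}}$. Both follow edge by edge, since an edge contributes its weight to $\norm{C}$ exactly when one endpoint lies in $C$ and the other in $\comp{C}$: this single-edge indicator depends only on $C$ restricted to the two endpoints, is manifestly complementation-symmetric, and is submodular (the local check is $1+1\geq 0+0$); summing over $E$ with non-negative weights preserves both properties. I would also record the immediate consequence, \emph{posimodularity}: applying submodularity to $A$ and $\comp{B}$ and using symmetry twice on $\norm{\comp{B}}=\norm{B}$ and $\norm{A\cup\comp{B}}=\norm{B\setminus A}$ yields $\norm{A}+\norm{B}\geq\norm{A\setminus B}+\norm{B\setminus A}$. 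The recurring mechanism is \emph{uncrossing}: whenever a Boolean combination of min-cuts is still homologous to the correct subsystem, these inequalities force it to remain a min-cut, after which comparison with the minimal min-cut pins down the desired inclusion or disjointness.

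For (i), if $C_1,C_2$ are min-cuts for $\uJ$ then $C_1\cap C_2$ and $C_1\cup C_2$ are again $\uJ$-cuts, because intersecting with $\partial V$ distributes over $\cap$ and $\cup$ and both return $\uJ\cap\uJ=\uJ\cup\uJ=\uJ$. Submodularity together with $\norm{C_1}=\norm{C_2}=\ent_{\uJ}$ and the lower bounds $\norm{C_1\cap C_2},\norm{C_1\cup C_2}\geq\ent_{\uJ}$ forces all four terms to equal $\ent_{\uJ}$, so the family of min-cuts for $\uJ$ is closed under intersection. Since $V$ is finite there are finitely many min-cuts, so the intersection of all of them is itself a min-cut and is contained in every min-cut; this is the unique \emph{minimum} (hence the unique minimal) min-cut $\mmC_{\uJ}$.

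For (ii) and (iii) the ``only if'' directions are immediate from $\mmC_{\uJ}\cap\partial V=\uJ$: intersecting $\mmC_{\uJ}\subseteq\mmC_{\uK}$ with $\partial V$ gives $\uJ\subseteq\uK$, and intersecting $\mmC_{\uJ}\cap\mmC_{\uK}=\varnothing$ with $\partial V$ gives $\uJ\cap\uK=\varnothing$. For the ``if'' direction of (ii) I assume $\uJ\subseteq\uK$; then $\mmC_{\uJ}\cap\mmC_{\uK}$ is a $\uJ$-cut and $\mmC_{\uJ}\cup\mmC_{\uK}$ is a $\uK$-cut, so submodularity together with the min-cut bounds pins $\norm{\mmC_{\uJ}\cap\mmC_{\uK}}$ to $\ent_{\uJ}$, making it a min-cut for $\uJ$; minimality of $\mmC_{\uJ}$ then gives $\mmC_{\uJ}\subseteq\mmC_{\uJ}\cap\mmC_{\uK}\subseteq\mmC_{\uK}$. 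For the ``if'' direction of (iii) I assume $\uJ\cap\uK=\varnothing$; now $\mmC_{\uJ}\setminus\mmC_{\uK}$ is a $\uJ$-cut and $\mmC_{\uK}\setminus\mmC_{\uJ}$ is a $\uK$-cut, and \emph{posimodularity} together with the min-cut bounds pins $\norm{\mmC_{\uJ}\setminus\mmC_{\uK}}$ to $\ent_{\uJ}$; minimality of $\mmC_{\uJ}$ forces $\mmC_{\uJ}\subseteq\mmC_{\uJ}\setminus\mmC_{\uK}$, i.e.\ $\mmC_{\uJ}\cap\mmC_{\uK}=\varnothing$.

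The main obstacle is not any individual computation but selecting the correct uncrossing in each case: (ii) requires the \emph{submodular} pairing (union and intersection), whereas (iii) genuinely needs the \emph{posimodular} pairing (set differences). The naive union/intersection bound fails for (iii) because it would spuriously demand $\ent_{\uJ\cup\uK}=\ent_{\uJ}+\ent_{\uK}$, which holds only when the mutual information vanishes. Thus the crux is to derive posimodularity cleanly from symmetry plus submodularity and to recognize that it—rather than submodularity—is what controls disjointness; once that pairing is in hand, the remaining steps are routine comparisons against the minimum min-cut furnished by (i).
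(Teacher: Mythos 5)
Your argument is correct. The paper itself does not prove this theorem: it simply cites (a)--(c) of Theorem~1 of \cite{Avis:2021xnz} and adds two remarks (that including the purifier among the parties changes nothing, and that restricting to complete graphs is no loss since weights may vanish). You instead supply a complete, self-contained proof, and it is the standard uncrossing argument that underlies the cited result: submodularity and complementation symmetry of the cut cost are verified edge by edge, posimodularity is derived from them, and each part follows by noting that the relevant Boolean combination of min-cuts is still homologous to the right subsystem and then squeezing its cost between the two min-cut bounds. All the homology checks are right ($\cap\,\partial V$ distributes over $\cup$, $\cap$, $\setminus$, and $\uJ\subseteq\uK$ resp.\ $\uJ\cap\uK=\varnothing$ give exactly the memberships you need), closure of the min-cut family under intersection plus finiteness does yield a unique minimum (hence unique minimal) min-cut, and your observation that (iii) requires the posimodular pairing rather than the submodular one --- since the latter would wrongly force $\mi(\uJ:\uK)=0$ --- is exactly the right diagnosis. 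What your version buys is independence from the reference and an explicit identification of which inequality controls nesting versus disjointness; what the paper's citation buys is brevity and the two contextual remarks (purifier, non-complete graphs), which your proof handles automatically since it never uses completeness of the graph or excludes the purifier.
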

\begin{proof}
    (i), (ii) and (iii) above correspond respectively to (a), (b) and (c) of Theorem 1 in \cite{Avis:2021xnz}. Unlike \cite{Avis:2021xnz}, here we are also considering subsystems which can include the purifier, but it is obvious that these have the same properties of subsystems which do not. Finally, this result was stated in \cite{Avis:2021xnz} for graph models obtained from complete graphs, but since the weights are allowed to vanish (in which case one can simply delete the corresponding edges), it holds for any graph model.
\end{proof}

 Notice that the unique minimal min-cut $\mmC_{\uJ}$ for a subsystem $\uJ$ is strictly contained in all min-cuts for $\uJ$. We will now use minimal min-cuts and \Cref{thm:minimal-min-cuts} to prove a few results\footnote{\,Some of these results are adaptations to minimal min-cuts of results already proven in \cite{Hernandez-Cuenca:2022pst}.} about connectivity and correlation which will generalize to arbitrary KC-PMIs in \S\ref{subsec:hypergraph-rep}.\footnote{\,For earlier related results in holography see for example \cite{Headrick:2013zda}, and \cite{Hernandez-Cuenca:2022pst} for graph models.} Given an arbitrary graph model $\gf{G}$ and an arbitrary subsystem $\uJ$, we will denote by $\gf{G}_{\uJ}$ the subgraph of $\gf{G}$ induced\footnote{\,Given an arbitrary graph $\gf{G}=(V,E)$ and a subset $W\subseteq V$, the subgraph of $\gf{G}$ \textit{induced} by $W$ is the graph whose vertex set is $W$, and whose edge set is the subset of $E$ of edges connecting pairs of vertices which are both in $W$.} by the minimal min-cut $\mmC_{\uJ}$ for $\uJ$.\footnote{\,In the holographic set-up, $\gf{G}_{\uJ}$ corresponds to the bulk homology region between the spacetime boundary region $\uJ$, and the RT surface that computes the entropy of $\uJ$ and lies closest to the region $\uJ$ on the boundary.  The following theorem is the graph analog of the relation between disconnected entanglement wedges and vanishing mutual information.} 

\begin{thm}
\label{thm:subgraph-connectivity}
    For any holographic graph model $\gf{G}$, subsystem $\uI$, and instance of the mutual information $\mi(\uJ:\uK)$, with $\uI=\uJ\cup\uK$,
    \begin{equation}
    \label{eq:disconnected_ew} 
    \mi(\uJ:\uK)=0 \quad \iff \quad  \gf{G}_{\uJ\cup\uK}=\gf{G}_{\uJ} \oplus \gf{G}_{\uK},
    \end{equation}
    where $\oplus$ denotes the disjoint union of graphs (or graph sum).
\end{thm}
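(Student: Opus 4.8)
The plan is to work directly with the two minimal min-cuts and reduce everything to a short chain of inequalities. Write $A=\mmC_{\uJ}$ and $B=\mmC_{\uK}$; by part (iii) of \Cref{thm:minimal-min-cuts} these are disjoint because $\uJ\cap\uK=\varnothing$, and by the min-cut prescription $\ent_{\uJ}=\norm A$ and $\ent_{\uK}=\norm B$. The first step is purely combinatorial bookkeeping: partition $V=A\sqcup B\sqcup R$ with $R\coloneqq V\setminus(A\cup B)$, let $w(A,B)$ denote the total weight of edges running between $A$ and $B$, and observe that
\begin{equation}
\norm A+\norm B-\norm{A\cup B}=2\,w(A,B),
\end{equation}
since the $A$–$B$ edges are counted once in each of $\norm A$ and $\norm B$ but not at all in $\norm{A\cup B}$, while the $A$–$R$ and $B$–$R$ contributions cancel.

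Next I would note that $A\cup B$ is a legitimate $(\uJ\cup\uK)$-cut (its intersection with $\partial V$ is exactly $\uJ\cup\uK$), so $\ent_{\uJ\cup\uK}\le\norm{A\cup B}$, and assemble the squeeze
\begin{equation}
\mi(\uJ:\uK)=\norm A+\norm B-\ent_{\uJ\cup\uK}\ \ge\ \norm A+\norm B-\norm{A\cup B}=2\,w(A,B)\ \ge\ 0 .
\end{equation}
From here the easy ($\Leftarrow$) direction reads off immediately: if $\gf{G}_{\uJ\cup\uK}=\gf{G}_{\uJ}\oplus\gf{G}_{\uK}$ then $\mmC_{\uJ\cup\uK}=A\cup B$ and there are no edges between $A$ and $B$, so $w(A,B)=0$ and $\ent_{\uJ\cup\uK}=\norm{A\cup B}=\norm A+\norm B$, giving $\mi(\uJ:\uK)=0$.

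For the ($\Rightarrow$) direction, $\mi(\uJ:\uK)=0$ forces both inequalities in the squeeze to saturate, yielding $w(A,B)=0$ together with $\ent_{\uJ\cup\uK}=\norm{A\cup B}$, i.e.\ $A\cup B$ is a genuine min-cut for $\uJ\cup\uK$. The remaining and most delicate step is to promote this cost equality to the vertex-set identity $\mmC_{\uJ\cup\uK}=A\cup B$: part (ii) of \Cref{thm:minimal-min-cuts} gives the inclusion $A\cup B\subseteq\mmC_{\uJ\cup\uK}$, and for the reverse inclusion I would invoke that $\mmC_{\uJ\cup\uK}$ is contained in every min-cut for $\uJ\cup\uK$ (the remark following \Cref{thm:minimal-min-cuts}, itself a consequence of the uniqueness in part (i), since a unique minimal element of the finite poset of min-cuts is automatically its least element). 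With $\mmC_{\uJ\cup\uK}=A\cup B$ and $w(A,B)=0$, the subgraph induced by $\mmC_{\uJ\cup\uK}$ has no edge joining $A$ to $B$ and therefore splits as $\gf{G}_{\uJ}\oplus\gf{G}_{\uK}$. I expect this last inference—upgrading the equality of \emph{costs} $\ent_{\uJ\cup\uK}=\norm{A\cup B}$ to the equality of \emph{cuts as vertex sets} $\mmC_{\uJ\cup\uK}=A\cup B$—to be the only genuine obstacle, as it is the single point where the structural properties of minimal min-cuts, rather than mere cut-weight arithmetic, are indispensable.
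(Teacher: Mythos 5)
Your proof is correct and follows essentially the same route as the paper's: both directions hinge on comparing the cost of the candidate cut $\mmC_{\uJ}\cup\mmC_{\uK}$ with $\ent_{\uJ\cup\uK}$ and then using the structural properties of minimal min-cuts (\Cref{thm:minimal-min-cuts} and the fact that $\mmC_{\uJ\cup\uK}$ is contained in every min-cut) to upgrade the cost equality to the vertex-set identity. Your exact bookkeeping identity $\norm{A}+\norm{B}-\norm{A\cup B}=2\,w(A,B)$ packages both directions into a single saturation argument where the paper instead argues the forward direction by contradiction, but the content is the same.
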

\begin{proof}
   Given an arbitrary graph model $\gf{G}$, consider a subsystem $\uI$, an MI instance $\mi(\uJ:\uK)$, with $\uI=\uJ\cup\uK$, and suppose that $\gf{G}_{\uJ\cup\uK}=\gf{G}_{\uJ} \oplus \gf{G}_{\uK}$. In particular this implies that given any two vertices $v_1\in \mmC_\uJ$ and $v_2\in \mmC_\uK$, there is no edge in $\gf{G}$ connecting $v_1$ and $v_2$. Therefore $\lVert\mmC_{\uJ\cup\uK}\rVert=\lVert\mmC_{\uJ}\rVert+\lVert\mmC_{\uK}\rVert$ and $\mi(\uJ:\uK)=0$.

   Conversely, suppose that $\mi(\uJ:\uK)=0$, and therefore $\lVert\mmC_{\uJ\cup\uK}\rVert=\lVert\mmC_{\uJ}\rVert+\lVert\mmC_{\uK}\rVert$. Notice that since $\uJ\cap\uK=\varnothing$, \Cref{thm:minimal-min-cuts} implies that $\mmC_{\uJ}\cap\mmC_{\uK}=\varnothing$. Consider the cut $C_{\uJ\cup\uK}=\mmC_{\uJ}\cup\mmC_{\uK}$ and suppose it induces a connected subgraph. This means that there are vertices $v_1\in \mmC_\uJ$ and $v_2\in \mmC_\uK$ such that $e=v_1v_2$ is an edge of $\gf{G}$, implying that $\lVert C_{\uJ\cup\uK}\rVert\leq\lVert\mmC_{\uJ}\rVert+\lVert\mmC_{\uK}\rVert-w_e$, where $w_e$ is the weight of $e$ and the inequality comes from the fact that there might be other pairs of vertices with this property. Since $w_e>0$, it follows that $\lVert C_{\uJ\cup\uK}\rVert<\lVert\mmC_{\uJ\cup\uK}\rVert$, which is a contradiction, since $\mmC_{\uJ\cup\uK}$ is a min-cut. Therefore the subgraph induced by $C_{\uJ\cup\uK}$ is disconnected and $C_{\uJ\cup\uK}$ is a min-cut. But by definition of minimal min-cuts $\mmC_{\uJ\cup\uK}\subseteq C_{\uJ\cup\uK}$ and $\gf{G}_{\uJ\cup\uK}$ is also disconnected, implying that $\gf{G}_{\uJ\cup\uK}=\gf{G}_{\uJ} \oplus \gf{G}_{\uK}$. 
\end{proof}

Notice that the right hand side of Eq.~\eqref{eq:disconnected_ew} means that $\gf{G}_{\uJ\cup\uK}$ is disconnected, but does not say anything about the connectivity of $\gf{G}_{\uJ}$ and $\gf{G}_{\uK}$. Notice also that the simpler assumption
\begin{equation}
\label{eq:ew-union-intersection}
\mmC_{\uJ\cup\uK}=\mmC_{\uJ}\cup\mmC_{\uK} \quad \text{and} \quad \mmC_{\uJ}\cap\mmC_{\uK}=\varnothing
\end{equation}
does not in general imply that $\mi(\uJ:\uK)=0$, as exemplified in Fig.~\ref{subfig:ew_connect}, and that if $\mi(\uJ:\uK)=0$ then the decomposition of $\gf{G}_{\uJ\cup\uK}$ given in Eq.~\eqref{eq:disconnected_ew} does not hold for arbitrary (non-minimal) min-cuts, as exemplified in Fig.~\ref{subfig:non-min_min-cut}.

We will say that an arbitrary $\N$-party graph model $\gf{G}$ (not necessarily a tree) is \textit{simple} if it  has exactly $\N+1$ boundary vertices, i.e., if there is a bijection between parties and boundary vertices. A priori, this might already seem a restriction, in the sense that one might suspect that the set of all entropy vectors that can be realized by simple graph models is a subset of the set of those that can be realized by all possible graph models. This however is not the case, since given an arbitrary non-simple graph model $\gf{G}$, it is always possible to construct a new graph model $\gf{G}'$ which is simple and gives the same entropy vector, $\Vec{\ent}(\gf{G}')=\Vec{\ent}(\gf{G})$. The construction is straightforward and it simply amounts to identifying in $\gf{G}$ all boundary vertices corresponding to the same party $\ell$ \cite{Bao:2015bfa}. The following observation will be particularly convenient when working with simple graph models.

\begin{lemma}
\label{lem:simple-G-homology}
    For any simple graph model $\gf{G}$ and subsystem $\uI$, let 
    \begin{equation}
        \gf{G}_{\uI}=\bigoplus_{i=1}^m \; \gf{G}^{(i)}
    \end{equation}
    be the decomposition of $\gf{G}_\uI$ into its connected components. Then each connected component $\gf{G}^{(i)}$ is the subgraph of $\gf{G}$ induced by the minimal min-cut for some subsystem $\uJ_i\subseteq\uI$, and $\gf{G}_{\uI}$ takes the form
    \begin{equation}
    \label{eq:ew-decomposition}
        \gf{G}_{\uI}=\bigoplus_{i=1}^m \; \gf{G}_{\uJ_i},
    \end{equation}
    where the set $\{\uJ_i\}_{i\in[m]}$ is a partition of $\uI$.
\end{lemma}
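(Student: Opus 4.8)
The plan is to build the partition explicitly from the connected components and then identify each component with a minimal min-cut by a sandwiching argument that combines subadditivity with \Cref{thm:minimal-min-cuts}.

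First I would let $V^{(i)}$ denote the vertex set of the component $\gf{G}^{(i)}$, so that $\{V^{(i)}\}_{i\in[m]}$ partitions $\mmC_{\uI}$, and define $\uJ_i \coloneqq V^{(i)}\cap\partial V$. Because $\gf{G}$ is simple, the boundary vertices are in bijection with the parties, so each $\uJ_i$ is a genuine subsystem, and since $\mmC_{\uI}\cap\partial V=\uI$ the sets $\{\uJ_i\}$ cover $\uI$ with $\uJ_i\cap\uJ_j=\varnothing$. To upgrade this to a partition I must rule out components consisting only of bulk vertices: if some $V^{(i)}$ contained no boundary vertex, then $\mmC_{\uI}\setminus V^{(i)}$ would still be homologous to $\uI$, and since $V^{(i)}$ is a connected component of $\gf{G}_{\uI}$ there are no edges between $V^{(i)}$ and the rest of $\mmC_{\uI}$; hence deleting $V^{(i)}$ cannot increase the cost and strictly shrinks the cut, contradicting that $\mmC_{\uI}$ is the minimal min-cut. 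Thus every $\uJ_i\neq\varnothing$ and $\{\uJ_i\}_{i\in[m]}$ partitions $\uI$.

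Next I would extract a cost decomposition from the disjointness of the components. Since $\gf{G}_{\uI}=\bigoplus_i \gf{G}^{(i)}$ there are no edges between distinct $V^{(i)}$, so each $V^{(i)}$ is a $\uJ_i$-cut whose cut edges are exactly those running from $V^{(i)}$ to $\comp{\mmC_{\uI}}$; summing over $i$ gives $\sum_i \norm{V^{(i)}}=\norm{\mmC_{\uI}}=\ent_{\uI}$. Combining this with subadditivity (applied to the partition $\{\uJ_i\}$) and with the fact that $V^{(i)}$ is merely \emph{a} $\uJ_i$-cut yields the chain
\begin{equation}
\ent_{\uI}\leq \sum_{i=1}^m \ent_{\uJ_i}\leq \sum_{i=1}^m \norm{V^{(i)}}=\ent_{\uI},
\end{equation}
which forces equality termwise, so $\ent_{\uJ_i}=\norm{V^{(i)}}$ and each $V^{(i)}$ is a min-cut for $\uJ_i$.

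It remains to identify $V^{(i)}$ with the \emph{minimal} min-cut $\mmC_{\uJ_i}$, and this is the step I expect to be the crux. Because the minimal min-cut is contained in every min-cut, I already have $\mmC_{\uJ_i}\subseteq V^{(i)}$; the difficulty is the reverse inclusion. I would consider the union $C\coloneqq\bigcup_i \mmC_{\uJ_i}$, which is a $\uI$-cut with $\norm{C}\leq\sum_i\norm{\mmC_{\uJ_i}}=\sum_i\ent_{\uJ_i}=\ent_{\uI}$, hence a min-cut for $\uI$. Minimality then gives $\mmC_{\uI}\subseteq C\subseteq\bigcup_i V^{(i)}=\mmC_{\uI}$, so $C=\mmC_{\uI}$; together with $\mmC_{\uJ_i}\subseteq V^{(i)}$ and the disjointness of the $\mmC_{\uJ_i}$ guaranteed by \Cref{thm:minimal-min-cuts} (iii), this forces $\mmC_{\uJ_i}=V^{(i)}$ for each $i$. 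Consequently $\gf{G}^{(i)}=\gf{G}_{\uJ_i}$ and Eq.~\eqref{eq:ew-decomposition} follows. The delicate points throughout are the exclusion of purely bulk components and this final squeezing of $C$ between $\mmC_{\uI}$ and itself, both of which rely essentially on the property that a minimal min-cut is contained in \emph{all} min-cuts, rather than on mere minimality of the cost.
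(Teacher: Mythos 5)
Your proof is correct, and while it shares the paper's opening move (ruling out purely-bulk components and using simplicity to identify $V^{(i)}\cap\partial V$ with a genuine subsystem $\uJ_i$), it takes a genuinely different route for the crux, namely identifying each component with the minimal min-cut $\mmC_{\uJ_i}$. The paper argues locally and by contradiction: it shows $\mmC_{\uJ_i}$ cannot contain vertices of other components (deleting such bulk vertices would cheapen the cut), so $\mmC_{\uJ_i}\subseteq V^{(i)}$, and then rules out strict inclusion by a rather terse appeal to the minimality of $\mmC_{\uI}$. You instead run a global cost accounting: the absence of edges between components gives $\sum_i\norm{V^{(i)}}=\norm{\mmC_{\uI}}=\ent_{\uI}$, and subadditivity squeezes this into the termwise identities $\ent_{\uJ_i}=\norm{V^{(i)}}$, after which the union $\bigcup_i\mmC_{\uJ_i}$ is sandwiched between $\mmC_{\uI}$ and itself. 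Your version has the merit of making the entropy identity $\ent_{\uI}=\sum_i\ent_{\uJ_i}$ explicit (essentially anticipating \Cref{thm:subgraph-connectivity}) and of replacing the paper's final contradiction with a clean set-theoretic squeeze; the small price is that you must invoke subadditivity of the min-cut function, which the paper's argument avoids, though this is standard and follows in one line from the same union-of-cuts estimate you already use. Both proofs ultimately rest on the same two facts from \Cref{thm:minimal-min-cuts}: uniqueness of the minimal min-cut and its containment in every min-cut.
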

\begin{proof}
     Consider a simple graph model $\gf{G}$, and suppose that for a subsystem $\uI$ the subgraph $\gf{G}_\uI$ is disconnected, so that $m\geq 2$ (otherwise the statement is trivial). Let $\gf{G}^{(i)}$ be an arbitrary connected component of $\gf{G}_\uI$ and $V(\gf{G}^{(i)})$ its set of vertices. Notice that $V(\gf{G}^{(i)})\cap\partial V\neq\varnothing$, otherwise $\mmC_\uI\setminus V(\gf{G}^{(i)})$ is still an $\uI$-cut and $\lVert\mmC_\uI\setminus V(\gf{G}^{(i)})\rVert<\lVert\mmC_\uI\rVert$, contradicting the assumption that $\mmC_\uI$ is a min-cut. Furthermore, since $\gf{G}$ is simple, there is by definition a bijection between $\partial V$ and $\nsp$, implying that $V(\gf{G}^{(i)})\cap\partial V=\uJ$ for some non-empty subsystem $\uJ$, and therefore that $V(\gf{G}^{(i)})$ is a $\uJ$-cut that we denote by $C_{\uJ}^*$. 
    
    To prove the lemma, it remains to be shown that $C_{\uJ}^*$ is the minimal min-cut for $\uJ$, i.e., $C_{\uJ}^*=\mmC_\uJ$, or in other words that $\gf{G}^{(i)}=\gf{G}_\uJ$. Since we are assuming $m\geq 2$, and the above reasoning applies to each connected component $\gf{G}^{(i)}$, it follows that $\uJ\subset\uI$, and by \Cref{thm:minimal-min-cuts}, that $\mmC_\uJ\subset\mmC_\uI$. Notice that since $\gf{G}^{(i)}$ is a $\uJ$-cut, it has in common with $V(\gf{G}_\uJ)$ at least all the boundary vertices $\uJ$. Suppose now that $\gf{G}_\uJ$ has vertices in some other connected component $\gf{G}^{(j)}$ of $\gf{G}_\uI$, where $j\neq i$. Since these are not boundary vertices, we can delete them from $V(\gf{G}_\uJ)$ to obtain a new $\uJ$-cut with smaller cost, contradicting the fact that $\mmC_\uJ$ is a min-cut. This implies that $\mmC_\uJ\subseteq C_{\uJ}^*$. However if the inclusion is strict, there is a contradiction with the assumption that $\gf{G}^{(i)}$ was a connected component of the minimal min-cut for $\uI$, implying that $C_{\uJ}^*=\mmC_\uJ$ and $\gf{G}^{(i)}=\gf{G}_\uJ$. 
    
    Finally, notice that the fact that $\{\uJ_i\}_{i\in[m]}$ is a partition of $\uI$ follows immediately from the fact that $\{V(\gf{G}_{\uJ_i})\}_{i\in[m]}$ is a partition of $V(\gf{G}_\uI)$. 
\end{proof}

\begin{figure}[tbp]
    \centering
    \begin{subfigure}{0.45\textwidth}
    \centering
    \begin{tikzpicture}[scale=1]
    \draw (0,0) -- (-1,0);
    \draw (0,0) -- (-1,-1);
    \draw (0,0) -- (-1,1);
    \draw (0,0) -- (2,0);
    \draw (2,0) -- (3,0);
    \draw (2,0) -- (3,1);
    \draw (2,0) -- (3,-1);
    
    \filldraw (-1,-1) circle (2pt);
    \filldraw (-1,0) circle (2pt);
    \filldraw (-1,1) circle (2pt);
    \filldraw (3,-1) circle (2pt);
    \filldraw (3,0) circle (2pt);
    \filldraw (3,1) circle (2pt);
    
    \filldraw[fill=bulkcol] (0,0) circle (2pt);
    \filldraw[fill=bulkcol] (2,0) circle (2pt);
    
    \node[] () at (-1.3,1) {{\scriptsize $1$}};
    \node[] () at (-1.3,0) {{\scriptsize $2$}};
    \node[] () at (-1.3,-1) {{\scriptsize $3$}};
    \node[] () at (3.3,1) {{\scriptsize $4$}};
    \node[] () at (3.3,0) {{\scriptsize $5$}};
    \node[] () at (3.3,-1) {{\scriptsize $0$}};
    
    \node[red] () at (-0.5,0.8) {{\scriptsize $2$}};
    \node[red] () at (-0.5,0.2) {{\scriptsize $2$}};
    \node[red] () at (-0.5,-0.75) {{\scriptsize $1$}};
    \node[red] () at (1,0.2) {{\scriptsize $1$}};
    \node[red] () at (2.5,0.8) {{\scriptsize $2$}};
    \node[red] () at (2.5,0.2) {{\scriptsize $2$}};
    \node[red] () at (2.5,-0.75) {{\scriptsize $1$}};
    \end{tikzpicture}
    \subcaption[]{}
    \label{subfig:ew_connect}
    \end{subfigure}
    \hspace{0.2cm}
    \begin{subfigure}{0.45\textwidth}
    \centering
    \begin{tikzpicture}[scale=1]
    \draw (0,0) -- (-1,-1);
    \draw (0,0) -- (-1,1);
    \draw (0,0) -- (2,0);
    \draw (2,0) -- (3,1);
    \draw (2,0) -- (3,-1);
    
    \filldraw (-1,-1) circle (2pt);
    \filldraw (-1,1) circle (2pt);
    \filldraw (3,-1) circle (2pt);
    \filldraw (3,1) circle (2pt);
    
    \filldraw[fill=bulkcol] (0,0) circle (2pt);
    \filldraw[fill=bulkcol] (2,0) circle (2pt);
    
    \node[] () at (-1.3,1) {{\scriptsize $2$}};
    \node[] () at (-1.3,-1) {{\scriptsize $1$}};
    \node[] () at (3.3,1) {{\scriptsize $3$}};
    \node[] () at (3.3,-1) {{\scriptsize $0$}};
    
    \node[red] () at (-0.5,0.8) {{\scriptsize $1$}};
    \node[red] () at (-0.5,-0.75) {{\scriptsize $1$}};
    \node[red] () at (1,-0.25) {{\scriptsize $1$}};
    \node[red] () at (2.5,0.8) {{\scriptsize $1$}};
    \node[red] () at (2.5,-0.75) {{\scriptsize $1$}};

    \node[] () at (0.1,0.3) {{\scriptsize $\sigma_1$}};
    \node[] () at (2,0.3) {{\scriptsize $\sigma_2$}};
    \end{tikzpicture}
    \subcaption[]{}
    \label{subfig:non-min_min-cut}
    \end{subfigure}

    \par\bigskip\bigskip

    \begin{subfigure}{0.45\textwidth}
    \centering
    \begin{tikzpicture}[scale=1]
    \draw (0,0) -- (-1,-1);
    \draw (0,0) -- (-1,1);
    \draw (0,0) -- (2,0);
    \draw (2,0) -- (3,1);
    \draw (2,0) -- (3,-1);
    \draw (1,0) -- (1,1);
    
    \filldraw (-1,-1) circle (2pt);
    \filldraw (-1,1) circle (2pt);
    \filldraw (3,-1) circle (2pt);
    \filldraw (3,1) circle (2pt);
    \filldraw (1,1) circle (2pt);
    
    \filldraw[fill=bulkcol] (0,0) circle (2pt);
    \filldraw[fill=bulkcol] (2,0) circle (2pt);
    \filldraw[fill=bulkcol] (1,0) circle (2pt);
    
    \node[] () at (-1.3,1) {{\scriptsize $2$}};
    \node[] () at (-1.3,-1) {{\scriptsize $1$}};
    \node[] () at (1,1.3) {{\scriptsize $3$}};
    \node[] () at (3.3,1) {{\scriptsize $2$}};
    \node[] () at (3.3,-1) {{\scriptsize $0$}};
    
    \node[red] () at (-0.5,0.8) {{\scriptsize $1$}};
    \node[red] () at (-0.5,-0.75) {{\scriptsize $1$}};
    \node[red] () at (0.5,-0.25) {{\scriptsize $1$}};
    \node[red] () at (1.5,-0.25) {{\scriptsize $1$}};
    \node[red] () at (2.5,0.8) {{\scriptsize $1$}};
    \node[red] () at (2.5,-0.75) {{\scriptsize $1$}};
    \node[red] () at (1.2,0.5) {{\scriptsize $3$}};

    \node[] () at (1,-0.25) {{\scriptsize $\sigma$}};
    \end{tikzpicture}
    \subcaption[]{}
    \label{subfig:non-cimple-G}
    \end{subfigure}
    \hspace{0.2cm}
    \begin{subfigure}{0.45\textwidth}
    \centering
    \begin{tikzpicture}[scale=1]
    \draw (0,0) -- (-1,-1);
    \draw (0,0) -- (-1,1);
    \draw (0,0) -- (2,0);
    \draw (2,0) -- (3,1);
    \draw (2,0) -- (3,-1);
    \draw (1,0) -- (1,1);
    
    \filldraw (-1,-1) circle (2pt);
    \filldraw (-1,1) circle (2pt);
    \filldraw (3,-1) circle (2pt);
    \filldraw (3,1) circle (2pt);
    \filldraw (1,1) circle (2pt);
    
    \filldraw[fill=bulkcol] (0,0) circle (2pt);
    \filldraw[fill=bulkcol] (2,0) circle (2pt);
    \filldraw[fill=bulkcol] (1,0) circle (2pt);
    
    \node[] () at (-1.3,1) {{\scriptsize $2$}};
    \node[] () at (-1.3,-1) {{\scriptsize $1$}};
    \node[] () at (1,1.3) {{\scriptsize $3$}};
    \node[] () at (3.3,1) {{\scriptsize $4$}};
    \node[] () at (3.3,-1) {{\scriptsize $0$}};
    
    \node[red] () at (-0.5,0.8) {{\scriptsize $1$}};
    \node[red] () at (-0.5,-0.75) {{\scriptsize $1$}};
    \node[red] () at (0.5,-0.25) {{\scriptsize $1$}};
    \node[red] () at (1.5,-0.25) {{\scriptsize $1$}};
    \node[red] () at (2.5,0.8) {{\scriptsize $1$}};
    \node[red] () at (2.5,-0.75) {{\scriptsize $1$}};
    \node[red] () at (1.2,0.5) {{\scriptsize $3$}};

    \node[] () at (1,-0.25) {{\scriptsize $\sigma$}};
    \end{tikzpicture}
    \subcaption[]{}
    \label{subfig:ew-decomposition}
    \end{subfigure}
    \caption{Graph models exemplifying some of the results and observations presented in the main text. Boundary vertices are depicted in black and are labeled by the parties (black numbers), gray vertices are bulk vertices, and red numbers indicate the weights of the various edges. (a) $\gf{G}_{12}$ and $\gf{G}_{45}$ satisfy Eq.~\eqref{eq:ew-union-intersection} but $\gf{G}_{1245}$ is connected and $\mi(12:45)>0$. (b) $\gf{G}_{23}=\gf{G}_2\oplus\gf{G}_3$ but the (non-minimal) min-cut $C_{23}=\{2,\sigma_1,\sigma_2,3\}$ induces a connected subgraph of $\gf{G}$. (c) A non-simple graph model (two boundary vertices are labeled by party 2) where $\mmC_{120}=V\setminus\{3,\sigma\}$, $\gf{G}_{012}$ is disconnected, but neither of its connected components is induced by the minimal min-cut (or in fact by any $\uJ$-cut) for a subsystem $\uJ$. By identifying the two boundary vertices labeled by party 2, the graph becomes simple and $\gf{G}_{012}$ becomes connected, in agreement with \Cref{lem:simple-G-homology}. (d) A graph model with the same topology and weights as the example in (c) but with a different labeling of boundary vertices, so that it is simple. Similarly to (c), $\mmC_{1240}=V\setminus\{3,\sigma\}$ but now we have the decomposition $\gf{G}_{1240}=\gf{G}_{12}\oplus\gf{G}_{40}$. Furthermore, $\mi(12:40)=0$ while any other MI instance involving precisely the parties $1,2,4,0$ (eg., $\mi(14:20)$) is strictly positive.}
    \label{fig:connectivity-examples}
\end{figure}

Notice that in \Cref{lem:simple-G-homology} the assumption that $\gf{G}$ is simple is crucial. Indeed even though a decomposition of $\gf{X}$ into its connected components is obviously possible for any graph model (not necessarily simple), it is in general not the case that each such component is induced by a $\uJ$-cut for some subsystem $\uJ$ (cf., Fig.~\ref{subfig:non-cimple-G}). 

We now derive an immediate corollary of \Cref{thm:subgraph-connectivity}, once restricted to simple graph models, that together with \Cref{thm:subgraph-connectivity} itself will have direct generalizations to structural properties of arbitrary KC-PMIs (cf., \Cref{thm:connectivity_of_H}). 

\begin{cor}
\label{cor:ew-decomposition}
    For any simple graph model $\gf{G}$ and subsystem $\uI$, let
    \begin{equation}
    \label{eq:ew-decomposition-v2}
        \gf{G}_{\uI}=\bigoplus_{i=1}^m \; \gf{G}_{\uI_i}
    \end{equation}
    be the decomposition of $\gf{G}_{\uI}$ into the sum of its connected components as given by \Cref{lem:simple-G-homology}.
    Then an instance of the mutual information $\mi(\uJ:\uK)$, with $\uJ\cup\uK=\uI$, vanishes if and only if $m\geq 2$ and for each $i\in [m]$, either $\uI_i\cap\uJ=\varnothing$ (so that $\uI_i\subseteq\uK$) or vice versa (swapping $\uJ$ and $\uK$).
\end{cor}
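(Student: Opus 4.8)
The plan is to derive both directions from \Cref{thm:subgraph-connectivity} together with the connected-component decomposition supplied by \Cref{lem:simple-G-homology}, using subadditivity to close the converse. Throughout I fix $\uI=\uJ\cup\uK$ with $\uJ\cap\uK=\varnothing$ and $\uJ,\uK\neq\varnothing$, and I use the decomposition $\gf{G}_{\uI}=\bigoplus_{i=1}^m\gf{G}_{\uI_i}$ into connected components, where by \Cref{lem:simple-G-homology} each $\uI_i$ is a subsystem, $\{\uI_i\}_{i\in[m]}$ partitions $\uI$, and each component carries boundary vertices $V(\gf{G}_{\uI_i})\cap\partial V=\uI_i$.

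For the forward direction I would assume $\mi(\uJ:\uK)=0$. By \Cref{thm:subgraph-connectivity} this is equivalent to $\gf{G}_{\uI}=\gf{G}_{\uJ}\oplus\gf{G}_{\uK}$, so no edge of $\gf{G}$ joins $\mmC_{\uJ}$ to $\mmC_{\uK}$. Since $\uJ,\uK\neq\varnothing$, both summands are nonempty and hence $\gf{G}_{\uI}$ is disconnected, giving $m\geq 2$. As each $\gf{G}_{\uI_i}$ is connected and the vertex set $V(\gf{G}_{\uI})$ splits into $\mmC_{\uJ}\sqcup\mmC_{\uK}$ with no crossing edges, every component must lie entirely inside $\gf{G}_{\uJ}$ or entirely inside $\gf{G}_{\uK}$. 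Intersecting its vertex set with $\partial V$ and using $\mmC_{\uJ}\cap\partial V=\uJ$ and $\mmC_{\uK}\cap\partial V=\uK$, I conclude $\uI_i\subseteq\uJ$ or $\uI_i\subseteq\uK$, i.e.\ $\uI_i\cap\uK=\varnothing$ or $\uI_i\cap\uJ=\varnothing$, which is the asserted condition.

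For the converse I would assume $m\geq 2$ and that each $\uI_i$ is contained in $\uJ$ or in $\uK$, and group the components accordingly: $C_{\uJ}'=\bigcup_{\uI_i\subseteq\uJ}V(\gf{G}_{\uI_i})$ and $C_{\uK}'=\bigcup_{\uI_i\subseteq\uK}V(\gf{G}_{\uI_i})$. Because $\{\uI_i\}$ partitions $\uI=\uJ\cup\uK$, one has $\uJ=\bigsqcup_{\uI_i\subseteq\uJ}\uI_i$ and $\uK=\bigsqcup_{\uI_i\subseteq\uK}\uI_i$, so $C_{\uJ}'$ is a $\uJ$-cut and $C_{\uK}'$ a $\uK$-cut, with $\mmC_{\uI}=C_{\uJ}'\cup C_{\uK}'$. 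Since distinct connected components of $\gf{G}_{\uI}$ share no edges, no edge of $\gf{G}$ joins $C_{\uJ}'$ to $C_{\uK}'$, so the edges cut by $\mmC_{\uI}$ split cleanly and $\norm{\mmC_{\uI}}=\norm{C_{\uJ}'}+\norm{C_{\uK}'}$. The final step is a subadditivity squeeze: from $\norm{C_{\uJ}'}\geq\ent_{\uJ}$, $\norm{C_{\uK}'}\geq\ent_{\uK}$ and $\norm{\mmC_{\uI}}=\ent_{\uI}$ I get $\ent_{\uI}\geq\ent_{\uJ}+\ent_{\uK}$, while subadditivity gives the reverse inequality, forcing $\mi(\uJ:\uK)=\ent_{\uJ}+\ent_{\uK}-\ent_{\uI}=0$.

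I expect the main obstacle to be the converse bookkeeping rather than any deep point: one must check carefully that the grouped unions $C_{\uJ}',C_{\uK}'$ are genuine cuts homologous to $\uJ$ and $\uK$ (which relies on the partition property and the boundary-labeling from \Cref{lem:simple-G-homology}), and that the absence of cross-edges between components yields \emph{exact} cost additivity $\norm{\mmC_{\uI}}=\norm{C_{\uJ}'}+\norm{C_{\uK}'}$, since it is this additivity that makes the subadditivity squeeze collapse to an equality. The forward direction, by contrast, is essentially immediate once \Cref{thm:subgraph-connectivity} is invoked, because a connected subgraph cannot straddle a disjoint union.
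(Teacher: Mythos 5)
Your proof is correct and reaches the conclusion from the same two ingredients (\Cref{thm:subgraph-connectivity} and \Cref{lem:simple-G-homology}), but both halves are executed differently from the paper's. For the ``only if'' half the paper argues by contraposition: if some $\uI_i$ is split by $\{\uJ,\uK\}$, connectedness of $\gf{G}_{\uI_i}$ gives $\mi(\uI_i\cap\uJ:\uI_i\cap\uK)>0$ via \Cref{thm:subgraph-connectivity}, and strong subadditivity then lifts this to $\mi(\uJ:\uK)>0$, with the $m=1$ case treated separately. You instead argue directly: $\gf{G}_{\uI}=\gf{G}_{\uJ}\oplus\gf{G}_{\uK}$ forces each connected component into a single summand, which disposes of $m=1$ and the splitting case in one stroke and needs no appeal to SSA. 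For the ``if'' half the paper simply ``combines the terms'' to write $\gf{G}_{\uI}=\gf{G}_{\uJ}\oplus\gf{G}_{\uK}$ and re-invokes \Cref{thm:subgraph-connectivity}; this tacitly uses that the union of the components contained in $\uJ$ is exactly the subgraph induced by $\mmC_{\uJ}$, a fact whose cleanest justification is essentially your squeeze. Your version sidesteps that identification entirely: $C'_{\uJ}$ and $C'_{\uK}$ only need to be \emph{some} homologous cuts, and exact cost additivity $\norm{\mmC_{\uI}}=\norm{C'_{\uJ}}+\norm{C'_{\uK}}$ plus subadditivity closes the argument. The net effect is that your route is slightly longer but more self-contained (only SA is used, and no unproved claims are made about minimal min-cuts of the grouped components), while the paper's is shorter at the price of leaning on SSA and on that tacit identification.
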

\begin{proof}
    Consider a simple graph model $\gf{G}$ and subsystem $\uI$ such that $\gf{G}_\uI$ decomposes as in Eq.~\eqref{eq:ew-decomposition-v2}, and suppose that $m\geq 2$. Consider an MI instance $\mi(\uJ:\uK)$, with $\uJ\cup\uK=\uI$, such that for each $i\in [m]$, $\uI_i\cap\uJ=\varnothing$ and therefore $\uI_i\subseteq\uK$, or vice versa. We can combine the terms in Eq.~\eqref{eq:ew-decomposition-v2} and rewrite $\gf{G}_{\uI}$ as $\gf{G}_{\uI}=\gf{G}_{\uJ}\oplus\gf{G}_{\uK}$. \Cref{thm:subgraph-connectivity} then implies that $\mi(\uJ:\uK)=0$. 

    In the other direction, consider an MI instance $\mi(\uJ:\uK)$, again with $\uJ\cup\uK=\uI$, but now such that for some $\uI_i$ we have $\uI_i\cap\uJ\neq\varnothing$ and $\uI_i\cap\uK\neq\varnothing$. Since by assumption $\gf{G}_{\uI_i}$ is connected, \Cref{thm:subgraph-connectivity} implies that $\mi(\uI_i\cap\uJ:\uI_i\cap\uK)>0$, and by strong subadditivity  $\mi(\uJ:\uK)\geq\mi(\uI_i\cap\uJ:\uI_i\cap\uK)>0$.
    
    Finally if $m=1$, $\uI_1=\uI$ and since for any MI instance $\mi(\uJ:\uK)$ with $\uJ\cup\uK=\uI$ we have by definition $\uJ\neq\varnothing$ and $\uK\neq\varnothing$, it follows that we necessarily have $\uI\cap\uJ\neq\varnothing$ and $\uI\cap\uK\neq\varnothing$, implying that $\mi(\uJ:\uK)>0$ for all possible choices of $\uJ$ and $\uK$.
\end{proof}

An example of \Cref{cor:ew-decomposition} is shown in Fig.~\ref{subfig:ew-decomposition}. Notice that while $\mi(\uJ:\uK)=0$ implies that $\gf{G}_{\uJ\cup\uK}$ is disconnected (cf., \Cref{thm:subgraph-connectivity}), \Cref{cor:ew-decomposition} shows that even for simple graphs, $\mi(\uJ:\uK)>0$ in general does not imply that $\gf{G}_{\uJ\cup\uK}$ is connected, only that there is a path from at least one boundary vertex in $\uJ$ to at least one in $\uK$. 

In \S\ref{sec:necessary}, we will be interested in a particular restricted class of graph models called \textit{simple trees}, which were first introduced in \cite{Hernandez-Cuenca:2022pst} and are defined as simple graph models with tree topology. We will derive a necessary condition that allows one to test whether a given entropy vector can be realized by a simple tree or not. An immediate implication of this condition will be the rigorous proof of the intuitive fact that not all entropy vectors realizable by graph models can be realized by simple trees (for $\N > 2$). Using \Cref{cor:ew-decomposition}, we can, however, already present the intuition behind such a necessary condition.

For sufficiently large $\N$, consider an entropy vector $\vec{\ent}$ in the interior of the SAC; then all instances of mutual information are positive when evaluated at $\vec{\ent}$. If there is a simple tree $\gf{T}$ that realizes $\vec{\ent}$, it follows from \Cref{cor:ew-decomposition} that $\gf{T}_\uI$ is connected for each subsystem $\uI$. However, this fact is in tension with the tree topology, since on any tree (with at least three vertices) one can always find a subtree $\gf{T}'$ such that its complement is disconnected. In more general situations, where for given $\vec{\ent}$ some instances of mutual information might vanish, the machinery of $\bsets$ and the correlation hypergraph, developed in \S\ref{sec:bsets-corr-hyp}, will be used to precisely capture the relevant combinatorial data about the KC-PMI associated to $\vec{\ent}$ that is necessary to determine whether this tension can be resolved.

\section{Correlation hypergraph representation of KC-PMIs}
\label{sec:bsets-corr-hyp}

\subsection{Definition of \texorpdfstring{$\bsets$}{beta-sets} and related classification}
\label{subsec:beta-sets}

Having reviewed the necessary definitions in \S\ref{subsec:KC-review}, we now introduce the main new tool of this work, namely a partition of the MI-poset which will allow us to describe KC-PMIs more efficiently. As we mentioned in the introduction, the culmination of this analysis, presented in \S\ref{subsec:hypergraph-rep}, will be the generalization of the relationship between correlation and connectivity that we reviewed in \S\ref{subsec:graph_review} for holographic graph models, to all KC-PMIs.  
Since the elements of the MI-poset are instances of the mutual information (whose arguments comprise two disjoint subsystems),  from now on we will implicitly assume that $\N\geq 2$.

The desired partition can be regarded as a natural coarse-graining of the MI poset
and will be dubbed $\bset$ (as shorthand for bi-partition set).
\begin{defi}[$\bset$]
\label{defi:b-set}
    Given $\N$ and a collection of parties $\uI$, we define the $\beta$\emph{-set} of $\uI$ as the set of instances of the mutual information corresponding to all possible non-trivial bipartitions of $\uI$, 
    \begin{equation}
    \label{eq:b-set}
        \bs{\uI}=\{\mi(\uJ:\uK),\ \text{with}\;\; \uJ\cup\uK=\uI,\;  \ \uJ\cap\uK=\varnothing,\;\; \text{and}\;\; |\uJ|,|\uK|\geq 1\}.
    \end{equation}
    We denote by $\mathfrak{B}_{\N}$ the set of all $\bsets$, for all $\uI\subseteq\nsp$ with cardinality $|\uI|\geq 2$.
\end{defi}

Given an arbitrary down-set $\ds$, we want to explore the structure of its complementary up-set $\cds$ (i.e., $\cds=\mgs_{\N}\setminus\ds$) from the point of view of $\bsets$, and for this purpose it is useful to introduce some dedicated terminology. Given $\ds$, we will say that a $\bset$ $\bs{\uI}$ is \emph{positive} if every element of $\bs{\uI}$ is in $\cds$. Similarly, we will say that $\bs{\uI}$ is \textit{vanishing} if every element of $\bs{\uI}$ is in $\ds$, and \textit{partial} (short for `partially vanishing') if $\bs{\uI}$ has at least one element in $\ds$ and at least one in $\cds$. Notice that a down-set $\ds$ is simply a subset of $\mgs_\N$ with a particular order structure, and that the MI instances in $\ds$ have no associated value. The motivation for this choice of terminology is that when $\ds$ \textit{is} a KC-PMI $\pmi$, then for every entropy vector in the interior of the face $\face_\pmi$, each MI instance in $\ds$ is vanishing and each one in $\cds$ is strictly positive. 

There is also a deeper classification of $\bsets$, again for a given $\ds$, which will play a key role in what follows, but in order to formulate it, we need to introduce a few more definitions. For an arbitrary poset $\p$, it is often convenient to consider down-sets and up-sets \textit{generated} by a subset $\x\subseteq\p$, which we define respectively as follows:
\begin{align}
\label{eq:order-sets}
    \begin{split}
        & \downarrow\! \x = \{y\in\p,\; \text{ such that } \exists\, x\in\x\;\; \text{with}\;\; y\preceq x\}\\
        & \uparrow\! \x = \{y\in\p,\; \text{ such that } \exists\, x\in\x\;\; \text{with}\;\; y\succeq x\}.
    \end{split}
\end{align}
To simplify notation, we will write the down-set generated by a single element (called principal down-set) as $\downarrow\! x$ instead of $\downarrow\! \{x\}$ (and similarly for up-sets). For an arbitrary down-set $\ds$ in the MI-poset, we denote by $\ac$ and $\hat{\mathcal{A}}$ the antichains that generate $\cds$ and $\ds$,\footnote{\,Any subset $\x$ of a poset $\p$ can naturally be seen as a poset with the induced partial order.
Given these order relations, a totally ordered collection of elements is called a \textit{chain}, and a collection of elements which are pairwise incomparable is an \textit{antichain}.}
\begin{align}
\begin{split}
    & \ac=\text{Min}(\cds)\quad \text{or equivalently} \quad \uparrow\!\ac=\cds \\
    & \hat{\mathcal{A}}=\text{Max}(\ds) \quad \text{or equivalently} \quad \downarrow\!\hat{\mathcal{A}}=\ds, \\
\end{split}
\end{align}
where $\text{Min}(\cds)$ and $\text{Max}(\ds)$ are the sets of \textit{minimal elements} of $\cds$ and \textit{maximal elements} of $\ds$, respectively.\footnote{\,An element $x\in\p$ of a poset $\p$ is \textit{minimal} if there is no element $y\in\p$ such that $y\prec x$. This is not to be confused with \textit{the minimum element} of $\p$ (which does not necessarily exist), defined as the element $x_{\perp}$ such that $x_{\perp}\preceq y$ for all $y\in\p$. The \textit{maximal} elements, $\text{Max}(\p)$, and \textit{the maximum element} $x_{\top}$ are defined analogously.} Notice that since the poset is finite, these sets always exist. Furthermore, for any $\mi\in\cds$ such that $\mi\notin\ac$, there is always some $\mi'\in\ac$ such that $\mi'\prec\mi$ (and similarly for $\ds$ and $\hat{\mathcal{A}}$). Relative to the elements of $\ac$, we will then say that for a given $\ds$, a $\bset$ is \textit{essential} if it contains at least one element of $\ac$, \textit{completely essential} if all its elements are in $\ac$, and \textit{non-essential} if none of its elements are in $\ac$.

Using these definitions, we then introduce the following collections of $\bsets$ for a given down-set $\ds$.

\begin{defi}[Fundamental sets of $\bsets$]
\label{defi:special-sets}
Given $\N$ and any down-set $\ds$ of the \emph{MI}-poset, we define the following fundamental sets of $\bsets$:
\begin{enumerate}[label={\emph{\footnotesize \roman*)}},wide=0pt]
\item classification based on $\ds$ and $\cds$ 
    {\small
    \begin{align*}
    & \pos(\ds) \coloneq   \{ \bs{\uI} , \ \mi>0\ \ \forall\ \mi \in \bs{\uI} \},\, \text{the set of \emph{positive} $\bsets$ of $\ds$} \\
    & \van(\ds) \coloneq   \{ \bs{\uI} , \ \mi=0\ \ \forall\ \mi \in \bs{\uI} \},\, \text{the set of \emph{vanishing} $\bsets$ of $\ds$}  \\
    & \parv(\ds) \coloneq   \{ \bs{\uI} , \ \exists\ \mi_1, \mi_2 \in\bs{\uI} \text{ with } \mi_1=0 \text{ and } \mi_2>0 \},\, \text{the set of \emph{partial} $\bsets$ of $\ds$}
    \end{align*}
    \par}
\item   classification based on the elements of $\ac$ 
    {\small
    \begin{align*}
    &\hspace{-0.95cm} \ess(\ds) \coloneq   \{ \bs{\uI} , \ \exists\ \mi  \in\bs{\uI} \text{ with } \mi \in \ac \},\, \text{the set of \emph{essential} $\bsets$ of $\ds$}  \\
    &\hspace{-0.95cm} \cess(\ds) \coloneq \{ \bs{\uI} , \ \mi \in \ac \ \ \forall\ \mi \in \bs{\uI} \},\, \text{the set of \emph{completely essential} $\bsets$ of $\ds$}\\
    &\hspace{-0.95cm} \ness(\ds) \coloneq \{ \bs{\uI} , \ \nexists\ \mi  \in\bs{\uI} \text{ with } \mi \in \ac \},\, \text{the set of \emph{non-essential} $\bsets$ of $\ds$}
\end{align*}
\par}
\end{enumerate}

\noindent where each $\mi$ is an instance of mutual information whose arguments partition the system $\uI$ as in Eq.~\eqref{eq:b-set}.
\end{defi}

\begin{figure}
    \centering
    \begin{subfigure}{0.49\textwidth}
    \centering
        \renewcommand{\arraystretch}{1.4}
        \begin{tabular}{l||c|c|c|}
             & $\pos$ & $\parv$ & $\van$ \\
            \hline \hline
            $\cess$ & {\cellcolor{Mcol} $\cess$} & $\varnothing$ & $\varnothing$ \\
            \hline
            $\ncess$ & {\cellcolor{EminusMcol} $\mathfrak{B}^{^{(\!\vee\setminus\vee\!\!\!\vee)\cap+}}$} & {\cellcolor{EminusPcol} $\mathfrak{B}^{^{\!\vee\setminus +}}$} & $\varnothing$ \\
            \hline
            $\ness$ & {\cellcolor{PminusEcol} $\pnote$} & {\cellcolor{Qcol} $\mathfrak{B}^{^{\!\oplus\setminus\vee}}$} & {\cellcolor{Vcol} $\van$} \\
            \hline
        \end{tabular}
    \caption{}
    \label{fig:classification-ds}
    \end{subfigure}
    \begin{subfigure}{0.49\textwidth}
    \centering
        \renewcommand{\arraystretch}{1.4}
        \begin{tabular}{l||c|c|c|}
            & $\pos$ & $\parv$ & $\van$ \\
            \hline \hline
            $\cess$ & {\cellcolor{Mcol} $\cess$} & $\varnothing$ & $\varnothing$ \\
            \hline
            $\ncess$ & {\cellcolor{EminusMcol} $\ncess$} & $\varnothing$ & $\varnothing$ \\
            \hline
            $\ness$ & {\cellcolor{PminusEcol} $\pnote$} & {\cellcolor{Qcol} $\parv$} & \cellcolor{Vcol}$\van$ \\
            \hline
        \end{tabular}
    \caption{}
    \label{fig:classification-pmi}
    \end{subfigure}
    \caption{
    A partition of the set $\mathfrak{B}_{\N}$ of all $\bsets$, based on the two classifications in \Cref{defi:special-sets}, (a) for arbitrary down-sets, and (b) for KC-PMIs, 
    where for clarity we have omitted the explicit dependence on the choice of $\ds$ or $\pmi$. 
    The coloring of the cells is a convention which we adopt in subsequent figures to indicate the corresponding $\bset$ classification.
    The simplification of (b) compared to (a) is the key consequence of \Cref{thm:positiveness}. Each cell in each table corresponds to the intersection of the two classes indicated by the row and column. For example, $\cess$ (yellow cell) is the intersection of $\cess$ and $\pos$, and it equals $\cess$ because of the inclusion $\cess\subseteq\pos$. Similar simple relations can be used to derive the expressions in other cells and to see that some are necessarily empty; we leave the details as an exercise for the reader. To compactly represent the sets in the various cells, we have adopted a shorthand notation where the exponent denotes the operations that needs to be taken among various subsets of $\mathfrak{B}_\N$. For example, for positive essential but not completely essential $\bsets$, $\ncess\cap\pos$ is shortened to $\mathfrak{B}^{^{(\!\vee\setminus\vee\!\!\!\vee)\cap+}}$.
    Notice that these expressions are not necessarily unique; for example we could equivalently re-express $\mathfrak{B}^{^{\!\vee\setminus +}}$ as $\mathfrak{B}^{^{\!\oplus\setminus\bigtriangledown}}$.
    }
    \label{fig:classification-all}
\end{figure}

Notice that since any $\bset$ is either positive, vanishing, or partial, for any down-set we obviously have 
\begin{align}
\label{eq:trivial_relations}
   & \pos(\ds)\cap\van(\ds) = \pos(\ds)\cap\parv(\ds) = \van(\ds)\cap\parv(\ds) = \varnothing \nonumber\\
    & \pos(\ds) \cup  \parv(\ds) \cup \van(\ds)  = \mathfrak{B}_{\N},
\end{align}
and similarly, since each $\bset$ is either essential or not, 
\begin{align}
\label{eq:trivial_relations-2}
   & \ess(\ds)\cap\ness(\ds) = \cess(\ds)\cap\ness(\ds) = \varnothing \nonumber\\
    & \ess(\ds)\cup\ness(\ds)=\mathfrak{B}_{\N}.
\end{align}
Furthermore, since any completely essential $\bset$ is essential, $\cess(\ds)\subseteq\ess(\ds)$, and we denote by $\ncess(\ds)\coloneq\ess(\ds)\setminus\cess(\ds)$ the set of $\bsets$ which are essential but not completely essential.

In the following it will also be useful to consider how these two classifications ``intersect'' to give a finer partition of $\mathfrak{B}_\N$; a schematic representation of the various cases is shown in Fig.~\ref{fig:classification-ds}. In particular, we will make an extensive use of the shorthand notation $\pnote\coloneq\pos\setminus\ess$ for the set of positive but not essential $\bsets$.
As a concrete example, the full classification of $\bsets$, according to the classes in Fig.~\ref{fig:classification-ds}, is shown in Fig.~\ref{fig:D-not-P} for the down-set of the $\N=3$ MI-poset with
\begin{equation}
\label{eq:ds-not-p-ac}
    \ac=\{\mi(1:3),\, \mi(01:2),\, \mi(12:0),\, \mi(02:3),\, \mi(03:2),\, \mi(23:0),\, \mi(13:2)\}.
\end{equation}

\begin{figure}[tb]    
    \centering
        \begin{tikzpicture}

    \draw[draw=none,rounded corners,fill=PminusEcol] (1.3,2.6) rectangle (12.1,3.4);

    \draw[draw=none,rounded corners,fill=EminusPcol] (-0.55,0.6) rectangle (2.9,1.4);
    \draw[draw=none,rounded corners,fill=Qcol] (3.05,0.6) rectangle (6.5,1.4);
    \draw[draw=none,rounded corners,fill=Mcol] (6.65,0.6) rectangle (10.1,1.4);
    \draw[draw=none,rounded corners,fill=EminusMcol] (10.25,0.6) rectangle (13.7,1.4);

    \draw[draw=none,rounded corners,fill=Vcol] (1.2,-1.4) rectangle (2.3,-0.6);
    \draw[draw=none,rounded corners,fill=Vcol] (3.2,-1.4) rectangle (4.3,-0.6);
    \draw[draw=none,rounded corners,fill=Vcol] (5.2,-1.4) rectangle (6.3,-0.6);
    \draw[draw=none,rounded corners,fill=Vcol] (7.2,-1.4) rectangle (8.3,-0.6);
    \draw[draw=none,rounded corners,fill=Mcol] (9.2,-1.4) rectangle (10.3,-0.6);
    \draw[draw=none,rounded corners,fill=Vcol] (11.2,-1.4) rectangle (12.3,-0.6);

    \node () at (2.2,3) {{\scriptsize $\mi(0\!:\!123)_{_{\!+}}$}};
     \node () at (3.7,3) {{\scriptsize $\mi(1\!:\!023)_{_{\!+}}$}};
     \node () at (5.2,3) {{\scriptsize $\mi(2\!:\!013)_{_{\!+}}$}};
     \node () at (6.7,3) {{\scriptsize $\mi(3\!:\!012)_{_{\!+}}$}};
     \node () at (8.2,3) {{\scriptsize $\mi(01\!:\!23)_{_{\!+}}$}};
     \node () at (9.7,3) {{\scriptsize $\mi(02\!:\!13)_{_{\!+}}$}};
     \node () at (11.2,3) {{\scriptsize $\mi(03\!:\!12)_{_{\!+}}$}};
     
    \node () at (0.1,1) {{\scriptsize $\mi(01\!:\!2)_{_{\!+}}$}};
    \node () at (1.2,1) {{\scriptsize $\mi(02\!:\!1)_{_0}$}};
    \node () at (2.3,1) {{\scriptsize $\mi(12\!:\!0)_{_{\!+}}$}};
    \node () at (3.7,1) {{\scriptsize $\mi(01\!:\!3)_{_{\!+}}$}};
    \node () at (4.8,1) {{\scriptsize $\mi(03\!:\!1)_{_{\!+}}$}};
    \node () at (5.9,1) {{\scriptsize $\mi(13\!:\!0)_{_0}$}};
    \node () at (7.3,1) {{\scriptsize $\mi(02\!:\!3)_{_{\!+}}$}};
    \node () at (8.4,1) {{\scriptsize $\mi(03\!:\!2)_{_{\!+}}$}};
    \node () at (9.5,1) {{\scriptsize $\mi(23\!:\!0)_{_{\!+}}$}};
    \node () at (10.9,1) {{\scriptsize $\mi(12\!:\!3)_{_{\!+}}$}};
    \node () at (12,1) {{\scriptsize $\mi(13\!:\!2)_{_{\!+}}$}};
    \node () at (13.1,1) {{\scriptsize $\mi(23\!:\!1)_{_{\!+}}$}};

    \node () at (1.8,-1) {{\scriptsize $\mi(0\!:\!1)_{_0}$}};
     \node () at (3.8,-1) {{\scriptsize $\mi(0\!:\!2)_{_0}$}};
     \node () at (5.8,-1) {{\scriptsize $\mi(1\!:\!2)_{_0}$}};
     \node () at (7.8,-1) {{\scriptsize $\mi(0\!:\!3)_{_0}$}};
     \node () at (9.8,-1) {{\scriptsize $\mi(1\!:\!3)_{_{\!+}}$}};
     \node () at (11.8,-1) {{\scriptsize $\mi(2\!:\!3)_{_0}$}};

    \node (0123) at (6.7,3.7) {{\scriptsize $\bs{0123}$}};
    
    \node (012) at (1.2,1.7) {{\scriptsize $\bs{012}$}};
    \node (013) at (4.8,1.7) {{\scriptsize $\bs{013}$}};
    \node (023) at (8.4,1.7) {{\scriptsize $\bs{023}$}};
    \node (123) at (12,1.7) {{\scriptsize $\bs{123}$}};

    \node (01) at (1.8,-0.3) {{\scriptsize $\bs{01}$}};
    \node (02) at (3.8,-0.3) {{\scriptsize $\bs{02}$}};
    \node (12) at (5.8,-0.3) {{\scriptsize $\bs{12}$}};
    \node (03) at (7.8,-0.3) {{\scriptsize $\bs{03}$}};
    \node (13) at (9.8,-0.3) {{\scriptsize $\bs{13}$}};
    \node (23) at (11.8,-0.3) {{\scriptsize $\bs{23}$}};

    \draw[-,gray,very thin] (6.7,2.6) -- (012.north);
    \draw[-,gray,very thin] (6.7,2.6) -- (013.north);
    \draw[-,gray,very thin] (6.7,2.6) -- (023.north);
    \draw[-,gray,very thin] (6.7,2.6) -- (123.north);

    \draw[-,gray,very thin] (1.175,0.6) -- (01.north);
    \draw[-,gray,very thin] (1.175,0.6) -- (02.north);
    \draw[-,gray,very thin] (1.175,0.6) -- (12.north);
    \draw[-,gray,very thin] (4.775,0.6) -- (01.north);
    \draw[-,gray,very thin] (4.775,0.6) -- (03.north);
    \draw[-,gray,very thin] (4.775,0.6) -- (13.north);
    \draw[-,gray,very thin] (8.375,0.6) -- (02.north);
    \draw[-,gray,very thin] (8.375,0.6) -- (03.north);
    \draw[-,gray,very thin] (8.375,0.6) -- (23.north);
    \draw[-,gray,very thin] (11.975,0.6) -- (12.north);
    \draw[-,gray,very thin] (11.975,0.6) -- (13.north);
    \draw[-,gray,very thin] (11.975,0.6) -- (23.north);

    \draw[-,very thick] (-0.2,0.75) -- (0.3,0.75);
    \draw[-,very thick] (2,0.75) -- (2.5,0.75);
    \draw[-,very thick] (7,0.75) -- (7.5,0.75);
    \draw[-,very thick] (8.1,0.75) -- (8.6,0.75);
    \draw[-,very thick] (9.2,0.75) -- (9.7,0.75);
    \draw[-,very thick] (11.7,0.75) -- (12.2,0.75);

    \draw[-,very thick] (9.5,-1.25) -- (10,-1.25);
     
    \end{tikzpicture}
    \caption{An example of a down-set $\ds$ in the $\N=3$ MI-poset specified by Eq.~\eqref{eq:ds-not-p-ac}, with the corresponding $\bset$ classification of \Cref{defi:special-sets}.
    Each $\bset$ is represented by a colored box, according to the color-coding established in Fig.~\ref{fig:classification-ds}. 
    As discussed below, the gray lines indicate the $\lat{$\beta$}^3$ lattice structure  (more precisely its Hasse diagram), from which we have omitted $\bs{0},\bs{1},\bs{2},\bs{3}$ and the minimum $\bs{\varnothing}$. 
    The indices $0$ and $+$ on each term are used to specify whether the given MI instance is in $\ds$ or $\cds$, respectively, and should not be interpreted as the sign of the value of the corresponding MI instance (particularly since, as explained in the main text, this down-set is not a PMI of any entropy vector). The key aspect of this example is that $\bs{012}$ is both essential and partial (green), since it contains elements of both $\ac$ and $\ds$.
    }
    \label{fig:D-not-P}
\end{figure}

As for the set of instances of the mutual information $\mgs_{\N}$, it will be convenient to introduce a partial order in $\mathfrak{B}_{\N}$. Since any $\bset$ is labeled by a subsystem $\uJ$, this partial order is naturally given by inclusion\footnote{\,Although we use the same symbol $\preceq$ for the partial order among MI instances as well as among $\bsets$, it should be clear from context which one we mean.}
\begin{equation}
    \bs{\uJ}\preceq\bs{\uK}\quad \iff \quad \uJ\subseteq\uK.
\end{equation}
Notice that given two $\bsets$ $\bs{\uJ}\prec\bs{\uJ'}$ and two MI instances $\mi\in\bs{\uJ}$ and $\mi'\in\bs{\uJ'}$, it is not generally the case that $\mi\prec\mi'$, since $\mi$ and $\mi'$ can be incomparable. It is however always true that 
\begin{equation}
    \bs{\uJ}\prec\bs{\uJ'}\quad \implies\quad \mi\nsucceq\mi'.
\end{equation}

To realize a nicer mathematical structure, the set $\mathfrak{B}_{\N}$ of all $\bsets$ at given $\N$ can naturally be extended by formally adding single-party $\beta(\ell)$ and no-party $\beta(\varnothing)$ $\bsets$.  
We can interpret these additional elements as corresponding to MI instances where at least one argument is the empty set, and the other contains the single party $\ell$; explicitly
\begin{equation}
\label{eq:trivial_bsets}
    \bs{\ell}:=\{\mi(\ell:\varnothing)\}\qquad \text{and} \qquad \bs{\varnothing}:=\{\mi(\varnothing:\varnothing)\}.
\end{equation}
With this extension, and the partial order we just introduced for $\mathfrak{B}_{\N}$, the poset of all $\bsets$ at fixed $\N$ is isomorphic to the power-set lattice of the set of parties $\nsp$, and we will denote it by $\lat{$\beta$}^{\N}$. In this lattice, the fundamental operations of meet and join are then simply
\begin{align}
\begin{split}
    & \bs{\uJ}\wedge\bs{\uK}=\bs{\uJ\cap\uK} \\
    & \bs{\uJ}\vee\bs{\uK}=\bs{\uJ\cup\uK}. \\
\end{split}
\end{align}
Since $\ent_{\varnothing}=0$, the new $\bsets$ in Eq.~\eqref{eq:trivial_bsets} (and their MI instances) can be viewed as vanishing identically for any down-set $\ds$. For this reason, and to simplify the presentation, we will always implicitly omit them in what follows, but still refer to $\mathfrak{B}_\N$ as a lattice.

Already at this stage, it should be clear that the lattice of $\bsets$ is a much simpler and nicer structure than the MI-poset. The cardinality of the MI-poset is the Stirling number of the second kind $\genfrac\{\}{0pt}{1}{\N+2}{3}$ \cite{Hubeny:2018ijt}, whereas for the lattice of $\bsets$ it is $2^{\N+1}$, which for large $\N$ gives an exponentially small ratio,
\begin{equation}
    \frac{2^{\N+1}}{\genfrac\{\}{0pt}{1}{\N+2}{3}} \sim \left(\frac{2}{3}\right)^{\N+1}.
\end{equation}
Furthermore, from an algebraic perspective, for $\N\geq 3$ the MI-poset is not even a lattice \cite{He:2022bmi}, while $\lat{$\beta$}^{\N}$ is a distributive\footnote{\,See for example \cite{birkhoff1967lattice} for the definition and its main properties.} lattice for any $\N$. These properties of $\lat{$\beta$}^{\N}$, however, are only useful if it is possible to describe a KC-PMI purely in terms of $\bsets$. Showing that this is indeed the case is the main focus of this subsection, while in the next, we will develop the necessary technology to further explore the relations among the various sets in \Cref{defi:special-sets}.

\begin{figure}[tb]    
    \centering
\begin{tikzpicture}

    \draw[draw=none,rounded corners,fill=PminusEcol] (1.3,2.6) rectangle (12.1,3.4);

    \draw[draw=none,rounded corners,fill=Mcol] (-0.55,0.6) rectangle (2.9,1.4);
    \draw[draw=none,rounded corners,fill=Mcol] (3.05,0.6) rectangle (6.5,1.4);
    \draw[draw=none,rounded corners,fill=Mcol] (6.65,0.6) rectangle (10.1,1.4);
    \draw[draw=none,rounded corners,fill=Mcol] (10.25,0.6) rectangle (13.7,1.4);

    \draw[draw=none,rounded corners,fill=Vcol] (1.2,-1.4) rectangle (2.3,-0.6);
    \draw[draw=none,rounded corners,fill=Vcol] (3.2,-1.4) rectangle (4.3,-0.6);
    \draw[draw=none,rounded corners,fill=Vcol] (5.2,-1.4) rectangle (6.3,-0.6);
    \draw[draw=none,rounded corners,fill=Vcol] (7.2,-1.4) rectangle (8.3,-0.6);
    \draw[draw=none,rounded corners,fill=Vcol] (9.2,-1.4) rectangle (10.3,-0.6);
    \draw[draw=none,rounded corners,fill=Vcol] (11.2,-1.4) rectangle (12.3,-0.6);

    \node () at (2.2,3) {{\scriptsize $\mi(0\!:\!123)_{_{\!+}}$}};
     \node () at (3.7,3) {{\scriptsize $\mi(1\!:\!023)_{_{\!+}}$}};
     \node () at (5.2,3) {{\scriptsize $\mi(2\!:\!013)_{_{\!+}}$}};
     \node () at (6.7,3) {{\scriptsize $\mi(3\!:\!012)_{_{\!+}}$}};
     \node () at (8.2,3) {{\scriptsize $\mi(01\!:\!23)_{_{\!+}}$}};
     \node () at (9.7,3) {{\scriptsize $\mi(02\!:\!13)_{_{\!+}}$}};
     \node () at (11.2,3) {{\scriptsize $\mi(03\!:\!12)_{_{\!+}}$}};
     
    \node () at (0.1,1) {{\scriptsize $\mi(01\!:\!2)_{_{\!+}}$}};
    \node () at (1.2,1) {{\scriptsize $\mi(02\!:\!1)_{_{\!+}}$}};
    \node () at (2.3,1) {{\scriptsize $\mi(12\!:\!0)_{_{\!+}}$}};
    \node () at (3.7,1) {{\scriptsize $\mi(01\!:\!3)_{_{\!+}}$}};
    \node () at (4.8,1) {{\scriptsize $\mi(03\!:\!1)_{_{\!+}}$}};
    \node () at (5.9,1) {{\scriptsize $\mi(13\!:\!0)_{_{\!+}}$}};
    \node () at (7.3,1) {{\scriptsize $\mi(02\!:\!3)_{_{\!+}}$}};
    \node () at (8.4,1) {{\scriptsize $\mi(03\!:\!2)_{_{\!+}}$}};
    \node () at (9.5,1) {{\scriptsize $\mi(23\!:\!0)_{_{\!+}}$}};
    \node () at (10.9,1) {{\scriptsize $\mi(12\!:\!3)_{_{\!+}}$}};
    \node () at (12,1) {{\scriptsize $\mi(13\!:\!2)_{_{\!+}}$}};
    \node () at (13.1,1) {{\scriptsize $\mi(23\!:\!1)_{_{\!+}}$}};

    \node () at (1.8,-1) {{\scriptsize $\mi(0\!:\!1)_{_0}$}};
     \node () at (3.8,-1) {{\scriptsize $\mi(0\!:\!2)_{_0}$}};
     \node () at (5.8,-1) {{\scriptsize $\mi(1\!:\!2)_{_0}$}};
     \node () at (7.8,-1) {{\scriptsize $\mi(0\!:\!3)_{_0}$}};
     \node () at (9.8,-1) {{\scriptsize $\mi(1\!:\!3)_{_0}$}};
     \node () at (11.8,-1) {{\scriptsize $\mi(2\!:\!3)_{_0}$}};

    \node (0123) at (6.7,3.7) {{\scriptsize $\bs{0123}$}};
    
    \node (012) at (1.2,1.7) {{\scriptsize $\bs{012}$}};
    \node (013) at (4.8,1.7) {{\scriptsize $\bs{013}$}};
    \node (023) at (8.4,1.7) {{\scriptsize $\bs{023}$}};
    \node (123) at (12,1.7) {{\scriptsize $\bs{123}$}};

    \node (01) at (1.8,-0.3) {{\scriptsize $\bs{01}$}};
    \node (02) at (3.8,-0.3) {{\scriptsize $\bs{02}$}};
    \node (12) at (5.8,-0.3) {{\scriptsize $\bs{12}$}};
    \node (03) at (7.8,-0.3) {{\scriptsize $\bs{03}$}};
    \node (13) at (9.8,-0.3) {{\scriptsize $\bs{13}$}};
    \node (23) at (11.8,-0.3) {{\scriptsize $\bs{23}$}};

    \draw[-,gray,very thin] (6.7,2.6) -- (012.north);
    \draw[-,gray,very thin] (6.7,2.6) -- (013.north);
    \draw[-,gray,very thin] (6.7,2.6) -- (023.north);
    \draw[-,gray,very thin] (6.7,2.6) -- (123.north);

    \draw[-,gray,very thin] (1.175,0.6) -- (01.north);
    \draw[-,gray,very thin] (1.175,0.6) -- (02.north);
    \draw[-,gray,very thin] (1.175,0.6) -- (12.north);
    \draw[-,gray,very thin] (4.775,0.6) -- (01.north);
    \draw[-,gray,very thin] (4.775,0.6) -- (03.north);
    \draw[-,gray,very thin] (4.775,0.6) -- (13.north);
    \draw[-,gray,very thin] (8.375,0.6) -- (02.north);
    \draw[-,gray,very thin] (8.375,0.6) -- (03.north);
    \draw[-,gray,very thin] (8.375,0.6) -- (23.north);
    \draw[-,gray,very thin] (11.975,0.6) -- (12.north);
    \draw[-,gray,very thin] (11.975,0.6) -- (13.north);
    \draw[-,gray,very thin] (11.975,0.6) -- (23.north);

    \draw[-,very thick] (-0.2,0.75) -- (0.3,0.75);
    \draw[-,very thick] (0.9,0.75) -- (1.4,0.75);
    \draw[-,very thick] (2,0.75) -- (2.5,0.75);
    \draw[-,very thick] (3.4,0.75) -- (3.9,0.75);
    \draw[-,very thick] (4.5,0.75) -- (5,0.75);
    \draw[-,very thick] (5.6,0.75) -- (6.1,0.75);
    \draw[-,very thick] (7,0.75) -- (7.5,0.75);
    \draw[-,very thick] (8.1,0.75) -- (8.6,0.75);
    \draw[-,very thick] (9.2,0.75) -- (9.7,0.75);
    \draw[-,very thick] (10.6,0.75) -- (11.1,0.75);
    \draw[-,very thick] (11.7,0.75) -- (12.2,0.75);
    \draw[-,very thick] (12.8,0.75) -- (13.3,0.75);
     
    \end{tikzpicture}
    \caption{
    A simple example of an $\N=3$ KC-PMI $\pmi$, specified by Eq.~\eqref{eq:ex1}, showing the reconstruction of $\pmi$ from the set of its positive $\bsets$ (cf., Eq.~\eqref{eq:recovery2}). Notice that unlike the (non-PMI) example in Fig.~\ref{fig:D-not-P}, here all essential $\bsets$ are positive, and furthermore the symbols 0 and + now do correspond to the actual sign of the various MI instances for an arbitrary entropy vector in the interior of the face $\face_\pmi$.
    }
    \label{fig:PT3}
\end{figure}

We begin with an example of a KC-PMI $\pmi$ where the sets in \Cref{defi:special-sets} take a particularly simple form, and it is immediate to see how $\pmi$ can be ``reconstructed'' if one is only given the set of essential $\bsets$ $\ess(\pmi)$, or the set of positive ones $\pos(\pmi)$. For $\N=3$, consider the KC-PMI $\pmi$ corresponding to the set of all bottom elements of the MI-poset, i.e.,
\begin{equation}
\label{eq:ex1}
    \pmi = \{\mi(\ell:\ell'),\, \ \forall\ell,\ell'\in\llbracket 3 \rrbracket\}.
\end{equation}
Geometrically $\face_{\pmi}$ is in fact an extreme ray of the SAC$_3$, generated by the entropy vector
\begin{equation}
    \vec{\ent}=(1,1,1,2,2,2,1),
\end{equation}
which is realized by the (pure) 4-party perfect state.\footnote{\,This is the absolutely maximally entangled state on four parties (cf., \cite{Linden:2013kal,helwig2013}), and its entropy vector is realized by a star graph with four leaves and unit edge weights \cite{Bao:2015bfa}. (A leaf in a graph is a vertex with degree one.)} The antichain $\ac$ is
\begin{equation}
    \ac = \{\mi(\uI:\uK),\, \ |\uI|=2\;\; \text{and}\;\; |\uK|=1\},
\end{equation}
and the classification of the $\bsets$ is illustrated in Fig.~\ref{fig:PT3}. 

To see how the reconstruction works, it is convenient to temporarily introduce the following map $\Uparrow$ from the power-set of the lattice of $\bsets$ to the power-set of the MI-poset, i.e., from a collection $\xx$ of $\bsets$ to a collection of instances of the mutual information
\begin{equation}
\label{eq:recovery1}
    \Uparrow\! \xx\; = \bigcup_{\bs{\uI}\,\in\,\xx} \uparrow\! \{\mi(\uJ:\uK) \in \bs{\uI}\}.
\end{equation}
In words, $\Uparrow\! \xx$ is the set obtained by taking the union of all up-sets of all instances of the mutual information in all $\bs{\uI}$ in the collection $\xx$.\footnote{\,Note that this subset $\Uparrow\! \xx \subseteq \mgs_{\N}$ is in general a (proper) subset of the set of the instances of mutual information in the $\bsets$ contained in $\uparrow\!\bs{\uI} \subseteq \mathfrak{B}_{\N}$ for all $\bs{\uI}$ in the collection $\xx$, i.e.,
    $$
        \Uparrow\! \xx\; \subseteq 
        \bigcup_{
            \genfrac {}{}{0pt}{2}{\bs{\uK} \in \uparrow\bs{\uI}}{\bs{\uI}\,\in\,\xx}
            }
        \{\mi \in \bs{\uK}\}.
    $$
}$^{,}$\footnote{\,Analogously to $\uparrow\!\bs{\uI}$ and $\uparrow\!\mi$, we also write $\Uparrow\!\bs{\uI}$ for $\Uparrow\!\{\bs{\uI}\}$.} With this definition, it is then easy to see that in the present example we have
\begin{equation}
\label{eq:recovery2}
    \cpmi\, = \;\; \Uparrow \ess (\pmi)\; = \;\; \Uparrow \pos (\pmi).
\end{equation}

For an arbitrary down-set one might naturally suspect that such a simple formula would not hold. In fact, it is straightforward to see that the first equality in Eq.~\eqref{eq:recovery2} is necessarily false for any $\ds$ with a $\bset$ which is both essential and partial, like the example in Fig.~\ref{fig:D-not-P}, where $\cds$ is generated by the antichain in Eq.~\eqref{eq:ds-not-p-ac}. The reason is that, by definition, $\Uparrow \ess (\pmi)$ constructs an up-set $\cds$ that includes all elements in each essential $\bset$, but if a $\bset$ is essential and partial, it also contains elements of $\ds$. To see that the second expression in Eq.~\eqref{eq:recovery2} also fails for this example, notice that since $\bs{012}$ is partial and all $\bsets$ below it are vanishing, none of its elements belong to the up-set of the MI-poset obtained from the last term in Eq.~\eqref{eq:recovery2}. 
Therefore, since $\cds$ also contains $\mi(01:2)$ and $\mi(12:0)$ (which are both elements of $\ac$), the reconstruction fails. While this example shows that in general Eq.~\eqref{eq:recovery2} does not reconstruct an up-set from the collection of essential or positive $\bsets$, the key point is that $\ds$ in this case is \textit{not} a PMI. The reason is that if we impose that all MI instances in $\ds$ vanish, then by linear dependence there are MI instances in $\cds$ that also have to vanish. Specifically, we have the following contradiction
\begin{align}
    0\neq\mi(01:2) & =\ent_{01}+\ent_{2}-\ent_{012}\nonumber\\
    & = \ent_0+\ent_1+\ent_2-\ent_{02}-\ent_1\nonumber\\
    & = \mi(0:2)=0,
\end{align}
where in the second line we used the fact that $\mi(0:1)=0$ and 
$\mi(02:1)=0$. 
This observation generalizes to arbitrary KC-PMIs:  

\begin{thm}
\label{thm:positiveness}
    For any $\N$ and \emph{KC-PMI} $\pmi$, every essential $\bset$ is positive, i.e., 
    \begin{equation}
    \label{eq:recoverythm}
        \ess(\pmi)\subseteq\pos(\pmi).
    \end{equation}
\end{thm}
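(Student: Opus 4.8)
The plan is to argue by contradiction and reduce the statement to a purely combinatorial–linear incompatibility. Since the classes $\pos$, $\parv$, $\van$ partition $\mathfrak{B}_{\N}$, and an essential $\bset$ contains an element of $\ac\subseteq\cpmi$ (so it cannot be vanishing), the only thing left to rule out is that some $\bset$ is simultaneously \emph{essential} and \emph{partial}. So I would suppose that a single $\bs{\uI}$ contains both a minimal element $\mi(\uJ:\uK)\in\ac$ of $\cpmi$ (with $\uJ\cup\uK=\uI$) and a vanishing element $\mi(\uJ':\uK')\in\pmi$ (with $\uJ'\cup\uK'=\uI$), and derive a contradiction. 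If the two bipartitions coincide as unordered pairs, then $\mi(\uJ:\uK)$ and $\mi(\uJ':\uK')$ are the same element of the MI-poset, which is absurd since one lies in $\cpmi$ and the other in $\pmi$; hence I may assume the bipartitions differ.

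The engine of the proof is to refine the two bipartitions against each other. Writing $A=\uJ\cap\uJ'$, $B=\uJ\cap\uK'$, $C=\uK\cap\uJ'$, $D=\uK\cap\uK'$ (so that $\uJ=AB$, $\uK=CD$, $\uJ'=AC$, $\uK'=BD$, with concatenation denoting union as in Eq.~\eqref{eq:kc}), a direct expansion of all terms into von Neumann entropies gives the identity of linear functionals

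\begin{equation}
\mi(AB:CD)+\mi(A:B)+\mi(C:D)=\mi(AC:BD)+\mi(A:C)+\mi(B:D),
\end{equation}

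valid for every entropy vector. The left-hand side carries the ``positive'' instance $\mi(\uJ:\uK)=\mi(AB:CD)$, the right-hand side carries the ``vanishing'' instance $\mi(\uJ':\uK')=\mi(AC:BD)$, and the remaining four correction terms are built only from the refinement blocks.

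Next I would check, term by term, that these four corrections vanish on $\mathrm{int}(\face_\pmi)$, using \emph{both} hypotheses on $\pmi$. Because $A,C\subseteq\uJ'$ and $B,D\subseteq\uK'$, one has $\mi(A:B),\mi(C:D)\preceq\mi(\uJ':\uK')$, so these two lie in the down-set $\pmi$ by Klein's condition. Because $A,B\subseteq\uJ$ and $C,D\subseteq\uK$, one has $\mi(A:C),\mi(B:D)\preceq\mi(\uJ:\uK)$, and in fact \emph{strictly} below: strictness for $\mi(A:C)$ follows from $\uK'=BD\neq\varnothing$ and for $\mi(B:D)$ from $\uJ'=AC\neq\varnothing$ whenever the relevant blocks are nonempty, while if a block is empty the corresponding instance is identically zero. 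Since $\mi(\uJ:\uK)$ is \emph{minimal} in $\cpmi$, nothing strictly below it can lie in $\cpmi$, so $\mi(A:C),\mi(B:D)\in\pmi$ as well. Evaluating the displayed identity on any $\vec{\ent}\in\mathrm{int}(\face_\pmi)$, where every instance of $\pmi$ vanishes while every instance of $\cpmi$ is strictly positive, then forces $\mi(\uJ:\uK)=0$ there, contradicting $\mi(\uJ:\uK)\in\cpmi$.

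The substantive idea is the symmetric four-block identity, which is precisely what converts ``two distinct bipartitions of the same $\uI$, one vanishing and one minimally positive'' into an algebraic contradiction; once it is in hand the rest is bookkeeping. The main point requiring care is the interplay of the two structural hypotheses: Klein's condition disposes of the two corrections lying below $\mi(\uJ':\uK')$, whereas minimality of $\mi(\uJ:\uK)$ in $\cpmi$ disposes of the two lying below it, and neither hypothesis alone would suffice. The only genuine nuisance is tracking possibly empty blocks $A,B,C,D$ and confirming the strict order relations needed to invoke minimality, which is why I first dispatch the degenerate case of coinciding bipartitions.
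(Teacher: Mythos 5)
Your proof is correct and takes essentially the same route as the paper's: both refine the two bipartitions of $\uI$ into the four blocks $\uJ\cap\uJ'$, $\uJ\cap\uK'$, $\uK\cap\uJ'$, $\uK\cap\uK'$, dispose of the two cross terms lying below the vanishing instance via the down-set (Klein) property and of the two lying below $\mi(\uJ:\uK)$ via its minimality in $\cpmi$, and thereby force $\mi(\uJ:\uK)=0$, a contradiction. Your symmetric four-block identity is just a repackaging of the paper's two-step entropy expansion, and your handling of empty blocks and of the strictness of the order relations matches the paper's closing remark.
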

\begin{proof}
    We proceed by contradiction. For a given $\N$ and KC-PMI $\pmi$, consider some $\bs{\uI}\in\ess(\pmi)$.\footnote{\,Notice that $\ess=\varnothing$ if and only if $\ac=\varnothing$, which corresponds to the trivial (0-dimensional) KC-PMI where $\van=\mathfrak{B}_{\N}$. In this case Eq.~\eqref{eq:recoverythm} holds trivially, as does Eq.~\eqref{eq:recovery2} since $\Uparrow\varnothing=\varnothing$; however we will mostly ignore this case in what follows.} By definition this means that there is an MI instance $\mi(\uJ:\uK)\in\bs{\uI}$ which belongs to $\ac$. Suppose now that there is an MI instance $\mi(\uJ':\uK')\in\bs{\uI}$ such that $\mi(\uJ':\uK')=0$. Since $\uJ\cup\uK=\uJ'\cup\uK'=\uI$, we can rewrite $\mi(\uJ:\uK)$ as follows
    \begin{align}
    \label{eq:vanishing_assumption}
        \mi(\uJ:\uK) & =\mi\left((\uJ\cap\uJ')\cup(\uJ\cap\uK'):(\uK\cap\uJ')\cup(\uK\cap\uK')\right)\nonumber\\
        & = \ent\left((\uJ\cap\uJ')\cup(\uJ\cap\uK')\right)+\ent\left((\uK\cap\uJ')\cup(\uK\cap\uK')\right)-\ent(\uI)\nonumber\\
        & =\ent(\uJ\cap\uJ')+\ent(\uJ\cap\uK')+\ent(\uK\cap\uJ')+\ent(\uK\cap\uK')-\ent(\uI),
    \end{align}
    where in the last step we used the fact that
    \begin{equation}
        \mi(\uJ\cap\uJ':\uJ\cap\uK')=\mi(\uK\cap\uJ':\uK\cap\uK')=0,
    \end{equation}
    since each of these MI instances is in the principal down-set of $\mi(\uJ':\uK')$, which is assumed to vanish. Furthermore, this also implies
    \begin{align}
    \label{eq:minimalilty_assumption}
        \ent(\uI) & = \ent(\uJ')+\ent(\uK')\nonumber\\
        & = \ent\left((\uJ'\cap\uJ)\cup(\uJ'\cap\uK)\right)+\ent\left((\uK'\cap\uJ)\cup(\uK'\cap\uK)\right)\nonumber\\
        & = \ent(\uJ'\cap\uJ)+\ent(\uJ'\cap\uK)+\ent(\uK'\cap\uJ)+\ent(\uK'\cap\uK),
    \end{align}
    where in the last step we used the fact that since $\mi(\uJ:\uK)\in\ac$, any $\mi(\uJ'':\uK'')$ with $\mi(\uJ'':\uK'')\prec\mi(\uJ:\uK)$ has to vanish. Combining Eq.~\eqref{eq:vanishing_assumption} and Eq.~\eqref{eq:minimalilty_assumption} we get $\mi(\uJ:\uK)=0$, which contradicts the assumption that $\mi(\uJ:\uK)\in\ac$. 
    
    Finally, notice that in general some of the intersections appearing in Eq.~\eqref{eq:vanishing_assumption} and Eq.~\eqref{eq:minimalilty_assumption} can be empty. However we can follow the same argument with the only difference that some terms drop from these equations because they trivially vanish. We leave the details as a simple exercise.
\end{proof}

\begin{figure}[tb]    
    \centering
\begin{tikzpicture}

    \draw[draw=none,rounded corners,fill=Mcol] (1.3,2.6) rectangle (12.1,3.4);

    \draw[draw=none,rounded corners,fill=Vcol] (-0.55,0.6) rectangle (2.9,1.4);
    \draw[draw=none,rounded corners,fill=Vcol] (3.05,0.6) rectangle (6.5,1.4);
    \draw[draw=none,rounded corners,fill=Vcol] (6.65,0.6) rectangle (10.1,1.4);
    \draw[draw=none,rounded corners,fill=Vcol] (10.25,0.6) rectangle (13.7,1.4);

    \draw[draw=none,rounded corners,fill=Vcol] (1.2,-1.4) rectangle (2.3,-0.6);
    \draw[draw=none,rounded corners,fill=Vcol] (3.2,-1.4) rectangle (4.3,-0.6);
    \draw[draw=none,rounded corners,fill=Vcol] (5.2,-1.4) rectangle (6.3,-0.6);
    \draw[draw=none,rounded corners,fill=Vcol] (7.2,-1.4) rectangle (8.3,-0.6);
    \draw[draw=none,rounded corners,fill=Vcol] (9.2,-1.4) rectangle (10.3,-0.6);
    \draw[draw=none,rounded corners,fill=Vcol] (11.2,-1.4) rectangle (12.3,-0.6);

    \node () at (2.2,3) {{\scriptsize $\mi(0\!:\!123)_{_{\!+}}$}};
     \node () at (3.7,3) {{\scriptsize $\mi(1\!:\!023)_{_{\!+}}$}};
     \node () at (5.2,3) {{\scriptsize $\mi(2\!:\!013)_{_{\!+}}$}};
     \node () at (6.7,3) {{\scriptsize $\mi(3\!:\!012)_{_{\!+}}$}};
     \node () at (8.2,3) {{\scriptsize $\mi(01\!:\!23)_{_{\!+}}$}};
     \node () at (9.7,3) {{\scriptsize $\mi(02\!:\!13)_{_{\!+}}$}};
     \node () at (11.2,3) {{\scriptsize $\mi(03\!:\!12)_{_{\!+}}$}};
     
    \node () at (0.1,1) {{\scriptsize $\mi(01\!:\!2)_{_0}$}};
    \node () at (1.2,1) {{\scriptsize $\mi(02\!:\!1)_{_0}$}};
    \node () at (2.3,1) {{\scriptsize $\mi(12\!:\!0)_{_0}$}};
    \node () at (3.7,1) {{\scriptsize $\mi(01\!:\!3)_{_0}$}};
    \node () at (4.8,1) {{\scriptsize $\mi(03\!:\!1)_{_0}$}};
    \node () at (5.9,1) {{\scriptsize $\mi(13\!:\!0)_{_0}$}};
    \node () at (7.3,1) {{\scriptsize $\mi(02\!:\!3)_{_0}$}};
    \node () at (8.4,1) {{\scriptsize $\mi(03\!:\!2)_{_0}$}};
    \node () at (9.5,1) {{\scriptsize $\mi(23\!:\!0)_{_0}$}};
    \node () at (10.9,1) {{\scriptsize $\mi(12\!:\!3)_{_0}$}};
    \node () at (12,1) {{\scriptsize $\mi(13\!:\!2)_{_0}$}};
    \node () at (13.1,1) {{\scriptsize $\mi(23\!:\!1)_{_0}$}};

    \node () at (1.8,-1) {{\scriptsize $\mi(0\!:\!1)_{_0}$}};
     \node () at (3.8,-1) {{\scriptsize $\mi(0\!:\!2)_{_0}$}};
     \node () at (5.8,-1) {{\scriptsize $\mi(1\!:\!2)_{_0}$}};
     \node () at (7.8,-1) {{\scriptsize $\mi(0\!:\!3)_{_0}$}};
     \node () at (9.8,-1) {{\scriptsize $\mi(1\!:\!3)_{_0}$}};
     \node () at (11.8,-1) {{\scriptsize $\mi(2\!:\!3)_{_0}$}};

    \node (0123) at (6.7,3.7) {{\scriptsize $\bs{0123}$}};
    
    \node (012) at (1.2,1.7) {{\scriptsize $\bs{012}$}};
    \node (013) at (4.8,1.7) {{\scriptsize $\bs{013}$}};
    \node (023) at (8.4,1.7) {{\scriptsize $\bs{023}$}};
    \node (123) at (12,1.7) {{\scriptsize $\bs{123}$}};

    \node (01) at (1.8,-0.3) {{\scriptsize $\bs{01}$}};
    \node (02) at (3.8,-0.3) {{\scriptsize $\bs{02}$}};
    \node (12) at (5.8,-0.3) {{\scriptsize $\bs{12}$}};
    \node (03) at (7.8,-0.3) {{\scriptsize $\bs{03}$}};
    \node (13) at (9.8,-0.3) {{\scriptsize $\bs{13}$}};
    \node (23) at (11.8,-0.3) {{\scriptsize $\bs{23}$}};

    \draw[-,gray,very thin] (6.7,2.6) -- (012.north);
    \draw[-,gray,very thin] (6.7,2.6) -- (013.north);
    \draw[-,gray,very thin] (6.7,2.6) -- (023.north);
    \draw[-,gray,very thin] (6.7,2.6) -- (123.north);

    \draw[-,gray,very thin] (1.175,0.6) -- (01.north);
    \draw[-,gray,very thin] (1.175,0.6) -- (02.north);
    \draw[-,gray,very thin] (1.175,0.6) -- (12.north);
    \draw[-,gray,very thin] (4.775,0.6) -- (01.north);
    \draw[-,gray,very thin] (4.775,0.6) -- (03.north);
    \draw[-,gray,very thin] (4.775,0.6) -- (13.north);
    \draw[-,gray,very thin] (8.375,0.6) -- (02.north);
    \draw[-,gray,very thin] (8.375,0.6) -- (03.north);
    \draw[-,gray,very thin] (8.375,0.6) -- (23.north);
    \draw[-,gray,very thin] (11.975,0.6) -- (12.north);
    \draw[-,gray,very thin] (11.975,0.6) -- (13.north);
    \draw[-,gray,very thin] (11.975,0.6) -- (23.north);

    \draw[-,very thick] (1.9,2.75) -- (2.4,2.75);
    \draw[-,very thick] (3.4,2.75) -- (3.9,2.75);
    \draw[-,very thick] (4.9,2.75) -- (5.4,2.75);
    \draw[-,very thick] (6.4,2.75) -- (6.9,2.75);
    \draw[-,very thick] (7.9,2.75) -- (8.4,2.75);
    \draw[-,very thick] (9.4,2.75) -- (9.9,2.75);
    \draw[-,very thick] (10.9,2.75) -- (11.4,2.75);
     
    \end{tikzpicture}
    \caption{An example of an $\N=3$ down-set $\ds$ which is not a PMI, and for which the reconstruction in Eq.~\eqref{eq:recovery2} gives the correct answer. Notice that $\mi(01:2)+\mi(2:3)=2\ent_{2}=\mi(2:013)$, and that if we set $\mi(01:2)=\mi(2:3)=0$ we necessarily have $\mi(2:013)=0$, which instead here is chosen to be positive. This implies that $\ds$ is not a PMI. 
    }
    \label{fig:EsubsetP-not-suff}
\end{figure}

\Cref{thm:positiveness} immediately implies that for KC-PMIs the classification of $\bsets$ given in Fig.~\ref{fig:classification-ds} simplifies, and instead takes the form shown in Fig.~\ref{fig:classification-pmi}. Furthermore, it implies that any KC-PMI can be reconstructed given only its set of positive $\bsets$, or even just the essential ones: 

\begin{cor}
\label{cor:ds_reconstruction}
    For any $\N$ and \emph{KC-PMI} $\pmi$, the relation Eq.~\eqref{eq:recovery2} holds, i.e., $\cpmi$ is the up-set of the collection of all instances of the mutual information in all $\bsets$ in $\ess(\pmi)$, or equivalently, in $\pos(\pmi)$. 
\end{cor}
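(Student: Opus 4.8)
The plan is to prove both equalities at once by establishing the cycle of inclusions
\[
\Uparrow\!\ess(\pmi)\;\subseteq\;\Uparrow\!\pos(\pmi)\;\subseteq\;\cpmi\;\subseteq\;\Uparrow\!\ess(\pmi).
\]
Once all three hold, the two middle terms sandwich $\cpmi$ from both sides and the stated identities follow immediately. The first inclusion is essentially free: \Cref{thm:positiveness} gives $\ess(\pmi)\subseteq\pos(\pmi)$, and the operator $\Uparrow$ is manifestly monotone under inclusion of collections of $\bsets$, since by Eq.~\eqref{eq:recovery1} it is just a union indexed by the collection. So the real work sits in the second and third inclusions, both of which I expect to be short poset arguments.

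For the second inclusion $\Uparrow\!\pos(\pmi)\subseteq\cpmi$, I would argue directly from the definitions. Any $\mi\in\Uparrow\!\pos(\pmi)$ arises, by definition of $\Uparrow$, from some positive $\bset$ $\bs{\uI}\in\pos(\pmi)$ together with an instance $\mi'\in\bs{\uI}$ satisfying $\mi'\preceq\mi$. Since $\bs{\uI}$ is positive every element of it lies in $\cpmi$, so $\mi'\in\cpmi$; and because $\cpmi$ is an up-set, $\mi\in\uparrow\!\mi'\subseteq\cpmi$. This uses nothing beyond the definition of a positive $\bset$ and the up-set property of $\cpmi$.

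For the third inclusion $\cpmi\subseteq\Uparrow\!\ess(\pmi)$, the crux, I would exploit that $\ac=\text{Min}(\cpmi)$ generates $\cpmi$, i.e.\ $\uparrow\!\ac=\cpmi$. Given $\mi\in\cpmi$, pick $\mi'\in\ac$ with $\mi'\preceq\mi$; writing $\mi'=\mi(\uJ':\uK')$ and $\uI'=\uJ'\cup\uK'$, the $\bset$ $\bs{\uI'}$ contains $\mi'$, and since $\mi'\in\ac$ this $\bset$ is essential by definition, so $\bs{\uI'}\in\ess(\pmi)$. Then $\mi\in\uparrow\!\mi'\subseteq\Uparrow\!\ess(\pmi)$, which is exactly what is needed.

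Finally, I would stress that the genuine obstacle has already been cleared by \Cref{thm:positiveness}, and that this corollary is the bookkeeping that packages it. The reconstruction fails for a general down-set (as in \Cref{fig:D-not-P}) precisely when some $\bset$ is simultaneously essential and partial: then $\Uparrow\!\ess(\ds)$ drags in instances belonging to $\ds$, so that $\Uparrow\!\ess(\ds)\not\subseteq\cds$ and the first step of the cycle above breaks. For a KC-PMI this situation is ruled out by \Cref{thm:positiveness}, which collapses the classification to the form of \Cref{fig:classification-pmi}; with that input in hand I anticipate no further difficulty in the routine inclusions above.
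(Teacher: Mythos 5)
Your proof is correct and uses exactly the same ingredients as the paper's own argument: \Cref{thm:positiveness} for the containment $\ess\subseteq\pos$ and for placing every element of an essential $\bset$ in $\cpmi$, the up-set property of $\cpmi$, and the fact that $\uparrow\!\ac=\cpmi$ with each element of $\ac$ lying in some essential $\bset$. The only difference is organizational --- your cycle of three inclusions yields both equalities simultaneously, whereas the paper proves $\cpmi=\;\Uparrow\!\ess(\pmi)$ by two separate inclusions and then remarks that the same works for $\pos(\pmi)$ --- so this is essentially the paper's proof, slightly more tidily packaged.
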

\begin{proof}
    Let us first assume that for a fixed KC-PMI $\pmi$ we are given $\ess(\pmi)$. By definition, any element of $\ac$ is contained in some $\bs{\uJ}\in\ess(\pmi)$ for some $\uJ$, and since $\uparrow\!\ac=\cpmi$, we have $\Uparrow\!\ess(\pmi)\supseteq\cpmi$. By \Cref{thm:positiveness}, for every $\bs{\uJ}\in\ess(\pmi)$ and every element $\mi\in\bs{\uJ}$, we have $\mi\in\cpmi$. Since $\cpmi$ is an up-set, $\cpmi\supseteq\;\uparrow\!\mi$, implying $\cpmi\supseteq\,\Uparrow\!\ess(\pmi)$ and therefore $\cpmi=\;\Uparrow\!\ess(\pmi)$. By a similar argument, the same result also holds if we replace $\ess(\pmi)$ with $\pos(\pmi)$, since by \Cref{thm:positiveness}, $\pos(\pmi)\supseteq\ess(\pmi)$.
\end{proof}

Given this result, one may wonder if the converse could also hold. Namely, for any down-set $\ds$ in which every essential $\bset$ is positive, is $\ds$ necessarily a KC-PMI?
It is easy to see that this is not the case: We provide an example in Fig.~\ref{fig:EsubsetP-not-suff}, and we will come back to this point in the discussion \S\ref{sec:discussion}. Finally, it is obvious that if a PMI $\pmi$ is not a KC-PMI, Eq.\eqref{eq:recovery2} does not reconstruct $\cpmi$ correctly, since the result of Eq.\eqref{eq:recovery2} is always an up-set, while if $\pmi$ is not a down-set, $\cpmi$ is not an up-set. However, this issue is immaterial given that as explained in \S\ref{subsec:KC-review}, any PMI which does not satisfy KC is incompatible with SSA.

To conclude this subsection, let us briefly comment on the physical interpretation of these results in the context of holography, where they can be understood more intuitively. In this context, if the entanglement wedge of a subsystem $\uI$ is connected (in the sense specified in e.g.\ \cite{Hernandez-Cuenca:2023iqh} and in absence of degeneracies), it implies positivity of the mutual information for \emph{any} bipartition of $\uI$ (since all bipartitions invoke the same HRT surface for $\uI$, which is obviously distinct from the union of the HRT surfaces for the components of any bipartition). By definition, this is equivalent to the statement that $\bs{X}$ is positive.\footnote{\,In the language of holographic graph models, this situation essentially corresponds to the special case of \Cref{cor:ew-decomposition} when $m=1$; we will come back to this point in \S\ref{sec:necessary} (cf., \Cref{cor:ew_connectivity-v2}).} Furthermore, by entanglement wedge nesting \cite{Headrick:2013zda,Wall_2014,Headrick_2014}, it is clear that if two distinct connected entanglement wedges, say those for $\uJ$ and $\uK$, intersect, they automatically induce a connected entanglement wedge for the joint system $\uJ \cup \uK$, which then does not constitute an independent piece of data (an example of this situation is illustrated in Fig.~\ref{fig:ew-crossing-hol}). In other words, it is just the connected entanglement wedges which do \emph{not} contain any smaller intersecting entanglement wedges immediately inside, which are the key (and sufficient) building blocks to determine the full PMI,\footnote{\,This recoverability of the full PMI from these building blocks follows from the fact that, as in \Cref{cor:ew-decomposition} for graph models, an MI instance $\mi(\uJ:\uK)$ vanishes if and only if there is no subsystem $\uI$, with $\uI\cap\uJ\neq 0$ and $\uI\cap\uK\neq 0$, whose minimal entanglement wedge is connected.} and these correspond precisely to the essential $\bsets$.\footnote{\,Since the full PMI can be reconstructed from these ``fundamental'' connected entanglement wedges, and the reconstruction is only possible if they capture all elements of $\ac$, it is clear that in the holographic setting, all essential $\bsets$ are positive and do in fact correspond to these building blocks.} 

Of course, the central point of our framework is that the possibility of reconstructing a KC-PMI from a certain set of positive $\bsets$ (namely the essential ones) transcends holography, so \Cref{thm:positiveness} generalizes well beyond this intuitive anchor point. Indeed, we will see in the next subsection that this holographic characterization of the fundamental building blocks as connected entanglement wedges which do not contain any smaller intersecting entanglement wedges, and the way these determine the vanishing or essential positive instances of the mutual information, generalize to arbitrary KC-PMIs (cf., \Cref{lem:max-structure}, \Cref{thm:max-structure} and \Cref{thm:partition}).\footnote{\,Of course these statements will also have a corresponding formulation in terms of graph models, starting from the crucial relation between the sign of MI instances and the connected components of the subgraphs induced by minimal min-cuts highlighted in \Cref{cor:ew-decomposition}. We will further comment on this correspondence in \S\ref{subsec:hypergraph-rep}, as the correlation hypergraph will make it even more transparent.}

\begin{figure}
    \centering
    \begin{tikzpicture}
    \centering

    \node[anchor=south west,inner sep=0] at (0,0) {\includegraphics[width=0.18\linewidth]{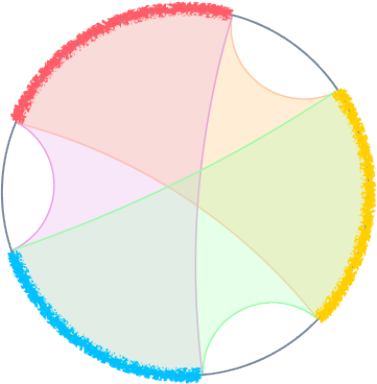}};

    \node[anchor=south west,inner sep=0] at (4,0) {\includegraphics[width=0.18\linewidth]{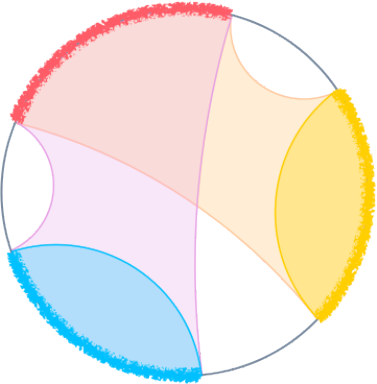}};

    \node[anchor=south west,inner sep=0] at (8,0) {\includegraphics[width=0.18\linewidth]{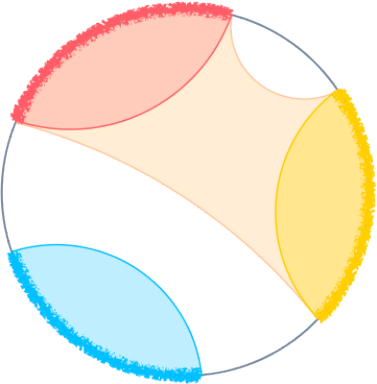}};

    \node[anchor=south west,inner sep=0] at (12,0) {\includegraphics[width=0.18\linewidth]{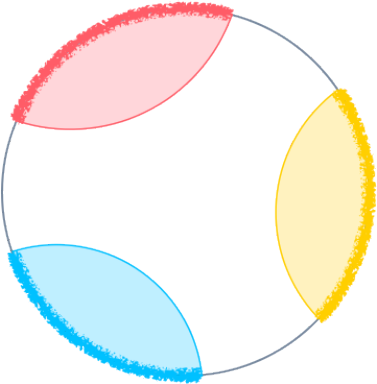}};

    \node[] () at (0.5,0) {{\scriptsize $3$}};
    \node[] () at (4.5,0) {{\scriptsize $3$}};
    \node[] () at (8.5,0) {{\scriptsize $3$}};
    \node[] () at (12.5,0) {{\scriptsize $3$}};

    \node[] () at (0.6,2.9) {{\scriptsize $1$}};
    \node[] () at (4.6,2.9) {{\scriptsize $1$}};
    \node[] () at (8.6,2.9) {{\scriptsize $1$}};
    \node[] () at (12.6,2.9) {{\scriptsize $1$}};

    \node[] () at (2.9,1.4) {{\scriptsize $2$}};
    \node[] () at (6.9,1.4) {{\scriptsize $2$}};
    \node[] () at (10.9,1.4) {{\scriptsize $2$}};
    \node[] () at (14.9,1.4) {{\scriptsize $2$}};

    \node[align=center] () at (1.5,-1.6) {{\small $\bs{123}\in\pnote$} \\{\small $\bs{12}\in\pos$} \\ {\small $\bs{13}\in\pos$} \\ {\small $\bs{23}\in\pos$}};

    \node[align=center] () at (5.5,-1.6) {{\small $\bs{123}\in\pnote$} \\{\small $\bs{12}\in\pos$} \\ {\small $\bs{13}\in\pos$} \\ {\small $\bs{23}\in\van$}};

    \node[align=center] () at (9.5,-1.6) {{\small $\bs{123}\in\ncess$} \\{\small $\bs{12}\in\pos$} \\ {\small $\bs{13}\in\van$} \\ {\small $\bs{23}\in\van$}};

    \node[align=center] () at (13.5,-1.6) {{\small $\bs{123}\in\cess$} \\{\small $\bs{12}\in\van$} \\ {\small $\bs{13}\in\van$} \\ {\small $\bs{23}\in\van$}};

    \end{tikzpicture}
    \caption{An example of a holographic set-up with a choice of three boundary regions such that the entanglement wedge of $123$ is connected (not drawn). The figures show several possible cases for the entanglement wedges of the various pairs of boundary regions and the corresponding classification of $\bsets$. Notice in particular that while $\bs{123}$ is always positive (since we are assuming that its entanglement wedge is connected), its level of essentiality depends on whether or not there are internal intersecting entanglement wedges. 
    }
    \label{fig:ew-crossing-hol}
\end{figure}

\subsection{Characterization of KC-PMIs in terms of \texorpdfstring{$\bsets$}{beta-sets}}
\label{subsec:KC-PMIs-BS-language}

\Cref{thm:positiveness}, and in particular \Cref{cor:ds_reconstruction}, are at the core of everything we do in this work, and we will now use them to prove a series of corollaries with several practical implications. Since many results will be phrased in terms of ordered sets, we will very often use the operations $\uparrow$ and $\downarrow$ defined in Eq.~\eqref{eq:order-sets}. We stress that these operations always take value in the same space as their argument, unlike $\Uparrow$ defined in Eq.~\eqref{eq:recovery1}, whose only purpose was to formulate Eq.~\eqref{eq:recovery2} and \Cref{cor:ds_reconstruction}, and which consequently will not appear again. In fact all the subsequent use of $\uparrow$ and $\downarrow$ will refer to operations which take place within $\mathfrak{B}_{\N}$ rather than within $\mgs_{\N}$. Instead, we will mostly utilize $\prec$ to indicate order relations within $\mgs_{\N}$.

From \Cref{cor:ds_reconstruction}, it is clear that if we are not given a KC-PMI $\pmi$ explicitly, but only its set $\ess(\pmi)$ of essential $\bsets$, or alternatively that of positive ones ($\pos(\pmi)$), it is possible to derive all the sets in \Cref{defi:special-sets}, as well as determine which MI instances vanish in a partial $\bset$ and which ones belong to $\ac$ in an essential but not completely essential one. Indeed, we can first use Eq.~\eqref{eq:recovery2} to reconstruct $\cpmi$, find its minimal elements to obtain $\ac$, take the complement of $\cpmi$ to obtain $\pmi$, and use this data to characterize any $\bset$ and its MI instances. The goal of this subsection will be to show that this characterization is possible by working entirely with $\bsets$, and without ever explicitly reconstructing $\cpmi$ and $\pmi$. The plan is to first focus on the direct reconstruction of all the $\bsets$ in \Cref{defi:special-sets}, which will be completed by \Cref{cor:P-from-E}, and then proceed with the characterization of the MI instances in each $\bset$ (cf., \Cref{thm:partition}). We begin with another useful corollary.\footnote{\,To simplify notation, from now on we will usually keep implicit the dependence on $\pmi$ of the various sets in \Cref{defi:special-sets} and subsequently.}
\begin{cor}
\label{cor:order_relations}
    For any $\N$ and \emph{KC-PMI} $\pmi$:
    \begin{enumerate}[label={\emph{\footnotesize \roman*)}}]
        \item $\pos\cup\parv$ is an up-set, and for any $\bs{\uJ}\in(\pos\cup\parv)\setminus\cess$ there is some $\bs{\uK}\prec\bs{\uJ}$ such that $\bs{\uK}\in\cess$,
        \item $\cess$ is an antichain, and for any $\bs{\uJ}\in\cess$ and $\bs{\uK}\prec\bs{\uJ}$, $\bs{\uK}\in\van$,
        \item $\van$ is the down-set complementary to $\pos\cup\parv$, i.e., $\van=\;\downarrow \van = \comp{(\pos\cup\parv)}$.
    \end{enumerate}
\end{cor}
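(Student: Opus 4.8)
The plan is to leverage \Cref{thm:positiveness} (every essential $\bset$ is positive) and \Cref{cor:ds_reconstruction} (so that $\cpmi=\;\uparrow\!\ac$ is the up-set of $\mgs_\N$ generated by the antichain $\ac=\text{Min}(\cpmi)$), together with one elementary observation relating the order on $\bsets$ to that on $\mgs_\N$: if $\bs{\uI}\preceq\bs{\uI'}$, say $\uI'=\uI\cup\uI_0$ with $\uI_0=\uI'\setminus\uI$, then any $\mi(\uJ:\uK)\in\bs{\uI}$ sits below the instance $\mi(\uJ\cup\uI_0:\uK)\in\bs{\uI'}$, since $\uJ\subseteq\uJ\cup\uI_0$ and $\uK\subseteq\uK$. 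This single ``extension'' move lets me transport the up-set property of $\cpmi$ upward and the minimality of $\ac$ downward across comparable $\bsets$. I also note that (iii) is largely formal: the partition relations in Eq.~\eqref{eq:trivial_relations} already give $\van=\comp{(\pos\cup\parv)}$ as sets, so (iii) reduces to showing $\van$ is a down-set, equivalently that $\pos\cup\parv$ is an up-set, which is precisely the first clause of (i).

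For the first clause of (i), I would take $\bs{\uI}\in\pos\cup\parv$, i.e. some $\mi\in\bs{\uI}\cap\cpmi$, and any $\bs{\uI'}\succeq\bs{\uI}$; by the extension observation $\mi$ lies below an instance $\mi'\in\bs{\uI'}$, and since $\cpmi$ is an up-set $\mi'\in\cpmi$, so $\bs{\uI'}$ has a positive element and lies in $\pos\cup\parv$. For the second clause, given $\bs{\uJ}\in(\pos\cup\parv)\setminus\cess$, I pick a positive instance $\mi\in\bs{\uJ}\cap\cpmi$ and, using $\cpmi=\;\uparrow\!\ac$, an element $\mi_0\in\ac$ with $\mi_0\preceq\mi$; the $\bset$ $\bs{\uK_0}$ containing $\mi_0$ satisfies $\bs{\uK_0}\preceq\bs{\uJ}$ and is essential. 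If $\mi_0\prec\mi$ strictly then $\bs{\uK_0}\prec\bs{\uJ}$ and I descend within the finite lattice to a minimal essential $\bset$ below $\bs{\uK_0}$; if instead $\mi_0=\mi$ (so $\mi_0\in\bs{\uJ}$ and $\bs{\uJ}$ is essential but, as $\bs{\uJ}\notin\cess$, carries an instance $\mi'\notin\ac$ which by \Cref{thm:positiveness} still lies in $\cpmi$), then the $\ac$-generator of $\mi'$ lies \emph{strictly} below $\mi'$ and produces an essential $\bset$ strictly below $\bs{\uJ}$, from which I again descend to a minimal essential $\bset$. In either case I close with the sub-claim that a minimal essential $\bset$ is completely essential: were it essential but not completely so, its non-$\ac$ element, being positive, would lie strictly above some $\ac$-generator, yielding a strictly smaller essential $\bset$ and contradicting minimality.

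For (ii) I would prove the second clause first. Given $\bs{\uJ}\in\cess$ and $\bs{\uK}\prec\bs{\uJ}$, take any $\mi(A:B)\in\bs{\uK}$ and write $\uJ=\uK\cup\uI_0$ with $\uI_0=\uJ\setminus\uK\neq\varnothing$; then $\mi(A\cup\uI_0:B)\in\bs{\uJ}$, and since $\bs{\uJ}\in\cess$ this instance lies in $\ac$. Because $\mi(A:B)\prec\mi(A\cup\uI_0:B)$ strictly and $\ac=\text{Min}(\cpmi)$, nothing strictly below an element of $\ac$ can be in $\cpmi$, so $\mi(A:B)\in\pmi$; as this holds for every instance of $\bs{\uK}$, $\bs{\uK}\in\van$. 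The antichain statement then follows immediately: if $\bs{\uK}\prec\bs{\uJ}$ with both in $\cess$, the clause just proved forces $\bs{\uK}\in\van$, contradicting $\cess\subseteq\ess\subseteq\pos$ together with $\pos\cap\van=\varnothing$ from Eq.~\eqref{eq:trivial_relations}.

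The main obstacle I anticipate is the second clause of (i) — specifically establishing that a minimal essential $\bset$ is always completely essential and that such a minimal one exists below the essential $\bset$ I construct. Everything else is a fairly direct transport of the up-set property of $\cpmi$ and the minimality of $\ac$ across comparable $\bsets$ via the extension observation; but pinning down the descent requires combining \Cref{thm:positiveness} (to guarantee the offending non-$\ac$ instance is positive, hence sits strictly above an $\ac$-generator) with a minimality/finiteness argument in $\mathfrak{B}_\N$, taking care that each generated $\bset$ is \emph{strictly} below so that the descent terminates at a completely essential $\bset$ strictly beneath $\bs{\uJ}$.
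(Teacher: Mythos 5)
Your proposal is correct and follows essentially the same route as the paper: the first clause of (i) and all of (ii) are proved via the same ``extension'' observation transporting the up-set property of $\cpmi$ and the minimality of $\ac$ across comparable $\bsets$, and the second clause of (i) uses the same finite descent through essential (hence, by \Cref{thm:positiveness}, positive) $\bsets$ generated by $\ac$-elements strictly below a non-$\ac$ positive instance, with your ``minimal essential $\Rightarrow$ completely essential'' sub-claim being just a repackaging of the paper's termination argument. Part (iii) is handled identically in both.
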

\begin{proof}
    i) For any $\bs{\uJ}\in\pos\cup\parv$ there is (by definition) at least one MI instance $\mi(\uJ_1:\uJ_2)\in\bs{\uJ}$ such that $\mi(\uJ_1:\uJ_2)\in\cpmi$. Consider some $\bs{\uK}\succ\bs{\uJ}$ (assuming $\uJ\neq\nsp$, otherwise the argument is trivial) and the MI instance $\mi(\uJ_1:\uJ_2\uI)$, where $\uI=\uK\setminus\uJ$. Since $\mi(\uJ_1:\uJ_2\uI)\succ\mi(\uJ_1:\uJ_2)$, and $\cpmi$ is an up-set, it follows that $\mi(\uJ_1:\uJ_2\uI)\in\cpmi$, and since $\mi(\uJ_1:\uJ_2\uI)\in\bs{\uK}$, we have $\bs{\uK}\in\pos\cup\parv$. This shows that $\pos\cup\parv$ is an up-set.
    
    Consider now some $\bs{\uJ}\in(\pos\cup\parv)\setminus\cess$. By \Cref{defi:special-sets} there is some $\mi\in\bs{\uJ}$ such that $\mi\in\cpmi$ while $\mi\notin\ac$, and there exists therefore some $\mi'\prec\mi$ such that $\mi'\in\ac$. Denoting by $\bs{\uK}$ the $\bset$ that contains $\mi'$, it follows from \Cref{thm:positiveness} that $\bs{\uK}\in\pos$. If $\bs{\uK}\in\cess$ the statement is proven, otherwise we can repeat the same argument. Since the lattice $\lat{$\beta$}^{\N}$ is finite, and at each iteration we obtain a $\bs{\uK'}\in\pos$, the procedure can stop only when we obtain some $\bs{\uK'}\in\cess$ with $\bs{\uK'}\prec\bs{\uJ}$.

    ii) Consider some $\bs{\uJ}\in\cess$ and $\mi\in\bs{\uJ}$. By the definition of $\cess$ it follows that for any $\mi'\prec\mi$ we have $\mi'\in\pmi$, and therefore that for any $\bs{\uK}\prec\bs{\uJ}$, $\bs{\uK}\in\van$. This shows that all elements of $\cess$ are incomparable, and $\cess$ is an antichain.

    iii) Follows immediately from (i) and Eq.~\eqref{eq:trivial_relations}, since in any poset the complement of any up-set is a down-set.
\end{proof}

Using this result, we can immediately derive a few additional relations among the sets in \Cref{defi:special-sets}.  
\begin{cor}
\label{cor:more_reconstruction}
    For any $\N$ and \emph{KC-PMI} $\pmi$:
    \begin{enumerate}[parsep=0pt,label={\emph{\footnotesize \roman*)}}]
        \item $\cess=\text{\emph{Min}}(\ess)=\text{\emph{Min}}(\pos)$,
        \item $\uparrow\cess\,=\;\uparrow\!\ess\,=\;\uparrow\!\pos\,=\,\pos\cup\parv$,
        \item $\parv=\; \uparrow\!\pos\setminus\pos$.
        \item $\van=\comp{(\uparrow\!\pos)}$,
    \end{enumerate}
\end{cor}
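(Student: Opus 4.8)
The plan is to derive all four identities from \Cref{thm:positiveness}, which supplies the inclusion $\ess\subseteq\pos$, together with the three structural facts recorded in \Cref{cor:order_relations}: that $\pos\cup\parv$ is an up-set with a $\cess$-element strictly below every member of $(\pos\cup\parv)\setminus\cess$; that $\cess$ is an antichain all of whose strict predecessors lie in $\van$; and that $\van=\comp{(\pos\cup\parv)}$. I would prove (i) first, then (ii), and finally observe that (iii) and (iv) drop out of (ii). No genuinely new idea is needed; the work is just bookkeeping with the ambient sets in which minimality is taken.

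For (i) I would argue by two inclusions, carried out in parallel for $\ess$ and $\pos$. To see $\cess\subseteq\text{Min}(\pos)$ (and likewise $\text{Min}(\ess)$), take $\bs{\uJ}\in\cess$; since $\cess\subseteq\ess\subseteq\pos$ it lies in both sets, and by \Cref{cor:order_relations}(ii) any $\bs{\uK}\prec\bs{\uJ}$ is in $\van$, hence in neither $\pos$ nor $\ess$, so $\bs{\uJ}$ is minimal in each. Conversely, take $\bs{\uJ}\in\text{Min}(\pos)$ and suppose $\bs{\uJ}\notin\cess$; then $\bs{\uJ}\in(\pos\cup\parv)\setminus\cess$, so \Cref{cor:order_relations}(i) produces $\bs{\uK}\prec\bs{\uJ}$ with $\bs{\uK}\in\cess\subseteq\pos$, contradicting minimality. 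The identical contradiction, using $\ncess=\ess\setminus\cess\subseteq(\pos\cup\parv)\setminus\cess$, shows $\text{Min}(\ess)\subseteq\cess$. Hence $\cess=\text{Min}(\ess)=\text{Min}(\pos)$.

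For (ii) I would string together the chain $\uparrow\!\cess\subseteq\;\uparrow\!\ess\subseteq\;\uparrow\!\pos\subseteq\pos\cup\parv\subseteq\;\uparrow\!\cess$. The first two inclusions follow from $\cess\subseteq\ess\subseteq\pos$. For the third, $\pos\subseteq\pos\cup\parv$ and $\pos\cup\parv$ is an up-set by \Cref{cor:order_relations}(i), so $\uparrow\!\pos\subseteq\;\uparrow\!(\pos\cup\parv)=\pos\cup\parv$. For the last, given $\bs{\uJ}\in\pos\cup\parv$, either $\bs{\uJ}\in\cess$ and we are done, or $\bs{\uJ}\in(\pos\cup\parv)\setminus\cess$, in which case \Cref{cor:order_relations}(i) gives $\bs{\uK}\prec\bs{\uJ}$ with $\bs{\uK}\in\cess$, so $\bs{\uJ}\in\;\uparrow\!\cess$. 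All four sets therefore coincide. Finally, (iii) follows since $\uparrow\!\pos\setminus\pos=(\pos\cup\parv)\setminus\pos=\parv$, using (ii) and the disjointness $\pos\cap\parv=\varnothing$ from Eq.~\eqref{eq:trivial_relations}; and (iv) follows since $\comp{(\uparrow\!\pos)}=\comp{(\pos\cup\parv)}=\van$ by (ii) and \Cref{cor:order_relations}(iii).

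The only point demanding care—and the closest thing to an obstacle—is the double minimum in (i): one must verify that an element minimal in $\pos$ is also minimal in $\ess$ and vice versa, which works precisely because of the sandwich $\cess\subseteq\ess\subseteq\pos$ and the fact, from \Cref{cor:order_relations}(i), that any descent from a non-$\cess$ element of $\pos\cup\parv$ lands on a $\cess$-element that remains inside both $\ess$ and $\pos$. Everything else is a direct application of the already-established up-set/antichain structure.
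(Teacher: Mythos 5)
Your proposal is correct and follows essentially the same route as the paper: both rest on the sandwich $\cess\subseteq\ess\subseteq\pos$ from \Cref{thm:positiveness} together with the up-set/antichain/down-set facts of \Cref{cor:order_relations}, with only cosmetic differences in organization (you run explicit double inclusions where the paper argues one inclusion plus a strictness contradiction). No gaps.
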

\begin{proof}
Note that by \Cref{defi:special-sets} and \Cref{thm:positiveness}, $\cess\subseteq\ess\subseteq\pos$.

    i) From (i) in \Cref{cor:order_relations} we have $\text{Min}(\pos)\subseteq\cess$ and hence $\text{Min}(\ess)\subseteq\cess$. Either of these inclusions is strict only if there are $\bs{\uJ},\bs{\uK}\in\cess$ with $\bs{\uJ}\prec\bs{\uK}$. This however contradicts (ii) in \Cref{cor:order_relations}.

    ii)     The first two equalities follow from part (i) by taking the upset of each expression.
    From (i) in \Cref{cor:order_relations} it follows that $\pos\cup\parv\subseteq\,\uparrow\!\cess$. If the inclusion is strict, then there is some $\bs{\uJ}\in\;\uparrow\!\cess$ such that $\bs{\uJ}\notin\pos\cup\parv$, but this implies $\bs{\uJ}\in\van$ by Eq.~\eqref{eq:trivial_relations}, which contradicts (iii) of \Cref{cor:order_relations} (since $\varnothing \ne \;\downarrow \bs{\uJ} \cap \cess \nsubseteq \van$ means that $\van$ cannot be a downset).
    
    iii) This is obvious from (ii).

    iv) This is immediate from (ii) above and (iii) in \Cref{cor:order_relations}.
\end{proof}

Before we proceed further, let us summarize the implications of \Cref{cor:order_relations} and \Cref{cor:more_reconstruction}, and highlight their limitations. \Cref{cor:order_relations} shows that the order structure in $\lat{$\beta$}^{\N}$ gives a clear characterization of the set of positive or partial $\bsets$ ($\pos\cup\parv$) as the complementary up-set to the down-set of vanishing ones ($\van$).  Furthermore, it shows that the set of completely essential $\bsets$ ($\cess$) is an antichain, and \Cref{cor:more_reconstruction} shows that its up-set is precisely  $\pos\cup\parv$. If one is given the set of positive $\bsets$ for a KC-PMI then, \Cref{cor:more_reconstruction} shows how one can directly derive $\cess$, $\parv$ and $\van$. Notice however that it is not yet clear how to extract the subset of essential $\bsets$.  Similarly, if one is only given the set of essential $\bsets$, one can derive $\cess$ and $\van$, but it is not clear how to separate $\pos$ from $\parv$ in $\uparrow\!\ess$ to obtain all the positive ones. The next few results focus on completing these derivations, for which the following lemma will be useful.
\begin{lemma}
\label{lem:join}
    For any \emph{KC-PMI}, and any distinct $\bsets$ $\bs{\uJ_1},\bs{\uJ_2}\in\pos$,
    \begin{equation}
        \uJ_1\cap\uJ_2\neq\varnothing\quad \implies\quad \bs{\uJ_1\cup\uJ_2}\in\pnote.
    \end{equation}
\end{lemma}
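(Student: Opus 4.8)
The plan is to verify separately the two features that define $\pnote=\pos\setminus\ess$: that $\bs{\uJ_1\cup\uJ_2}$ is positive, and that it is not essential. Write $\uI\coloneqq\uJ_1\cup\uJ_2$, and recall that for a KC-PMI the set $\pmi$ is a down-set in the MI-poset while $\ac=\mathrm{Min}(\cpmi)$. The combinatorial fact driving both parts is the following: \emph{for any splitting $\uI=\uI_1\cup\uI_2$ into disjoint non-empty pieces, at least one of $\uJ_1,\uJ_2$ meets both $\uI_1$ and $\uI_2$.} Indeed, were this false, each of $\uJ_1,\uJ_2$ would sit entirely inside $\uI_1$ or entirely inside $\uI_2$; they cannot sit on opposite sides, since then $\uJ_1\cap\uJ_2\subseteq\uI_1\cap\uI_2=\varnothing$ would contradict the hypothesis, and they cannot sit on the same side, since then $\uI=\uJ_1\cup\uJ_2$ would lie in that side, leaving the other empty. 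For a straddling set $\uJ_a$ (with $a\in\{1,2\}$) the pieces $\uJ_a\cap\uI_1$ and $\uJ_a\cap\uI_2$ are both non-empty and partition $\uJ_a$, so $\mi_a\coloneqq\mi(\uJ_a\cap\uI_1:\uJ_a\cap\uI_2)$ is a genuine element of $\bs{\uJ_a}$.

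For positivity I argue by contradiction: suppose some splitting has $\mi(\uI_1:\uI_2)\in\pmi$. Taking a straddling $\uJ_a$, since $\bs{\uJ_a}\in\pos$ the instance $\mi_a$ lies in $\cpmi$; yet $\uJ_a\cap\uI_1\subseteq\uI_1$ and $\uJ_a\cap\uI_2\subseteq\uI_2$ give $\mi_a\preceq\mi(\uI_1:\uI_2)$, so $\mi_a\in\pmi$ because $\pmi$ is a down-set. This contradiction shows that no splitting of $\uI$ can vanish, i.e.\ $\bs{\uI}\in\pos$. Notice this half uses only $\uJ_1\cap\uJ_2\neq\varnothing$.

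For non-essentiality I again argue by contradiction, assuming that some $\mi(\uI_1:\uI_2)\in\bs{\uI}$ lies in $\ac=\mathrm{Min}(\cpmi)$, so that every instance strictly below it vanishes. For a straddling $\uJ_a$, since $\bs{\uJ_a}\in\pos$ the associated $\mi_a$ lies in $\cpmi$ and obeys $\mi_a\preceq\mi(\uI_1:\uI_2)$ as before. The crucial point is that this comparison is \emph{strict}: since $\uJ_a\subseteq\uI$ one has $\mi_a=\mi(\uI_1:\uI_2)$ only if $\uJ_a=\uI$, and the incomparability of $\uJ_1$ and $\uJ_2$ (the other set is not contained in $\uJ_a$) gives $\uJ_a\subsetneq\uI$. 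Hence $\mi_a\prec\mi(\uI_1:\uI_2)$ with $\mi_a\in\cpmi$, contradicting the minimality of $\mi(\uI_1:\uI_2)$. Thus $\bs{\uI}$ contains no element of $\ac$, so $\bs{\uI}\notin\ess$ and, together with the previous paragraph, $\bs{\uI}\in\pnote$.

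The step I expect to be the real obstacle is precisely this strict descent in the non-essentiality argument — exhibiting a positive instance \emph{strictly} below a putatively minimal element. It is here that one must know the straddling $\uJ_a$ is a \emph{proper} subset of $\uI$, i.e.\ that $\uJ_1$ and $\uJ_2$ are incomparable so that $\bs{\uJ_1\cup\uJ_2}$ genuinely sits above both $\bs{\uJ_1}$ and $\bs{\uJ_2}$; in the degenerate nested case the union collapses to the larger set, which may itself be essential, and only positivity (not non-essentiality) survives. A minor point, dispatched exactly as at the end of the proof of \Cref{thm:positiveness}, is that some intersections $\uJ_a\cap\uI_b$ might be empty; but $\mi_a$ is only ever built from a straddling $\uJ_a$, for which both pieces are non-empty, so no ill-formed instances occur.
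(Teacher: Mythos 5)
Your proof is correct and follows essentially the same route as the paper's: for an arbitrary $\mi(\uK_1:\uK_2)\in\bs{\uJ_1\cup\uJ_2}$, at least one $\uJ_i$ is split by the bipartition, and the induced instance $\mi(\uK_1\cap\uJ_i:\uK_2\cap\uJ_i)\in\bs{\uJ_i}\subseteq\cpmi$ lies below it, giving both positivity and non-minimality. The one point where you go beyond the paper is the caveat you flag at the end, and it is substantive: the paper's proof asserts $\mi(\uK_1\cap\uJ_i:\uK_2\cap\uJ_i)\prec\mi(\uK_1:\uK_2)$ without comment, but strictness requires the straddling $\uJ_i$ to be a \emph{proper} subset of $\uJ_1\cup\uJ_2$, which can fail in the nested case (e.g.\ $\uJ_2\subsetneq\uJ_1$ with only $\uJ_1$ straddling). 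Indeed the lemma as literally stated is false for merely ``distinct'' $\bsets$: in Fig.~\ref{fig:N3-same-M}(d) the $\bsets$ $\bs{12}\in\cess$ and $\bs{123}\in\ncess$ are distinct, positive and intersecting, yet their union is $\bs{123}\notin\pnote$. The hypothesis should read ``incomparable'' rather than ``distinct'' (which is how the lemma is used everywhere in the paper), and in the nested case only the positivity half of the conclusion survives --- exactly as you say.
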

\begin{proof}
    Suppose there are $\bs{\uJ_1},\bs{\uJ_2}\in\pos$ such that $\uJ_1\cap\uJ_2\neq\varnothing$, and consider an MI instance $\mi(\uK_1:\uK_2)\in\bs{\uJ_1\cup\uJ_2}$. Notice that since $\uJ_1\cap\uJ_2\neq\varnothing$, for at least one of the $\uJ_i$'s (i.e., $i\in [2]$), we have
    $\uK_1\cap\uJ_i\neq\varnothing$ and $\uK_2\cap\uJ_i\neq\varnothing$. Since $\uK_1\cup\uK_2\supseteq\uJ_i$, we have $\mi(\uK_1\cap\uJ_i:\uK_2\cap\uJ_i)\in\bs{\uJ_i}$ and $\mi(\uK_1\cap\uJ_i:\uK_2\cap\uJ_i)\prec\mi(\uK_1:\uK_2)$. By assumption, $\bs{\uJ_i}\in\pos$ and therefore $\mi(\uK_1\cap\uJ_i:\uK_2\cap\uJ_i)\in\cpmi$, implying that $\mi(\uK_1:\uK_2)\in\cpmi$ but $\mi(\uK_1:\uK_2)\notin\ac$. Since this argument is independent from the choice of $\mi(\uK_1:\uK_2)\in\bs{\uJ_1\cup\uJ_2}$, it follows that $\bs{\uJ_1\cup\uJ_2}\in\pnote$.
\end{proof}

It is also useful to introduce another set of $\bsets$, this time associated to a specific subsystem $\uJ$. For an arbitrary down-set $\ds$ in the MI-poset, and an arbitrary $\bset$ $\bs{\uJ}$, we denote by $\posj(\ds)$ the set of positive $\bsets$ which are strictly less than $\bs{\uJ}$, i.e.,
\begin{equation}
    \posj(\ds)\coloneq\{\bs{\uK}\in\pos(\ds),\; \uK\subset \uJ\}. 
\end{equation}
Viewing this set as a poset, with the induced partial order from $\lat{$\beta$}^{\N}$, we will be mostly interested in the subset $\text{Max}(\posj)$ of its maximal elements which we write as
\begin{equation}
\label{eq:max-structure}
    \text{Max}(\posj)=\left\{\bs{\uK_1},\bs{\uK_2},\ldots,\bs{\uK_n}\right\}.
\end{equation}
The structure of this set is characterized by the following simple result.

\begin{lemma}
\label{lem:max-structure}
    For any \emph{KC-PMI} and any subsystem $\uJ$ such that $n:=|\text{\emph{Max}}(\posj)|\geq 2$, one of the following holds:
    \begin{enumerate}[label={\emph{\footnotesize \roman*)}}]
        \item in Eq.~\eqref{eq:max-structure}, $\uK_i\cap\uK_j=\varnothing$ for all $i,j\in[n]$ with $i\neq j$,
        \item in Eq.~\eqref{eq:max-structure}, $\uK_i\cap\uK_j\neq\varnothing$ and $\uK_i\cup\uK_j=\uJ$ for all $i,j\in[n]$ with $i\neq j$.
    \end{enumerate}
\end{lemma}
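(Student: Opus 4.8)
The plan is to first reduce the claim to a \emph{per-pair} dichotomy and then upgrade it to the uniform (global) statement. The only substantive input is \Cref{lem:join}, together with the elementary fact that, being the maximal elements of $\posj$, the $\bs{\uK_i}$ form an antichain: for $i\neq j$ neither $\uK_i\subseteq\uK_j$ nor $\uK_j\subseteq\uK_i$, so in particular $\uK_i\cup\uK_j\supsetneq\uK_i$.

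I would begin with the per-pair statement: for distinct $i,j$, either $\uK_i\cap\uK_j=\varnothing$, or $\uK_i\cap\uK_j\neq\varnothing$ and $\uK_i\cup\uK_j=\uJ$. Assume $\uK_i\cap\uK_j\neq\varnothing$. Since $\bs{\uK_i},\bs{\uK_j}\in\pos$, \Cref{lem:join} gives $\bs{\uK_i\cup\uK_j}\in\pnote\subseteq\pos$. As $\uK_i,\uK_j\subsetneq\uJ$ we have $\uK_i\cup\uK_j\subseteq\uJ$; if this were a proper inclusion, $\bs{\uK_i\cup\uK_j}$ would be a positive $\bset$ in $\posj$ lying strictly above $\bs{\uK_i}$ (by the antichain property), contradicting the maximality of $\bs{\uK_i}$. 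Hence $\uK_i\cup\uK_j=\uJ$.

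The real work is to prevent the two cases from mixing. Suppose some pair intersects, say $\uK_1\cap\uK_2\neq\varnothing$, so $\uK_1\cup\uK_2=\uJ$ by the previous step; I claim all pairs then intersect. The recurring tool is a ``covering forces containment'' observation: if $\uK_a\cup\uK_b=\uJ$ and a third index satisfies $\uK_c\cap\uK_a=\varnothing$, then $\uK_c\subseteq\uJ=\uK_a\cup\uK_b$ combined with disjointness from $\uK_a$ forces $\uK_c\subseteq\uK_b$, contradicting the antichain property whenever $c\neq b$. Applying this with $(a,b)=(1,2)$ shows every $\uK_c$ with $c\neq 1,2$ meets $\uK_1$ (and $\uK_2$ meets $\uK_1$ by hypothesis), hence $\uK_c\cup\uK_1=\uJ$ by the per-pair step; symmetrically $\uK_c$ meets $\uK_2$. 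Finally, for arbitrary distinct $i,j$ with $i,j\notin\{1,2\}$ I would apply the observation with $(a,b)=(1,i)$, using the just-established $\uK_i\cup\uK_1=\uJ$: if $\uK_j\cap\uK_i=\varnothing$ then $\uK_j\subseteq\uK_1$, a contradiction. Thus every pair intersects, and the per-pair step promotes this to $\uK_i\cup\uK_j=\uJ$ for all $i,j$, which is case (ii); if instead no pair intersects we are in case (i), and the base case $n=2$ is immediate since there the per-pair dichotomy already \emph{is} the global one.

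I expect the global upgrade to be the only genuine obstacle. The per-pair dichotomy is almost mechanical once \Cref{lem:join} and maximality are in hand, but excluding a mixed configuration---some disjoint pairs coexisting with some covering pairs---requires the covering-forces-containment argument and some care to verify that every index, including those outside the initially chosen intersecting pair, is driven into the same case.
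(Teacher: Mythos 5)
Your proposal is correct and follows essentially the same route as the paper's proof: the per-pair dichotomy via \Cref{lem:join} plus maximality of the $\bs{\uK_i}$ in $\posj$, followed by the observation that $\uK_i\cup\uK_j=\uJ$ together with incomparability forces every other $\uK_k$ to meet both $\uK_i$ and $\uK_j$. Your treatment of pairs entirely outside the initial intersecting pair is in fact slightly more explicit than the paper's, which handles that step with a brief ``by the same argument as above.''
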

\begin{proof}
 For an arbitrary KC-PMI and subsystem $\uJ$ such that $n\geq 2$, consider two distinct elements  $\bs{\uK_i},\bs{\uK_j}\in\text{Max}(\posj)$. If $\uK_i\cap\uK_j\neq\varnothing$, then by \Cref{lem:join} $\bs{\uK_i\cup\uK_j}\in\pos$, and it must be that $\uK_i\cup\uK_j=\uJ$, otherwise neither $\bs{\uK_i}$ nor $\bs{\uK_j}$ is maximal in $\posj$. This proves the lemma for $n=2$. Suppose now that $n\geq 3$, and that there are distinct $\bs{\uK_i},\bs{\uK_j}\in\text{Max}(\posj)$ such that $\uK_i\cap\uK_j\neq\varnothing$. As we have just shown, this implies $\uK_i\cup\uK_j=\uJ$. Therefore, since all elements of $\text{Max}(\posj)$ are incomparable (i.e., any $\uK_k\in\text{Max}(\posj)$ distinct from $\uK_i$ and $\uK_j$ cannot be contained in either $\uK_i$ or $\uK_j$), we necessarily have that for any $\bs{\uK_k}\in\text{Max}(\posj)$, $\uK_i\cap\uK_k\neq\varnothing$ and $\uK_j\cap\uK_k\neq\varnothing$, and by the same argument as above, it must be $\uK_i\cup\uK_k=\uJ$ and $\uK_j\cup\uK_k=\uJ$.  
 Therefore, when $n\geq2$, either there are distinct $\bs{\uK_i},\bs{\uK_j}\in\text{Max}(\posj)$ such that $\uK_i\cap\uK_j\neq\varnothing$, and the structure of $\text{Max}(\posj)$ is the one described in (ii), or there is no such pair, which means (i).
\end{proof}

We can now prove a result that sharpens the distinction between essential $\bsets$ and those which are positive but not essential.  The theorem also highlights a certain structural similarity between the set of elements of $\ac$ in an essential but not completely essential $\bset$, and that of vanishing MI instances in a partial one. An example of this result is shown in Fig.~\ref{fig:N4-example}.

\begin{thm}
\label{thm:max-structure}
    For any \emph{KC-PMI} and any $\bset$ $\bs{\uJ}$:
    \begin{enumerate}[label={\emph{\footnotesize \roman*)}}]
        \item $\bs{\uJ}\in\van\cup\cess$ if and only if $\text{\emph{Max}}(\posj)=\varnothing$,
        \item $\bs{\uJ}\in\ncess\cup\parv$ 
        if and only if either $n=1$,  
        or $n>1$ and $\text{\emph{Max}}(\posj)$ has the structure described in \emph{(i)} of \Cref{lem:max-structure}, 
        \item $\bs{\uJ}\in\pnote$ if and only if $n\geq 2$ and $\text{\emph{Max}}(\posj)$ has the structure described in \emph{(ii)} of \Cref{lem:max-structure}.
    \end{enumerate}
\end{thm}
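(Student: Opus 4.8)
The plan is to leverage the fact that both sides of the three stated equivalences \emph{partition} the lattice $\mathfrak{B}_{\N}$, so that it suffices to establish the three forward implications, with the converses following automatically. On the left, \Cref{thm:positiveness} (equivalently the classification in Fig.~\ref{fig:classification-pmi}) guarantees that $\van\cup\cess$, $\ncess\cup\parv$, and $\pnote$ are pairwise disjoint and exhaust $\mathfrak{B}_{\N}$. On the right, for any $\uJ$ the set $\text{Max}(\posj)$ is either empty, a singleton ($n=1$), or has $n\geq2$, in which case \Cref{lem:max-structure} forces \emph{exactly} one of its two structures; hence the conditions ``$\text{Max}(\posj)=\varnothing$'', ``$n=1$ or ($n\geq2$ with structure (i))'', and ``$n\geq2$ with structure (ii)'' also partition all possibilities. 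Since the three forward implications then send each block of the left partition into the identically-indexed block of the right partition, and both are genuine partitions, the two block-assignments must coincide, giving all converses at once.

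For (i), if $\bs{\uJ}\in\van$ then $\van$ being a down-set (\Cref{cor:order_relations}(iii)) forces every $\bs{\uK}\prec\bs{\uJ}$ into $\van$, so $\posj=\varnothing$; and if $\bs{\uJ}\in\cess$ then \Cref{cor:order_relations}(ii) gives $\bs{\uK}\in\van$ for all $\bs{\uK}\prec\bs{\uJ}$, so again $\text{Max}(\posj)=\varnothing$. The recurring engine for the other two parts is the observation that whenever some $\mi\in\bs{\uJ}$ lies in $\cpmi\setminus\ac$, minimality produces an $\mi'\in\ac$ with $\mi'\prec\mi$; writing $\mi'=\mi(\uL_1:\uL_2)$ with $\uL_1\cup\uL_2=\uK'$, one checks $\uK'\subsetneq\uJ$, and \Cref{thm:positiveness} gives $\bs{\uK'}\in\pos$, hence $\bs{\uK'}\in\posj$ and $\posj\neq\varnothing$.

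For (ii), I would first apply this engine: both $\bs{\uJ}\in\ncess$ (positive, with an instance in $\cpmi\setminus\ac$ since it is not completely essential) and $\bs{\uJ}\in\parv$ (a positive, and by non-essentiality non-minimal, instance) produce an instance of $\cpmi\setminus\ac$, so $\posj\neq\varnothing$ and $n\geq1$. It then remains only to exclude structure (ii); but if two elements $\bs{\uK_i},\bs{\uK_j}\in\text{Max}(\posj)\subseteq\pos$ satisfied $\uK_i\cap\uK_j\neq\varnothing$ and $\uK_i\cup\uK_j=\uJ$, then \Cref{lem:join} would give $\bs{\uJ}=\bs{\uK_i\cup\uK_j}\in\pnote$, contradicting $\bs{\uJ}\in\ncess\cup\parv$. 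Hence $n=1$, or $n\geq2$ with structure (i).

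For (iii), the engine again gives $\posj\neq\varnothing$, so $n\geq1$. The heart of the argument, and the step I expect to be the main obstacle, is ruling out both $n=1$ and structure (i) when $\bs{\uJ}\in\pnote$. In either of these situations the maximal positive $\bsets$ below $\bs{\uJ}$ are pairwise disjoint, so I would test the bipartition $\mi(\uK_1:\uJ\setminus\uK_1)\in\bs{\uJ}$ (valid since $\uK_1\subsetneq\uJ$): it lies in $\cpmi$ because $\bs{\uJ}\in\pos$, and I claim it lies in $\ac$. If not, it dominates some $\mi(\uL_1:\uL_2)\in\ac$ with $\uL_1\subseteq\uK_1$ and $\uL_2\subseteq\uJ\setminus\uK_1$ both nonempty, whose $\bset$ sits below some $\bs{\uK_m}\in\text{Max}(\posj)$; then $\uL_1\cup\uL_2\subseteq\uK_m$ forces $\uL_2=\varnothing$ if $m=1$ and $\uL_1=\varnothing$ if $m\neq1$ (using disjointness), a contradiction either way. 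Thus $\mi(\uK_1:\uJ\setminus\uK_1)\in\ac$, making $\bs{\uJ}$ essential and contradicting $\bs{\uJ}\in\pnote$. Therefore $n\geq2$ and structure (i) is impossible, so \Cref{lem:max-structure} leaves only structure (ii), completing the forward implication and, via the partition argument, the theorem.
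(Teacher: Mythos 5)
Your proof is correct, and every technical ingredient it uses (\Cref{thm:positiveness}, \Cref{cor:order_relations}, \Cref{lem:max-structure}, \Cref{lem:join}, and the computation showing that an MI instance which does not split any $\uK_i$ must lie in $\ac$) also appears in the paper's proof; what differs is the logical packaging. The paper proves the ``if'' direction of (i), both directions of (ii) (treating the $\ncess$ and $\parv$ cases separately in the ``only if'' part), and obtains (iii) by elimination, whereas you prove only the three forward implications and close all converses at once by observing that both sides of the theorem partition $\mathfrak{B}_{\N}$ --- the right-hand partition being legitimate because the two structures in \Cref{lem:max-structure} are mutually exclusive for $n\geq 2$. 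Two of your local arguments are mild improvements: in (ii) you exploit that \Cref{lem:join} yields membership in $\pnote$ (not merely $\pos$), which rules out structure (ii) for $\ncess$ and $\parv$ uniformly and avoids the paper's separate splitting argument for the essential case; and your direct forward proof of (iii), built on the explicit instance $\mi(\uK_1:\uJ\setminus\uK_1)$, is essentially the contrapositive of the paper's ``if'' direction of (ii), so nothing genuinely new is needed there. The one step worth double-checking in your write-up is the claim $\uK'\subsetneq\uJ$ in the ``engine'': it holds because a strict relation $\mi'\prec\mi$ between instances whose arguments union to the same set would force $\mi'=\mi$, but this deserves a line of justification since $\bs{\uK'}\in\posj$ requires the strict inclusion.
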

\begin{proof}
    We first consider the case of some $\bs{\uJ}$ such that $\text{Max}(\posj)=\varnothing$. It follows immediately from (i) of \Cref{cor:order_relations} that $\bs{\uJ}\in\van\cup\cess$, proving the ``if'' direction of (i). The ``only if'' direction is trivial from the definition of $\van$ and $\cess$.

    To prove (ii), suppose that $n=1$, or that $n>1$ and $\text{Max}(\posj)$ has the structure described in (i) of \Cref{lem:max-structure}. Notice that in general $\bigcup_{i=1}^n \uK_i \subseteq \uJ$, not necessarily with equality, and that the inclusion is necessarily strict when $n=1$. It follows from (i) that $\bs{\uJ}\notin\van\cup\cess$, implying  $\bs{\uJ}\in(\pos\setminus\cess)\cup\parv$, and we need to show that if $\bs{\uJ}\in\pos\setminus\cess$ then $\bs{\uJ}\in\ncess$. Consider an MI instance $\mi(\uJ_1:\uJ_2)\in\bs{\uJ}$ such that for each $i\in [n]$ either $\uK_i\subset\uJ_1$ or $\uK_i\subset\uJ_2$. If $\bs{\uJ}\in\pos\setminus\cess$ then $\mi\in\cpmi$ but $\mi\notin\ac$ and there is some $\mi'(\uJ_1':\uJ_2')$ such that $\mi'\prec\mi$ and $\mi'\in\ac$. By \Cref{thm:positiveness}, $\bs{\uJ'_1\cup\uJ'_2}\in\pos$, and by the definition of $\text{Max}(\posj)$ and \Cref{lem:join}, it must be that for each $i\in [n]$, either $\uJ'_1\cup\uJ'_2\subseteq\uK_i$ or $(\uJ'_1\cup\uJ'_2)\cap\uK_i=\varnothing$. By the assumption about $\mi(\uJ_1:\uJ_2)$ however, the only possibility for $\mi'$ to satisfy this condition is that either $\uJ'_1=\varnothing$ or $\uJ'_2=\varnothing$, but in either case $\mi'\notin\ac$, which is a contradiction.
    
    In the ``only if'' direction, suppose that $\bs{\uJ}\in\ncess\cup\parv$
    and notice that from (i) we have $\text{Max}(\posj)\neq\varnothing$, implying $n\geq 1$. Suppose that $n>1$ (otherwise the statement is already proven) and notice that if $\bs{\uJ}\in\parv$, then $\text{Max}(\posj)$ must have the structure described in (i) of \Cref{lem:max-structure}, otherwise \Cref{lem:join} implies $\bs{\uJ}\in\pos$. Suppose then that $\bs{\uJ}\in\ncess$, and that $\text{Max}(\posj)$ has the structure described in (ii) of \Cref{lem:max-structure}. It follows that for every $\mi\in\bs{\uJ}$ there is some $\bs{\uK_i}\in\text{Max}(\posj)$ and some $\mi'\in\bs{\uK_i}$ such that $\mi'\prec\mi$ and $\mi'\in\cpmi$. This however contradicts the assumption that $\bs{\uJ}$ is essential.
    
    Finally, (iii) now follows by elimination from (i) and (ii), \Cref{lem:max-structure} and \Cref{defi:special-sets}, as there are no other possibilities.
\end{proof}

\begin{figure}[tbp]
    \centering
    \begin{subfigure}{0.29\textwidth}
    \centering
    \begin{tikzpicture}[scale=0.5]
    \draw (-1.5,0) -- (1.5,0);
    \draw[dashed] (-1.5,0) -- (0,-1.5);
    \draw[dashed] (0,-3) -- (0,-1.5);
    \draw[dashed] (1.5,0) -- (0,-1.5);
    
    \draw[dotted] (1.5,0) -- (1.2,-2);
    \draw[dotted] (2.5,-3) -- (1.2,-2);
    \draw[dotted] (0,-3) -- (1.2,-2);
    
    \filldraw[Ccol] (-1.5,0) circle (35pt);
    \filldraw[Ccol] (1.5,0) circle (35pt);
    \filldraw[Ccol] (0,-3) circle (20pt);
    \filldraw[Ccol] (2.5,-3) circle (20pt);
    \node () at (-1.5,0) {{\footnotesize $1$}};
    \node () at (1.5,0) {{\footnotesize $2$}};
    \node () at (0,-3) {{\footnotesize $3$}};
    \node () at (2.5,-3) {{\footnotesize $4$}};

    \node () at (2.5,-4) {};
    \end{tikzpicture}
    \subcaption[]{}
    \end{subfigure}
    \hspace{2cm}
    \begin{subfigure}{0.49\textwidth}
    \centering
    \begin{tikzpicture}[scale=1]
    \draw (0,0) -- (0,-1);
    \draw (0,0) -- (-2,-1);
    \draw (0,0) -- (2,-1);
    \draw (0,0) -- (1,-1);
    \draw (0,-1) -- (-0.5,-2);
    \draw (0,-1) -- (0.5,-2);
    \draw (-2,-1) -- (-2.5,-2);
    \draw (-2,-1) -- (-1.5,-2);
    \draw (2,-1) -- (2.5,-2);
    \draw (2,-1) -- (1.5,-2);
    \filldraw (1,-1) circle (2pt);
    \filldraw (-2.5,-2) circle (2pt);
    \filldraw (-1.5,-2) circle (2pt);
    \filldraw (2.5,-2) circle (2pt);
    \filldraw (1.5,-2) circle (2pt);
    \filldraw (-0.5,-2) circle (2pt);
    \filldraw (0.5,-2) circle (2pt);
    
    \filldraw[fill=bulkcol] (0,0) circle (2pt);
    \filldraw[fill=bulkcol] (-2,-1) circle (2pt);
    \filldraw[fill=bulkcol] (0,-1) circle (2pt);
    \filldraw[fill=bulkcol] (2,-1) circle (2pt);
    
    \node[] () at (1,-1.3) {{\scriptsize $3$}};
    \node[] () at (-2.5,-2.3) {{\scriptsize $4$}};
    \node[] () at (-1.5,-2.3) {{\scriptsize $0$}};
    \node[] () at (2.5,-2.3) {{\scriptsize $1$}};
    \node[] () at (1.5,-2.3) {{\scriptsize $2$}};
    \node[] () at (-0.5,-2.3) {{\scriptsize $1$}};
    \node[] () at (0.5,-2.3) {{\scriptsize $0$}};
    \node[red] () at (-2.45,-1.5) {{\scriptsize $3$}};
    \node[red] () at (-1.55,-1.5) {{\scriptsize $7$}};
    \node[red] () at (-0.45,-1.5) {{\scriptsize $3$}};
    \node[red] () at (0.45,-1.5) {{\scriptsize $1$}};
    \node[red] () at (2.45,-1.5) {{\scriptsize $1$}};
    \node[red] () at (1.55,-1.5) {{\scriptsize $5$}};
    \node[red] () at (-1.35,-0.5) {{\scriptsize $8$}};
    \node[red] () at (-0.15,-0.5) {{\scriptsize $2$}};
    \node[red] () at (0.4,-0.6) {{\scriptsize $4$}};
    \node[red] () at (1.3,-0.5) {{\scriptsize $4$}};
    \end{tikzpicture}
    \subcaption[]{}
    \end{subfigure}
    \par\bigskip\bigskip
    \begin{subfigure}{\textwidth}
    \centering
    \begin{tikzpicture}
    
    \node[fill=PminusEcol, align=center, rounded corners] (01234) at (0,6) {{\footnotesize $\bs{01234}$}};
    
    \node[fill=PminusEcol, align=center, rounded corners] (0123) at (-4,4.5) {{\footnotesize $\bs{0123}$}};
    \node[fill=PminusEcol, align=center, rounded corners] (0124) at (-2,4.5) {{\footnotesize $\bs{0124}$}};
    \node[fill=PminusEcol, align=center, rounded corners] (0134) at (0,4.5) {{\footnotesize $\bs{0134}$}};
    \node[fill=PminusEcol, align=center, rounded corners] (0234) at (2,4.5) {{\footnotesize $\bs{0234}$}};
    \node[fill=PminusEcol, align=center, rounded corners] (1234) at (4,4.5) {{\footnotesize $\bs{1234}$}};

    \node[fill=PminusEcol, align=center, rounded corners] (012) at (-6.3,3) {{\footnotesize $\bs{012}$}};
    \node[fill=PminusEcol, align=center, rounded corners] (013) at (-4.9,3) {{\footnotesize $\bs{013}$}};
    \node[fill=PminusEcol, align=center, rounded corners] (014) at (-3.5,3) {{\footnotesize $\bs{014}$}};
    \node[fill=PminusEcol, align=center, rounded corners] (023) at (-2.1,3) {{\footnotesize $\bs{023}$}};
    \node[fill=PminusEcol, align=center, rounded corners] (024) at (-0.7,3) {{\footnotesize $\bs{024}$}};
    \node[fill=PminusEcol, align=center, rounded corners] (034) at (0.7,3) {{\footnotesize $\bs{034}$}};
    \node[fill=EminusMcol, align=center, rounded corners] (123) at (2.1,3) {{\footnotesize $\bs{123}$}};
    \node[fill=Qcol, align=center, rounded corners] (124) at (3.5,3) {{\footnotesize $\bs{124}$}};
    \node[fill=Vcol, align=center, rounded corners] (134) at (4.9,3) {{\footnotesize $\bs{134}$}};
    \node[fill=Mcol, align=center, rounded corners] (234) at (6.3,3) {{\footnotesize $\bs{234}$}};

    \node[fill=Mcol, align=center, rounded corners] (01) at (-6.3,1.5) {{\footnotesize $\bs{01}$}};
    \node[fill=Mcol, align=center, rounded corners] (02) at (-4.9,1.5) {{\footnotesize $\bs{02}$}};
    \node[fill=Mcol, align=center, rounded corners] (03) at (-3.5,1.5) {{\footnotesize $\bs{03}$}};
    \node[fill=Mcol, align=center, rounded corners] (04) at (-2.1,1.5) {{\footnotesize $\bs{04}$}};
    \node[fill=Mcol, align=center, rounded corners] (12) at (-0.7,1.5) {{\footnotesize $\bs{12}$}};
    \node[fill=Vcol, align=center, rounded corners] (13) at (0.7,1.5) {{\footnotesize $\bs{13}$}};
    \node[fill=Vcol, align=center, rounded corners] (14) at (2.1,1.5) {{\footnotesize $\bs{14}$}};
    \node[fill=Vcol, align=center, rounded corners] (23) at (3.5,1.5) {{\footnotesize $\bs{23}$}};
    \node[fill=Vcol, align=center, rounded corners] (24) at (4.9,1.5) {{\footnotesize $\bs{24}$}};
    \node[fill=Vcol, align=center, rounded corners] (34) at (6.3,1.5) {{\footnotesize $\bs{34}$}};

    \draw[dotted,thick] (01234.south) -- (0123.north);
    \draw[dotted,thick] (01234.south) -- (0124.north);
    \draw[dotted,thick] (01234.south) -- (0134.north);
    \draw[dotted,thick] (01234.south) -- (0234.north);
    \draw[dotted,thick] (01234.south) -- (1234.north);

    \draw[dotted,thick] (1234.south) -- (123.north);
    \draw[dotted,thick] (1234.south) -- (234.north);

    \draw[dotted,thick] (0123.south) -- (123.north);
    \draw[dotted,thick] (0123.south) -- (012.north);
    \draw[dotted,thick] (0123.south) -- (013.north);
    \draw[dotted,thick] (0123.south) -- (023.north);

    \draw[dotted,thick] (0124.south) -- (012.north);
    \draw[dotted,thick] (0124.south) -- (014.north);
    \draw[dotted,thick] (0124.south) -- (024.north);
    
    \draw[dotted,thick] (0134.south) -- (013.north);
    \draw[dotted,thick] (0134.south) -- (014.north);
    \draw[dotted,thick] (0134.south) -- (034.north);

    \draw[dotted,thick] (0234.south) -- (234.north);
    \draw[dotted,thick] (0234.south) -- (023.north);
    \draw[dotted,thick] (0234.south) -- (024.north);
    \draw[dotted,thick] (0234.south) -- (034.north);

    \draw[dotted,thick] (123.south) -- (12.north);
    \draw[dotted,thick] (124.south) -- (12.north);
    \draw[dotted,thick] (012.south) -- (12.north);
    \draw[dotted,thick] (012.south) -- (01.north);
    \draw[dotted,thick] (012.south) -- (02.north);
    \draw[dotted,thick] (013.south) -- (01.north);
    \draw[dotted,thick] (013.south) -- (03.north);
    \draw[dotted,thick] (014.south) -- (01.north);
    \draw[dotted,thick] (014.south) -- (04.north);
    \draw[dotted,thick] (023.south) -- (02.north);
    \draw[dotted,thick] (023.south) -- (03.north);
    \draw[dotted,thick] (024.south) -- (02.north);
    \draw[dotted,thick] (024.south) -- (04.north);
    \draw[dotted,thick] (034.south) -- (03.north);
    \draw[dotted,thick] (034.south) -- (04.north);
    
    \end{tikzpicture}
    \subcaption[]{}
    \end{subfigure}
    \caption{An example of an $\N=4$ holographic KC-PMI that illustrates \Cref{thm:max-structure}. (a) A choice of a holographic configuration where the shaded discs represent boundary regions and the lines represent RT surfaces as in \cite{Hubeny:2018trv,Hubeny:2018ijt}. Each line style corresponds to a connected RT surface anchored to the disks connected by the lines with the same style. (b) A holographic graph model realization.
    (c) The classification of $\bsets$ where, unlike in previous figures, the lines do not represent the partial order relations of $\lat{$\beta$}^4$ but instead connect a $\bset$ $\bs{\uJ}$ with the elements of $\text{Max}(\posj)$. 
    With respect to the statement of \Cref{thm:max-structure}, notice in particular that: (i) there are no downwards lines from $\bs{134}$ and $\bs{234}$, since $\bs{134}\in\van$ and $\bs{234}\in\cess$ 
    (ii) there is a single downwards line from $\bs{123}$ and $\bs{124}$, since $\bs{123}\in\ncess$ and $\bs{124}\in\parv$, (iii) there are at least two downwards lines from each $\bset$ in $\pnote$, and for each such $\bset$ $\bs{\uJ}$, the subsystems of the $\bsets$ reached by these lines have non-empty pairwise intersection and their union is $\uJ$.} 
    \label{fig:N4-example}
\end{figure}

\Cref{thm:max-structure} completes the first step of our plan for this subsection, as it immediately implies that for any KC-PMI we can reconstruct the set of essential $\bsets$ from that of positive ones. Indeed, if we are given $\pos$, we can determine $\text{Max}(\posj)$ for any $\uJ$, and therefore distinguish the elements of $\pos$ which are in $\ess$ from those in $\pnote$. On the other hand, notice that if for a KC-PMI we are only given the collection of essential $\bsets$, \Cref{thm:max-structure} does not show how to distinguish positive $\bsets$ from partial ones, since we need to know $\pos$ to determine $\posj$ for a given $\bs{\uJ}$. 

To proceed towards this reconstruction, suppose we are given the essential $\bsets$, and let $\xx$ denote the set of positive ones obtained from \Cref{lem:join} as follows. Start by setting $\xx=\ess$, and suppose there are $\bs{\uJ_1},\bs{\uJ_2}\in\ess$ with $\uJ_1\cap\uJ_2\neq\varnothing$. By \Cref{lem:join}, $\bs{\uJ_1\cup\uJ_2}\in\pnote$ and we add it to $\xx$. We then iterate this procedure, and construct new elements of $\pnote$ until we can no longer find two $\bs{\uJ_1},\bs{\uJ_2}\in\xx$ with $\uJ_1\cap\uJ_2\neq\varnothing$ such that $\bs{\uJ_1\cup\uJ_2}\notin\xx$. To complete the reconstruction of $\pos$ from $\ess$, it only remains to be shown that any element of $\pnote$ can be obtained via this procedure, i.e., that when the iteration stops we have $\xx=\pos$. This is proven by the next corollary.

\begin{cor}
\label{cor:P-from-E}
    For any \emph{KC-PMI}, and any $\bset$ $\bs{\uJ}\in\pnote$, there exists a sequence $\left(\bs{\uI_1},\bs{\uI_2},\ldots,\bs{\uI_r}\right)$ of essential $\bsets$ with $r\geq 2$, such that $\uJ=\bigcup_{i=1}^r \uI_i$ and for every $i\in\{2,3,\ldots,r\}$,
    \begin{equation}
        \uI_i\cap\left(\bigcup_{0<j<i}\uI_j\right)\neq\varnothing.
    \end{equation}
\end{cor}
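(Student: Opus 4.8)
The plan is to argue by strong induction on the cardinality $|\uJ|$, feeding on the structural description of $\text{Max}(\posj)$ supplied by \Cref{thm:max-structure} and \Cref{lem:max-structure}, together with a graph-theoretic reformulation of the chaining condition. I first record this reformulation. Given any finite family of subsystems $\{\uI_1,\ldots,\uI_r\}$, let its \emph{intersection graph} be the graph on vertices $\uI_1,\ldots,\uI_r$ with an edge between $\uI_i$ and $\uI_j$ whenever $\uI_i\cap\uI_j\neq\varnothing$. A short argument shows that the family can be ordered so as to satisfy the chaining condition $\uI_i\cap(\bigcup_{0<j<i}\uI_j)\neq\varnothing$ for all $i\geq 2$ if and only if this intersection graph is connected; moreover, when it is connected one may start the ordering from \emph{any} prescribed member (order the vertices by the step at which a breadth- or depth-first search reaches them, so that each newly listed $\uI_i$ is adjacent to a previously listed one, hence meets $\bigcup_{0<j<i}\uI_j$). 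The corollary is therefore equivalent to the assertion that every $\bs{\uJ}\in\pnote$ is the union of a family of at least two distinct essential $\bsets$ whose intersection graph is connected.

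To prove this, I would fix $\bs{\uJ}\in\pnote$. By \Cref{thm:max-structure}\,(iii), $n=|\text{Max}(\posj)|\geq 2$ and $\text{Max}(\posj)$ has the structure of \Cref{lem:max-structure}\,(ii); in particular one may pick two distinct members $\bs{\uK_1},\bs{\uK_2}\in\text{Max}(\posj)$ with $\uK_1\cup\uK_2=\uJ$ and $\uK_1\cap\uK_2\neq\varnothing$. Each $\bs{\uK_a}$ (for $a=1,2$) lies in $\pos=\ess\sqcup\pnote$ and satisfies $\uK_a\subsetneq\uJ$. If $\bs{\uK_a}\in\ess$, I take the one-element family $\{\uK_a\}$; if instead $\bs{\uK_a}\in\pnote$, then since $|\uK_a|<|\uJ|$ the induction hypothesis yields a family $\mathcal{F}_a$ of essential $\bsets$ with connected intersection graph and $\bigcup_{\uI\in\mathcal{F}_a}\uI=\uK_a$. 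In either case I obtain a family $\mathcal{F}_a$ of essential $\bsets$ covering $\uK_a$ with connected intersection graph. The recursion is well founded because each recursive call is applied to a strictly smaller subsystem lying again in $\pnote$, and $\pnote=\pos\setminus\ess$ so the recursion halts exactly when an essential $\bset$ is reached.

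It then remains to glue $\mathcal{F}_1$ and $\mathcal{F}_2$. Let $\mathcal{F}$ be the family obtained by listing the \emph{distinct} essential $\bsets$ occurring in $\mathcal{F}_1\cup\mathcal{F}_2$. Then $\bigcup_{\uI\in\mathcal{F}}\uI=\uK_1\cup\uK_2=\uJ$, and the intersection graph of $\mathcal{F}$ is connected: the subfamilies $\mathcal{F}_1$ and $\mathcal{F}_2$ are each connected by construction, and they are linked by at least one edge because $\uK_1\cap\uK_2\neq\varnothing$ forces some $\uI\in\mathcal{F}_2$ to meet $\uK_1=\bigcup_{\uI'\in\mathcal{F}_1}\uI'$, hence to meet some $\uI'\in\mathcal{F}_1$. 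Finally $\mathcal{F}$ has at least two members, since otherwise a single essential $\bset$ would equal $\uJ$, contradicting $\bs{\uJ}\in\pnote=\pos\setminus\ess$. Applying the reformulation of the first paragraph to $\mathcal{F}$ produces the desired chained sequence $(\bs{\uI_1},\ldots,\bs{\uI_r})$ with $r\geq 2$. The only genuinely delicate point is this gluing step: a naive concatenation of the two sequences produced by the induction need not satisfy the chaining condition at the junction, and it is precisely the passage to intersection graphs—which forgets the internal order and retains only connectivity—that removes this difficulty.
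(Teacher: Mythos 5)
Your proof is correct, and it shares the paper's core decomposition while assembling the final sequence by a genuinely different (and cleaner) mechanism. Like the paper, you invoke \Cref{thm:max-structure}\,(iii) and \Cref{lem:max-structure}\,(ii) to split a positive-but-not-essential $\bs{\uJ}$ into two overlapping positive $\bsets$ whose union is $\uJ$, and you recurse on the non-essential pieces until only essential $\bsets$ remain; the paper encodes exactly this recursion as a strict binary rooted tree whose leaves are the essential $\bsets$. Where you diverge is in producing the ordered sequence: the paper extracts it by an explicit traversal of that tree (start at a leaf, walk up to the first ancestor containing a new party, walk back down to a leaf containing it, and so on), whose correctness rests on an informally justified claim that such a descent is always possible. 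You instead observe that the chaining condition is equivalent to connectivity of the intersection graph of the family, reduce the whole problem to showing that the collected essential $\bsets$ cover $\uJ$ and have a connected intersection graph, and then get the ordering for free from a breadth-first search. The inductive gluing step — the two subfamilies are internally connected and are linked because $\uK_1\cap\uK_2\neq\varnothing$ forces a common party to lie in some member of each — is airtight, and the $r\geq 2$ claim follows correctly since a single essential $\bset$ equal to $\uJ$ would contradict $\bs{\uJ}\in\pnote$. The net effect is that the most delicate step of the paper's argument (the tree traversal) is replaced by a standard and fully rigorous graph-theoretic equivalence, at the modest cost of setting up the induction on $|\uJ|$, whose well-foundedness you correctly justify (positive two-party $\bsets$ are automatically completely essential, so the recursion bottoms out).
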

\begin{proof}
    Consider some $\bs{\uJ}\in\pnote$, by (iii) of \Cref{thm:max-structure} it follows that there is a collection of positive $\bsets$ $\left\{\bs{\uK_1},\bs{\uK_2},\ldots,\bs{\uK_n}\right\}$ with $n\geq 2$ such that for all  $i,j\in[n]$ with $i\neq j$, $\uK_i\cap\uK_j\neq\varnothing$ and $\uK_i\cup\uK_j=\uJ$. If any two of these $\bsets$ are essential the statement is proven, so suppose that this is not the case. We can then take any two $\bs{\uK_i}$ and repeat the same argument for each non-essential one. We can describe this process by a strict\footnote{\,We call a binary tree strict if each node has $0$ or $2$ children, but never $1$.} binary rooted tree where the root is $\bs{\uJ}$, each node is a positive $\bset$, and a node is a leaf if and only if it is essential. Furthermore, for any node which is not a leaf, the subsystems of its two children have non-empty intersection and their union is the subsystem of the node. 
    
    We can then construct the sequence $\left(\bs{\uI_1},\bs{\uI_2},\ldots,\bs{\uI_r}\right)$ from the leaves as follows. We start from an arbitrary leaf that we set to be $\bs{\uI_1}$. We then move upwards along the tree to the first node that contains a party $\ell\notin\uI_1$ (in this first step, simply its parent), and then downwards to an arbitrary leaf that contains $\ell$ (notice that because of the structure of the subsystems of the nodes this is always possible) and that we set to be $\bs{\uI_2}$. If $\uI_1\cup\uI_2\subset\uJ$, we then move upwards again from $\bs{\uI_2}$ to the first ancestor that contains a party $\ell'\notin\uI_1\cup\uI_2$,
    and downwards again to another leaf that contains $\ell'$ and we set to be $\uI_3$. Proceeding in this fashion until we have comprised all of $\uJ$ (which is guaranteed by the time we have used all the leaves), we obtain the desired sequence.
\end{proof}

Having completed the characterization of the various sets in \Cref{defi:special-sets}, we now proceed to describe the MI instances in the various $\bsets$. Given an arbitrary KC-PMI $\pmi$ and a $\bset$ which is either positive but not essential, or completely essential, or vanishing, this is trivial from its definition: All instances in a positive but not essential $\bset$ are in $\cpmi$ but not in $\ac$, all instances in a completely essential $\bset$ are in $\ac$, and all instances in a vanishing $\bset$ are in $\pmi$. However, for a $\bset$ which is essential but not completely essential, only some MI instances are in $\ac$, and similarly, for a partial $\bset$, only some MI instances are in $\pmi$. Our next goal is to determine these MI instances purely in terms of $\bsets$.

We begin by introducing a useful terminology. Given an arbitrary subsystem $\uI$ and an arbitrary pair of disjoint subsystems $\{\uJ,\uK\}$, we will say that this pair \textit{splits} $\uI$ if\,\footnote{\,Note that the subsystems $\uJ\cup\uK$ and $\uI$ need not coincide, or even nest, so in general the split corresponds to a tri-partition of $\uI$. However, in the actual application below, we will typically be interested in the case $\uJ\cup\uK \supset \uI$.}
\begin{equation}
\label{eq:split}
    \uI\cap\uJ\neq\varnothing\quad \text{and}\quad \uI\cap\uK\neq\varnothing.
\end{equation}
Correspondingly, we will say that $\uI$ \textit{is split} by the pair $\{\uJ,\uK\}$, and that an MI instance $\mi(\uJ:\uK)$ splits $\uI$, if $\uI$ is split by $\{\uJ,\uK\}$. Notice in particular that for any given $\bs{\uJ}$, $\mi\in\bs{\uJ}$ and $\uJ'\supset\uJ$, it follows that any $\mi'\in\bs{\uJ'}$ with $\mi'\succ\mi$ splits $\uJ$.

\begin{thm}
\label{thm:partition}
    For any \emph{KC-PMI} $\pmi$, any\, $\bs{\uJ}\!\in\!\ess\cup\parv$, and any\, $\mi\!\in\!\bs{\uJ}$:
    \begin{enumerate}[label={\emph{\footnotesize \roman*)}}]
        \item if $\bs{\uJ}\in\parv$, then $\mi\in\pmi$ if and only if it does not split any $\uK$ such that $\bs{\uK}\in\text{\emph{Max}}(\posj)$,
        \item if $\bs{\uJ}\in\ess$, then $\mi\in\ac$ if and only if it does not split any $\uK$ such that $\bs{\uK}\in\text{\emph{Max}}(\posj)$.
    \end{enumerate}
\end{thm}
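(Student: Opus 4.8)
The plan is to prove both parts at once, since they share the hypothesis about splitting and differ only in the conclusion drawn ($\mi\in\pmi$ versus $\mi\in\ac$). Write $\mi=\mi(\uJ_1:\uJ_2)$, so $\uJ_1\cup\uJ_2=\uJ$ with $\uJ_1\cap\uJ_2=\varnothing$. The elementary fact I will use repeatedly is that for any $\uK\subseteq\uJ$, the statement ``$\mi$ splits $\uK$'' is precisely the assertion that $(\uK\cap\uJ_1,\uK\cap\uJ_2)$ is a genuine bipartition of $\uK$; in that case $\mi(\uK\cap\uJ_1:\uK\cap\uJ_2)\in\bs{\uK}$ and $\mi(\uK\cap\uJ_1:\uK\cap\uJ_2)\preceq\mi$, with strict inequality whenever $\uK\subsetneq\uJ$.

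First I would dispose of the two ``only if'' directions together, by contraposition. Suppose $\mi$ splits some $\uK$ with $\bs{\uK}\in\text{Max}(\posj)$. Since $\bs{\uK}\in\pos$, the restriction $\mi(\uK\cap\uJ_1:\uK\cap\uJ_2)$ belongs to $\cpmi$; as it lies below $\mi$ and $\cpmi$ is an up-set, we get $\mi\in\cpmi$, which already excludes $\mi\in\pmi$ and settles (i). Moreover $\uK\subsetneq\uJ$ makes this restriction \emph{strictly} below $\mi$, so $\mi$ is not minimal in $\cpmi$ and hence $\mi\notin\ac$, settling (ii).

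The content is in the two ``if'' directions, and everything hinges on a single claim: \emph{if $\mi$ splits no $\uK$ with $\bs{\uK}\in\text{Max}(\posj)$, then every instance strictly below $\mi$ belongs to $\pmi$.} I would argue by contradiction. If some $\mi'=\mi(\uJ_1':\uJ_2')\prec\mi$ (oriented so that $\uJ_i'\subseteq\uJ_i$) were in $\cpmi$, then since $\ac$ generates $\cpmi$ there is $\mi''=\mi(\uJ_1'':\uJ_2'')\in\ac$ with $\mi''\preceq\mi'$, which we may orient so that $\uJ_i''\subseteq\uJ_i'\subseteq\uJ_i$. Set $\uI=\uJ_1''\cup\uJ_2''$; then $\bs{\uI}$ is essential, hence positive by \Cref{thm:positiveness}, and $\uI\subseteq\uJ_1'\cup\uJ_2'\subsetneq\uJ$, so $\bs{\uI}\in\posj$ and therefore $\uI\subseteq\uK$ for some $\bs{\uK}\in\text{Max}(\posj)$. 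But $\uJ_1''$ and $\uJ_2''$ are nonempty and contained in $\uJ_1$ and $\uJ_2$ respectively, so $\uK$ meets both $\uJ_1$ and $\uJ_2$, i.e.\ $\mi$ splits $\uK$ --- contradicting the hypothesis. This is the step I expect to be the main obstacle, as it is where \Cref{thm:positiveness} and the maximality defining $\text{Max}(\posj)$ must be combined correctly.

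Granting the claim, both conclusions are immediate. For (ii), $\bs{\uJ}\in\ess\subseteq\pos$ forces $\mi\in\cpmi$; combined with the vanishing of every instance below $\mi$, this makes $\mi$ minimal in $\cpmi$, i.e.\ $\mi\in\ac$. For (i), assume for contradiction that $\mi\in\cpmi$; by the claim $\mi$ is then minimal in $\cpmi$, so $\mi\in\ac$ and $\bs{\uJ}\in\ess\subseteq\pos$, contradicting $\bs{\uJ}\in\parv$; hence $\mi\in\pmi$. Finally, I would remark that the hypothesis $\bs{\uJ}\in\ess\cup\parv$ together with \Cref{thm:max-structure} rules out the overlapping configuration of \Cref{lem:max-structure}(ii) for $\text{Max}(\posj)$, but the argument above never invokes the internal structure of $\text{Max}(\posj)$, so no case split on it is needed.
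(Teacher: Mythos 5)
Your proof is correct and follows essentially the same route as the paper's: the ``only if'' direction is the same up-set argument, and the ``if'' direction likewise takes a putative element of $\ac$ strictly below $\mi$, invokes \Cref{thm:positiveness} to place it in a positive $\bset$, dominates that $\bset$ by an element of $\text{Max}(\posj)$, and concludes that $\mi$ would split it. The only difference is organizational --- you isolate a single claim (everything strictly below $\mi$ vanishes) from which both parts follow, whereas the paper runs the two cases separately --- which is a mild tidying rather than a different argument.
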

\begin{proof}
    i) One direction is obvious, if $\mi$ splits some $\uK$ with $\bs{\uK}\in\text{Max}(\posj)$, then there is some $\mi'\in\bs{\uK}$ such that $\mi'\prec\mi$, and since $\mi'\in\cpmi$, it must be $\mi\in\cpmi$. Conversely, consider some $\mi\in\bs{\uJ}$ that does not split any $\uK$ where $\bs{\uK}\in\text{Max}(\posj)$, and suppose that $\mi\in\cpmi$. It must then be that $\mi\notin\ac$, otherwise $\bs{\uJ}$ would be essential but by \Cref{thm:positiveness} any essential $\bset$ is positive and by assumption $\bs{\uJ}\in\parv$. If $\mi\notin\ac$, then there is some $\mi'\prec\mi$ such that $\mi'\in\ac$, and denoting by $\bs{\uK'}$ the $\bset$ that contains $\mi'$, it follows again by \Cref{thm:positiveness} that $\bs{\uK'}$ is positive. By the definition of $\text{Max}(\posj)$, there must then be some $\bs{\uK''}\in\text{Max}(\posj)$ such that $\bs{\uK'}\preceq\bs{\uK''}$. However, if $\mi'\in\bs{\uK'}$ it splits $\uK'$, and since $\uK'\subseteq\uK''\subseteq\uJ$, the order relation $\mi'\prec\mi$ is only possible if $\mi$ splits $\uK''$ (cf., the comment just before the statement of the theorem), which is a contradiction.

    ii) As before, one direction is obvious, if $\mi$ splits some $\uK$ with $\bs{\uK}\in\text{Max}(\posj)$, then there is some $\mi'\in\bs{\uK}$ such that $\mi'\prec\mi$, and since $\mi'\in\cpmi$, it follows that $\mi\notin\ac$. Conversely, consider some $\mi\in\bs{\uJ}$ that does not split any $\uK$ where $\bs{\uK}\in\text{Max}(\posj)$, and suppose that $\mi\in\cpmi$ (since we are assuming $\bs{\uJ}\in\ess$ and $\ess\subseteq\pos$ by \Cref{thm:positiveness}) but $\mi\notin\ac$. The proof then continues exactly as in the previous case.
\end{proof}

In light of this result, suppose that for a given KC-PMI $\pmi$ we know $\pos$ and we want to list all instances of the mutual information in $\pmi$ for a partial $\bset$ $\bs{\uJ}$ (or alternately all instances of the mutual information in $\ac$ if $\bs{\uJ}$ is essential). To this end, it is important to notice that for an arbitrary KC-PMI and an arbitrary subsystem $\uJ$ such that 
$\bs{\uJ}\in\ncess\cup\parv$, \Cref{thm:max-structure} does not imply that the union of all subsystems in the elements of $\text{Max}(\posj)$ is $\uJ$, i.e., in general we only have the inclusion 
\begin{equation}
\label{eq:max-inclusion}
    \bigcup_{\bs{\uK}\,\in\,\text{Max}(\posj)} \!\!\uK \;\; \subseteq \;\; \uJ.
\end{equation}
When equality holds, (ii) in \Cref{thm:max-structure} implies that the subsystems of the elements of $\text{Max}(\posj)$ form a non-trivial partition of $\uJ$.\footnote{\,By non-trivial here we mean that the partition contains more than one element, i.e., $n\geq 2$. This follows from the fact that by definition of $\posj$ we have the strict inclusion $\uK_i\subset\uJ$ for all $i\in [n]$.} 
On the other hand, if the inclusion in Eq.~\eqref{eq:max-inclusion} is strict, it is useful to consider a specific partition $\Gamma(\uJ)$ of $\uJ$ which takes the following form 
\begin{equation}
\label{eq:partition}
        \Gamma(\uJ)= \left\{ \uK_1,\uK_2,\ldots\uK_n,\ell_1,\ell_2,\ldots,\ell_{\tilde{n}} \right\},
\end{equation}
where $n\geq 0$ and $\tilde{n}=|\Delta|$ with
\begin{equation}
    \Delta=\,\uJ\,\setminus\,\bigcup_{i=1}^n\, \uK_i.
\end{equation}
Notice that we included here the possibility that $n=0$, corresponding to the case of $\bs{\uJ}\in\cess\cup\van$, where $\text{Max}(\posj)=\varnothing$, in which case Eq.~\eqref{eq:partition} reduces to a \textit{singleton partition}, where each element is a single party. Given an arbitrary $\bset$ $\bs{\uJ}\in\ncess\cup\parv$ then, the desired instances of the mutual information (as characterized by \Cref{thm:partition}) are then simply all the ones that can be obtained by collecting some elements of the partition into one argument, and all the others in the other argument, i.e., the those of the form
\begin{equation}
    \mi(\uI_1:\uI_2)\quad \text{with}\quad \uI_1\cup\uI_2=\uJ\quad \text{and}\quad \forall\,i\in[n],\;\; \uK_i\subseteq\uI_1\;\; \text{or}\;\; \uK_i\subseteq\uI_2.
\end{equation}

Having completed not just the classification of all types of the $\bsets$ but also the characterization of the instances of the mutual information in the various $\bsets$, we conclude this subsection by addressing a few further questions that one may ask about the reconstruction of KC-PMIs from $\bsets$ and the structure of the various sets in \Cref{defi:special-sets}. The focus of this subsection was on the reconstruction of a KC-PMI $\pmi$ from its set of positive or essential $\bsets$, and one may wonder if it might also be possible to recover $\pmi$ from the knowledge of only the completely essential ones. To see that in general this is not possible, it is sufficient to find an example of at least two distinct KC-PMIs with the same set $\cess$. The simplest examples appear already at $\N=3$, and we present one in Fig.~\ref{fig:N3-same-M}. 

We have also seen that for any KC-PMI there is a separation between the sets $\pos\cup\parv$ and $\van$ in the form of an up-set and its complementary down-set. It is then natural to ask if the order structure in $\lat{$\beta$}^{\N}$ might also be used to distinguish the elements of $\pos$ from those of $\parv$, and in particular, if $\pos$ and $\parv$ might be up-sets themselves.\footnote{\,Since they partition an up-set it is obvious that they cannot be down-sets. (The only exception is the trivial one where all parties are pairwise correlated so that $\pmi=\varnothing$ and $\pos(\ds)  = \mathfrak{B}_{\N}$, which is of course both an up-set and a down-set.)} The examples in Fig.~\ref{fig:N3-same-M} show that in general this is not the case, and likewise that $\ess$ in general is not an up-set. Finally, letting $\ket{\psi}_{0123}$ be a quantum state realizing one of the $\N=3$ KC-PMIs in Fig.~\ref{fig:N3-same-M}, we leave it as an exercise for the reader to verify that the $\N=4$ KC-PMI realized by the state $\ket{\psi}_{0123}\ket{0}_4$ provides an example where $\pnote$ is not an up-set.

\begin{figure}[tbp]
    \centering
    \begin{subfigure}{0.29\textwidth}
    \centering
    \begin{tikzpicture}[scale=0.5]
    \draw (-1.5,0) -- (1.5,0);
    \filldraw[Ccol] (-1.5,0) circle (30pt);
    \filldraw[Ccol] (1.5,0) circle (30pt);
    \filldraw[Ccol] (0,-4) circle (15pt);
    \node () at (-1.5,0) {{\footnotesize $1$}};
    \node () at (1.5,0) {{\footnotesize $2$}};
    \node () at (0,-4) {{\footnotesize $3$}};
    \node () at (0,-5) {};
    \end{tikzpicture}
    \subcaption[]{}
    \end{subfigure}
    \begin{subfigure}{0.69\textwidth}
    \centering
    \begin{tikzpicture}[scale=0.5]
    \node[fill=PminusEcol, align=center, rounded corners] (0123) at (0,4) {{\footnotesize $\bs{0123}$}};
    \node[fill=PminusEcol, align=center, rounded corners] (012) at (-6,2) {{\footnotesize $\bs{012}$}};
    \node[fill=PminusEcol, align=center, rounded corners] (013) at (-2,2) {{\footnotesize $\bs{013}$}};
    \node[fill=PminusEcol, align=center, rounded corners] (023) at (2,2) {{\footnotesize $\bs{023}$}};
    \node[fill=Qcol, align=center, rounded corners] (123) at (6,2) {{\footnotesize $\bs{123}$}};
    \node[fill=Mcol, align=center, rounded corners] (01) at (-7.5,0) {{\footnotesize $\bs{01}$}};
    \node[fill=Mcol, align=center, rounded corners] (02) at (-4.5,0) {{\footnotesize $\bs{02}$}};
    \node[fill=Mcol, align=center, rounded corners] (03) at (-1.5,0) {{\footnotesize $\bs{03}$}};
    \node[fill=Mcol, align=center, rounded corners] (12) at (1.5,0) {{\footnotesize $\bs{12}$}};
    \node[fill=Vcol, align=center, rounded corners] (13) at (4.5,0) {{\footnotesize $\bs{13}$}};
    \node[fill=Vcol, align=center, rounded corners] (23) at (7.5,0) {{\footnotesize $\bs{23}$}};
    \end{tikzpicture}
    \subcaption[]{}
    \end{subfigure}
    \par\bigskip\bigskip
    \begin{subfigure}{0.29\textwidth}
    \centering
    \begin{tikzpicture}[scale=0.5]
    \draw (-1.5,0) -- (1.5,0);
    \draw[dotted] (1.5,0) -- (0,-1);
    \draw[dotted] (0,-3) -- (0,-1);
    \draw[dotted] (-1.5,0) -- (0,-1);
    \filldraw[Ccol] (-1.5,0) circle (30pt);
    \filldraw[Ccol] (1.5,0) circle (30pt);
    \filldraw[Ccol] (0,-2.5) circle (20pt);
    \node () at (-1.5,0) {{\footnotesize $1$}};
    \node () at (1.5,0) {{\footnotesize $2$}};
    \node () at (0,-2.5) {{\footnotesize $3$}};
    \node () at (0,-3.5) {};
    \end{tikzpicture}
    \subcaption[]{}
    \end{subfigure}
    \begin{subfigure}{0.69\textwidth}
    \centering
    \begin{tikzpicture}[scale=0.5]
    \node[fill=PminusEcol, align=center, rounded corners] (0123) at (0,4) {{\footnotesize $\bs{0123}$}};
    \node[fill=PminusEcol, align=center, rounded corners] (012) at (-6,2) {{\footnotesize $\bs{012}$}};
    \node[fill=PminusEcol, align=center, rounded corners] (013) at (-2,2) {{\footnotesize $\bs{013}$}};
    \node[fill=PminusEcol, align=center, rounded corners] (023) at (2,2) {{\footnotesize $\bs{023}$}};
    \node[fill=EminusMcol, align=center, rounded corners] (123) at (6,2) {{\footnotesize $\bs{123}$}};
    \node[fill=Mcol, align=center, rounded corners] (01) at (-7.5,0) {{\footnotesize $\bs{01}$}};
    \node[fill=Mcol, align=center, rounded corners] (02) at (-4.5,0) {{\footnotesize $\bs{02}$}};
    \node[fill=Mcol, align=center, rounded corners] (03) at (-1.5,0) {{\footnotesize $\bs{03}$}};
    \node[fill=Mcol, align=center, rounded corners] (12) at (1.5,0) {{\footnotesize $\bs{12}$}};
    \node[fill=Vcol, align=center, rounded corners] (13) at (4.5,0) {{\footnotesize $\bs{13}$}};
    \node[fill=Vcol, align=center, rounded corners] (23) at (7.5,0) {{\footnotesize $\bs{23}$}};
    \end{tikzpicture}
    \subcaption[]{}
    \end{subfigure}
    \caption{Two distinct $\N=3$ holographic KC-PMIs with the same set of completely essential $\bsets$. (a),(c) holographic configurations presented as in Fig.~\ref{fig:N4-example}, in particular the region 3 is correlated with the union of 1 and 2 in (c) but not in (a). (b),(d) the lattice $\lat{$\beta$}^3$ showing the corresponding KC-PMIs, where we have omitted $\bs{\ell}$ for all parties, and the minimum $\bs{\varnothing}$.}
    \label{fig:N3-same-M}
\end{figure}

\subsection{Definition of the correlation hypergraph and its main properties}
\label{subsec:hypergraph-rep}

In the previous subsection we have demonstrated how an arbitrary KC-PMI can be described purely in terms of positive or essential $\bsets$. We now introduce a representation of a KC-PMI in terms of a certain hypergraph, and show how to translate most of the previous results into this language. As we will explain, this representation will make transparent the generalization, to arbitrary KC-PMIs, of the relationship between correlation and connectivity highlighted for holographic graph models in \Cref{subsec:graph_review}. We begin by stating the definition of the correlation hypergraph, which is based on positive $\bsets$.

\begin{defi}[Correlation hypergraph of a KC-PMI]
\label{defi:hypergarph_rep}
    For a given $\N$-party \emph{KC-PMI} $\pmi$, the correlation hypergraph of $\pmi$ is the hypergraph $\hp =(V,E)$ 
    with $\N+1$ vertices $V=\{v_\ell,\ell\in\nsp\}$ and hyperedges\,\footnote{\,We call any $e_{\uJ}$ a hyperedge even when $|e_{\uJ}|=2$.
    Notice that the one-to-one correspondence between hyperedges ($e_{\uJ}$) and subsystems ($\uJ$) is inherited from the one-to-one correspondence between vertices ($v_\ell$) and parties ($\ell$).} $E=\{e_{\uJ},\, \bs{\uJ}\in\pos\}$ where $e_{\uJ}=\{v_\ell,\,\ell\in\uJ\}$.
\end{defi}

In other words, given an arbitrary KC-PMI, its correlation hypergraph is constructed starting with a set of vertices corresponding to the parties (including the purifier), and by adding a hyperedge for each positive $\bset$ $\bs{\uI}$ comprising the vertices corresponding to all singletons in $\uI$.

Two simple examples of this representation are the 0-dimensional and the full dimensional KC-PMIs, whose corresponding faces of the SAC$_{\N}$ are respectively the origin of entropy space and the interior of the cone. In the first case, the correlation hypergraph is the \textit{trivial} hypergraph with $\N+1$ vertices and no hyperedges, while in the second, it is the complete hypergraph with $\N+1$ vertices. Another simple example is the hypergraph with four vertices representing the $\N=3$ KC-PMI illustrated in Fig.~\ref{fig:PT3}, whose hyperedges correspond to the yellow and red $\bsets$.

To proceed further, we will need a few more standard definitions from the theory of graphs and hypergraphs. Given an arbitrary hypergraph $\gf{H}=(V,E)$, and an arbitrary subset of vertices $V'\subseteq V$, the \textit{sub-hypergraph} of $\gf{H}$ \textit{induced} by $V'$ is the hypergraph $\gf{H}'=(V',E')$ with 
\begin{equation}
    E'=\{e\in E,\, e\subseteq V'\}.
\end{equation}
More generally, a \textit{sub-hypergraph} of $\gf{H}$ is any hypergraph such that its sets of vertices and hyperedges are subsets of those of $\gf{H}$. 

For a given KC-PMI $\pmi$, we now associate to an arbitrary subsystem $\uJ$ a certain sub-hypergraph of $\hp$, which we denote by $\subh{\uJ}$ and construct as follows. If $\bs{\uJ}\in\pos$, the subsystem $\uJ$ corresponds to a hyperedge $e_{\uJ}$, and $\subh{\uJ}$ is obtained by first extracting the sub-hypergraph of $\hp$ induced by the vertices in $e_{\uJ}$, and then deleting $e_{\uJ}$. On the other hand, if $\bs{\uJ}\notin\pos$, $\subh{\uJ}$ is simply the sub-hypergraph of $\hp$ induced by the vertices in $e_{\uJ}$. In summary, we define $\subh{\uJ}=(V',E')$ with
\begin{equation}
        V'=\{v_\ell\in V,\; \ell\in\uJ\}, \qquad E'=\{e_{\uK}\in E,\; \uK\subset\uJ\}.
\end{equation}
Notice in particular that the inclusion in the definition of $E'$ is strict. This choice is motivated by the fundamental role played by $\text{Max}(\posj)$ in \Cref{thm:max-structure}, which will be rephrased in the language of the correlation hypergraph below.

We will also need a notion of \textit{connectivity}. Following \cite{Bretto} we define an open path in an arbitrary hypergraph $\gf{H}$ as a vertex-hyperedge alternating sequence 
\begin{equation}
    v^1,e^1,v^2,e^2,\ldots,e^{m-1},v^m
\end{equation}
such that all vertices are distinct, all hyperedges are distinct, and $v_i,v_{i+1}\in e_i$ for all $i\in [m]$. We will say that a hypergraph is \textit{connected} if there is a path from any vertex to any other vertex, otherwise it is \textit{disconnected}. A hypergraph with a single vertex is connected, even if it does not have any hyperedge.\footnote{\,In fact, notice that even including in $\mathfrak{B}_\N$ the trivial $\bsets$ in Eq.~\eqref{eq:trivial_bsets}, the correlation hypergraph has no loops (i.e., a hyperedge connecting a vertex to itself), since for any $\bs{\uI}\in\pos$ we necessarily have $|\uI|\geq 2$.} Finally, analogously to the previous subsection, given an arbitrary hyperedge $e_{\uI}$ and a pair of disjoint subsystems $\{\uJ,\uK\}$, we will say that the pair \textit{splits} $e_{\uI}$ (or equivalently, that $e_{\uI}$ is split by this pair), if $\{\uJ,\uK\}$ splits $\uI$, and we will use the same terminology for an instance of the mutual information. 

With these definitions we can now rephrase the content of \Cref{thm:max-structure} and \Cref{thm:partition} from the previous subsection as follows.
\begin{thm}
\label{thm:connectivity_of_H} 
    For any \emph{KC-PMI}, and any subsystem $\uJ\subseteq\nsp$
    \begin{enumerate}[label={\emph{\footnotesize \roman*)}}]
        \item $\bs{\uJ}\in\van$ if and only if $\hp$ has no $e_{\uJ}$ hyperedge and $\subh{\uJ}$ is trivial,
        \item $\bs{\uJ}\in\cess$ if and only if $e_{\uJ}$ is a hyperedge of $\hp$ and $\subh{\uJ}$ is trivial,
        \item $\bs{\uJ}\in\parv$ if and only if $\hp$ has no $e_{\uJ}$ hyperedge, $\subh{\uJ}$ is disconnected, and it is non-trivial. Furthermore for any $\mi\in\bs{\uJ}$, $\mi\in\pmi$ if and only if it does not split any hyperedge of $\subh{\uJ}$,
        \item $\bs{\uJ}\in\ncess$ if and only if $e_{\uJ}$ is a hyperedge, $\subh{\uJ}$ is disconnected, and it is non-trivial. Furthermore for any $\mi\in\bs{\uJ}$, $\mi\in\ac$ if and only if it does not split any hyperedge of $\subh{\uJ}$,
        \item $\bs{\uJ}\in\pnote$ if and only if $|\uJ|>2$ and $\subh{\uJ}$ is connected.
    \end{enumerate}
\end{thm}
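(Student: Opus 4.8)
The plan is to reduce everything to the two already-established theorems \Cref{thm:max-structure} and \Cref{thm:partition} by establishing a single bridge: the dictionary between $\text{Max}(\posj)$ and the maximal hyperedges of $\subh{\uJ}$. First I would record the elementary consequence of \Cref{defi:hypergarph_rep} and the definition of $\subh{\uJ}$ that the hyperedges of $\subh{\uJ}$ are in bijection with the positive $\bsets$ strictly below $\bs{\uJ}$, i.e.\ with the elements of $\posj$. Moreover every such hyperedge $e_{\uK}$ satisfies $\uK\subseteq\uK_i$ for some $\bs{\uK_i}\in\text{Max}(\posj)$, so the maximal hyperedges of $\subh{\uJ}$ are precisely the $e_{\uK_i}$ with $\bs{\uK_i}\in\text{Max}(\posj)$, and the set of vertices covered by hyperedges is exactly $\bigcup_i\uK_i$.

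Next I would prove the connectivity dictionary, in three parts. The trivial case is immediate: $\subh{\uJ}$ has no hyperedge if and only if $\posj=\varnothing$, i.e.\ $\text{Max}(\posj)=\varnothing$. If instead $\text{Max}(\posj)$ has the structure (ii) of \Cref{lem:max-structure} (pairwise intersecting with $\uK_i\cup\uK_j=\uJ$), then any two vertices of $\uJ$ lie in maximal hyperedges that share a vertex, producing a path, so $\subh{\uJ}$ is connected and non-trivial. Conversely, if the $\uK_i$ are pairwise disjoint (structure (i), or the degenerate case $n=1$ with $\uK_1\subsetneq\uJ$), then no hyperedge bridges distinct $\uK_i$'s and every vertex of $\Delta=\uJ\setminus\bigcup_i\uK_i$ is isolated, so $\subh{\uJ}$ is disconnected and non-trivial; since \Cref{lem:max-structure} leaves no other possibility when $\text{Max}(\posj)\neq\varnothing$, the converses hold as well.

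With this dictionary, statements (i)--(v) follow by pairing it with \Cref{thm:max-structure} and the trivial remark that $e_{\uJ}$ is a hyperedge of $\hp$ precisely when $\bs{\uJ}\in\pos$ (present in the $\cess,\ncess,\pnote$ cases, absent in $\van,\parv$). Concretely, trivial $\subh{\uJ}$ corresponds to $\bs{\uJ}\in\van\cup\cess$, split by the presence of $e_{\uJ}$ into (i) and (ii); disconnected non-trivial $\subh{\uJ}$ corresponds to $\bs{\uJ}\in\ncess\cup\parv$, split into (iii) and (iv); and connected $\subh{\uJ}$ corresponds to $\bs{\uJ}\in\pnote$ via (iii) of \Cref{thm:max-structure}. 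For (v) I would note that connectivity forces non-triviality as soon as $|\uJ|\geq 2$, and that the hypothesis $|\uJ|>2$ excludes the degenerate cases where connectivity is uninformative: $|\uJ|=1$ gives a single connected vertex, while $|\uJ|=2$ gives at most a single minimal positive $\bset$, hence $\cess$ rather than $\pnote$. (In the forward direction, the structure (ii) with $n\geq 2$ forces $|\uJ|\geq 3$.)

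Finally, for the ``Furthermore'' clauses of (iii) and (iv) I would invoke \Cref{thm:partition} after translating its splitting condition into hypergraph language. The point is that an MI instance $\mi(\uI_1:\uI_2)$ splits some hyperedge of $\subh{\uJ}$ if and only if it splits some \emph{maximal} one: if $\uK\subseteq\uK_i$ is split, then $\uK_i\cap\uI_1\supseteq\uK\cap\uI_1\neq\varnothing$ and likewise for $\uI_2$, so $\uK_i$ is split too. Hence ``does not split any hyperedge of $\subh{\uJ}$'' coincides with ``does not split any $\uK$ with $\bs{\uK}\in\text{Max}(\posj)$,'' and \Cref{thm:partition} yields the claimed membership in $\pmi$ (for $\parv$) or in $\ac$ (for $\ncess$). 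The main thing to get right is the bridge lemma reducing connectivity to the maximal hyperedges, in particular the bookkeeping around the isolated vertices of $\Delta$ and the $n=1$ corner case; once that is secured, the remainder is a routine match against the previously proved theorems.
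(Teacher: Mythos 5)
Your proof is correct and follows essentially the same route as the paper's: the paper's own (much terser) argument rests on exactly the dictionary you establish --- $e_{\uJ}$ is a hyperedge iff $\bs{\uJ}\in\pos$, maximal hyperedges of $\subh{\uJ}$ correspond to $\text{Max}(\posj)$, and connectivity of $\subh{\uJ}$ corresponds to structure (ii) of \Cref{lem:max-structure} --- before matching against \Cref{thm:max-structure} and \Cref{thm:partition} and disposing of the $|\uJ|\leq 2$ corner cases. You simply spell out the bookkeeping (isolated vertices of $\Delta$, the $n=1$ case, and the reduction of ``splits some hyperedge'' to ``splits some maximal hyperedge'') that the paper leaves implicit.
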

\begin{proof} 
    This theorem is a translation of \Cref{thm:max-structure} and \Cref{thm:partition} into the language of the correlation 
    hypergraph representation. It follows by noticing that by \Cref{defi:hypergarph_rep}, $\hp$ has a hyperedge $e_{\uJ}$ if and only if $\bs{\uJ}\in\pos$, and that the maximal hyperedges of $\subh{\uJ}$ (i.e., those that are not contained in any other hyperedge of $\subh{\uJ}$) correspond to the elements of $\text{Max}(\posj)$. In particular, the fact that $\subh{\uJ}$ is connected if and only if $|\uJ|>2$ and $\bs{\uJ}\in\pnote$ is implied by the fact that in this case (and only in this case) the subsystems of the elements of $\text{Max}(\posj)$ intersect pairwise, and the union of any two such subsystems is $\uJ$, guaranteeing that there is path between any two vertices of $\hp$. The condition $|\uJ|>2$ is required to exclude two special cases. If $|\uJ|=1$ then $\bs{\uJ}\in\van$ and $\subh{\uJ}$ is a single vertex without hyperedges, though it is connected by definition. On the other hand, if $|\uJ|=2$, $\bs{\uJ}$ can only be vanishing or completely essential, i.e., $\bs{\uJ}\in\van\cup\cess$.
\end{proof}

We have mentioned before that a particularly convenient way of listing the relevant instances of mutual information for a $\bset$ $\bs{\uJ}$ in $\parv$ or $\ess$ is the partition $\Gamma(\uJ)$ given in Eq.~\eqref{eq:partition}. This also naturally translates into the language of the correlation hypergraph representation of a KC-PMI.

\begin{cor}
\label{cor:connected-components}
    For any \emph{KC-PMI} $\pmi$, and any $\uJ$ such that $\bs{\uJ}\notin\pnote$, or equivalently $\bs{\uJ}\in\van\cup\parv\cup\ess$, the connected components of $\subh{\uJ}$ are the sub-hypergraphs of $\hp$ induced by the elements of the partition $\Gamma(\uJ)$ given in Eq.~\eqref{eq:partition}.\footnote{\,Technically, a sub-hypergraph of $\hp$ is not induced by a subsystem $\uJ$ but by all the vertices $v_\ell$ labeled by the parties $\ell\in\uJ$. However, for simplicity, we will occasionally use this terminology.}
\end{cor}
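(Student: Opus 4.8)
The plan is to read the corollary off from \Cref{thm:max-structure} together with \Cref{defi:hypergarph_rep}, treating separately the blocks $\uK_i$ coming from $\text{Max}(\posj)$ and the singleton blocks in $\Delta$. The one structural fact I would establish first is that, for $\bs{\uJ}\notin\pnote$, the collection $\Gamma(\uJ)$ is a genuine partition of $\uJ$ into pairwise disjoint blocks. When $n=|\text{Max}(\posj)|\geq 2$, part (iii) of \Cref{thm:max-structure} forbids the overlapping configuration (ii) of \Cref{lem:max-structure}, so we are necessarily in case (i) and the $\uK_i$ are pairwise disjoint; when $n\leq 1$ this is vacuous. The remainder $\Delta=\uJ\setminus\bigcup_i\uK_i$ is then partitioned into singletons, and the blocks of $\Gamma(\uJ)$ are disjoint and cover $\uJ$.

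Next I would show that every hyperedge of $\subh{\uJ}$ sits inside a single block $\uK_i$. By \Cref{defi:hypergarph_rep} a hyperedge of $\subh{\uJ}$ is some $e_{\uK}$ with $\uK\subset\uJ$ and $\bs{\uK}\in\pos$, hence $\bs{\uK}\in\posj$; choosing a maximal element $\bs{\uK_i}\in\text{Max}(\posj)$ above it yields $\uK\subseteq\uK_i$. Two consequences follow at once. First, since every hyperedge lies in some $\uK_i$ and the blocks are disjoint, no hyperedge of $\subh{\uJ}$ joins two distinct blocks; in particular each vertex $v_{\ell_j}$ with $\ell_j\in\Delta$ is isolated and forms its own connected component, matching the singleton block it represents (connected by the single-vertex convention). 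Second, the blocks $\uK_i$ are therefore unions of connected components.

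To finish I would check that each block $\uK_i$ induces exactly one connected component, and that this component coincides with the sub-hypergraph of $\hp$ induced by $\uK_i$. Connectivity is immediate: because $\uK_i\subset\uJ$ strictly and $\bs{\uK_i}\in\pos$, the hyperedge $e_{\uK_i}$ itself belongs to $\subh{\uJ}$ and contains every $v_\ell$ with $\ell\in\uK_i$ (and $|\uK_i|\geq 2$ since $\bs{\uK_i}\in\pos$), so any two such vertices are joined by the length-one open path through $e_{\uK_i}$. The identification with the $\hp$-induced sub-hypergraph is the only bookkeeping point: the edge set of $\subh{\uJ}$ uses strict inclusion $\uK\subset\uJ$, but for $\uK\subseteq\uK_i\subsetneq\uJ$ this strictness is automatic, so the hyperedges of $\subh{\uJ}$ inside $\uK_i$ are exactly $\{e_{\uK}:\uK\subseteq\uK_i,\ \bs{\uK}\in\pos\}$, which is precisely the sub-hypergraph of $\hp$ induced by $\uK_i$ (now including the spanning edge $e_{\uK_i}$). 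The degenerate cases $\bs{\uJ}\in\van$ and $\bs{\uJ}\in\cess$ (both with $n=0$) reduce to the trivial $\subh{\uJ}$ of \Cref{thm:connectivity_of_H}(i)--(ii), where $\Gamma(\uJ)$ is the singleton partition and each vertex is its own component.

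Since the statement is essentially a repackaging of the earlier results, I do not expect a genuine obstacle; the only step demanding care is the disjointness of the blocks, which is exactly where the hypothesis $\bs{\uJ}\notin\pnote$ enters (via \Cref{thm:max-structure}(iii)) to exclude the mutually overlapping case, together with the strict-versus-nonstrict inclusion bookkeeping needed to match $\subh{\uJ}$ against $\hp$ on each block.
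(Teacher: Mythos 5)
Your proof is correct and follows exactly the route the paper intends: the paper's own "proof" merely states that the corollary is a straightforward application of \Cref{thm:max-structure} (together with \Cref{thm:connectivity_of_H}) and leaves the details as an exercise, and the details you supply — disjointness of the blocks of $\Gamma(\uJ)$ from \Cref{thm:max-structure}(iii), containment of every hyperedge of $\subh{\uJ}$ in a single maximal block, connectivity of each block via the spanning hyperedge $e_{\uK_i}$, and the strict-versus-nonstrict inclusion bookkeeping — are precisely the intended ones. No gaps.
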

\begin{proof}
    This is another straightforward application of \Cref{thm:max-structure} and \Cref{thm:connectivity_of_H}, and we leave the details as an exercise for the reader.
\end{proof}

Let us now make a few observations about the natural interpretation that \Cref{thm:connectivity_of_H} and \Cref{cor:connected-components} provide to the correlation hypergraph representation of a KC-PMI, and highlight some of the remarkable similarities with the results reviewed in \S\ref{subsec:graph_review} for holographic graph models. It follows immediately from \Cref{thm:connectivity_of_H} that $\hp$ is connected if and only if $\bs{\nsp}$ is positive. When $\bs{\nsp}$ is not positive, then by \Cref{cor:connected-components} the connected components of $\subh{\nsp}$ are the elements of the partition $\Gamma(\nsp)$, and the instances of mutual information in $\bs{\nsp}$ that vanish are, again by \Cref{thm:connectivity_of_H}, the ones that do not split any subsystem in $\Gamma(\nsp)$. For a KC-PMI that is realizable by a graph model, it is straightforward to see that the same statements are true for any sensible\footnote{\,The only (irrelevant) exception is the case where a graph model is disconnected and there are connected components which do not contain any boundary vertex (and therefore do not contribute to any min-cut).} realization. The elements of $\bs{\nsp}$ are the maximal elements of the MI-poset, and they correspond to the entropies of the various (non-empty) subsystems 
\begin{equation}
    \mi(\,\uJ:\comp{\uJ}\,)=\ent_{\uJ}+\ent_{\comp{\uJ}}-\ent_{\nsp}=2\,\ent_{\uJ},
\end{equation}
where we used $\ent_{\nsp}=\ent_{\varnothing}=0$. By the down-set structure of $\pmi$, any subset of $\Gamma(\nsp)$ also corresponds to the decomposition of a non-positive $\bset$, and it follows that $\ent_{\uJ}=0$ for each element of $\Gamma(\nsp)$. Suppose now that $\pmi$ can be realized by a quantum state. Then, since the entropy of a quantum system vanishes if and only if the state is pure, it follows that any state realizing $\pmi$ must factorize, and that each factor corresponds to an element of $\Gamma(\nsp)$. In other words, if $\hp$ is disconnected, and $\pmi$ is realizable, the connected components of $\hp$ correspond to the factors of any state  realizing $\pmi$.

Notice that while we have presented this argument in the simple case where the full hypergraph is disconnected, for KC-PMIs which can be realized by a density matrix, we can apply the same logic to all of its marginals: If a marginal factorizes, by \Cref{thm:connectivity_of_H} and \Cref{cor:connected-components} the factors correspond to the connected components of $\subh{\uJ}$. We can then interpret the hyperedges of $\hp$ as representing the correlations among the various subsystems. For a KC-PMI that is realizable by a graph model $\gf{G}$, it follows from \Cref{cor:ew-decomposition} that a hyperedge $e_{\uJ}$ of the correlation hypergraph corresponds to a subsystem $\uJ$ whose minimal min-cut induces a connected subgraph $\gf{G}_{\uJ}$, while $\subh{\uJ}$ describes the internal structure of $\gf{G}_{\uJ}$ in terms of minimal min-cuts for smaller subsystems. The various statements of \Cref{thm:connectivity_of_H} then, applied to $\gf{G}$, include the relation between connectivity and sign of the mutual information described by \Cref{cor:ew-decomposition}, and further characterize the ``essential'' subsystems in complete analogy with the holographic set-up shown in Fig.~\ref{fig:ew-crossing-hol}.

In fact, we can go even further, and for an entropy vector $\vec{\ent}\in\text{int}(\face_{\pmi})$, use this analogy to associate to each hyperedge in $\hp$ the corresponding component of $\vec{\ent}$ (as for min-cuts on graphs and entanglement wedges in holography). The following result clarifies the meaning of this correspondence.

\begin{thm}
\label{thm:entropy-relations}
    For any \emph{KC-PMI} $\pmi$, the space of solutions to the following system of equations
    \begin{align}
        & \ent_{\uJ}=\sum_{\uK\,\in\,\Gamma(\uJ)} \ent_{\uK}  \qquad \forall\, \uJ\;\; \text{such that}\;\; \bs{\uJ}\notin\pos \label{eq:entropy-relations}\\
        & \ent_{\uJ}=\ent_{\uJ^c} \qquad\qquad\quad\; \forall\,\uJ\subseteq \nsp \label{eq:purity2}
    \end{align}
    is the subspace $\mathbb{S}_{\pmi}$ of $\pmi$.
\end{thm}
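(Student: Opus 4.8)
The plan is to prove the claim by showing that the linear system \eqref{eq:entropy-relations}--\eqref{eq:purity2} and the defining system \eqref{eq:pmi-equations}--\eqref{eq:purity} of $\mathbb{S}_{\pmi}$ have the same solution space. The purity conditions \eqref{eq:purity} and \eqref{eq:purity2} are identical, so it suffices to show that, as linear forms, the family $\{\mi=0:\mi\in\pmi\}$ and the family of additivity relations \eqref{eq:entropy-relations} cut out the same subspace; I would establish this by proving that each family is a set of linear consequences of the other. Throughout I use that $\bs{\uJ}\notin\pos$ forces $\bs{\uJ}\in\van\cup\parv$ (since $\ess\subseteq\pos$ by \Cref{thm:positiveness}), whence by \Cref{thm:max-structure} and \Cref{lem:max-structure} the maximal elements $\uK_1,\dots,\uK_n$ of $\text{Max}(\posj)$ are pairwise disjoint and $\Gamma(\uJ)$ is a genuine partition of $\uJ$.

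The technical core is a \emph{restriction-stability} property of $\Gamma$: if $\uJ'$ is a union of at least two parts of $\Gamma(\uJ)$, then $\bs{\uJ'}\notin\pos$ and $\Gamma(\uJ')=\{\uK\in\Gamma(\uJ):\uK\subseteq\uJ'\}$. I would prove this from the fact that $\text{Max}(\posj)$ is a pairwise-disjoint antichain of positive subsystems: any positive $\uL\subseteq\uJ'$ is contained in some maximal $\uK_m\in\text{Max}(\posj)$, and since the parts are disjoint and $\uL\neq\varnothing$ this forces the whole part $\uK_m\subseteq\uJ'$ and then $\uL=\uK_m$ by maximality; in particular a union of two or more distinct parts can never be a positive $\bset$, and the maximal positive subsystems below $\uJ'$ are exactly the parts of $\Gamma(\uJ)$ lying inside $\uJ'$. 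This is the step I expect to demand the most care, since it requires combining the antichain-and-disjointness structure of \Cref{lem:max-structure} with the behaviour of $\text{Max}(\posj)$ under restriction, together with the degenerate bookkeeping ($\uJ'$ equal to a single part, or $\uJ'$ built only from singletons so that $\bs{\uJ'}\in\van$).

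For the inclusion $\mathbb{S}_{\pmi}\subseteq$ (solutions of \eqref{eq:entropy-relations}), fix $\bs{\uJ}\notin\pos$, order the parts of $\Gamma(\uJ)$ as $P_1,\dots,P_m$, and telescope. Each instance $\mi(P_k:P_{k+1}\cup\cdots\cup P_m)$ belongs to $\bs{P_k\cup\cdots\cup P_m}$ and splits no non-singleton part, so by restriction-stability and \Cref{thm:partition} (or trivially when that $\bset$ is vanishing) it lies in $\pmi$ and hence vanishes on $\mathbb{S}_{\pmi}$. Summing the chain of identities $\ent_{P_k\cup\cdots\cup P_m}=\ent_{P_k}+\ent_{P_{k+1}\cup\cdots\cup P_m}$ yields $\ent_{\uJ}=\sum_{i}\ent_{P_i}$, which is exactly \eqref{eq:entropy-relations}.

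For the reverse inclusion, take any $\mi(\uJ':\uK')\in\pmi$ and set $\uJ=\uJ'\cup\uK'$; then $\bs{\uJ}\notin\pos$, since a positive $\bset$ contains no vanishing instance. By \Cref{thm:partition} the pair $\{\uJ',\uK'\}$ splits no part of $\Gamma(\uJ)$, so it is a coarsening of $\Gamma(\uJ)$ and each of $\uJ',\uK'$ is a union of parts. Restriction-stability then provides the instances of \eqref{eq:entropy-relations} giving $\ent_{\uJ'}=\sum_{\uK\subseteq\uJ'}\ent_{\uK}$ and $\ent_{\uK'}=\sum_{\uK\subseteq\uK'}\ent_{\uK}$ (trivially when $\uJ'$ or $\uK'$ is a single part), and adding these and subtracting \eqref{eq:entropy-relations} for $\uJ$ gives $\mi(\uJ':\uK')=\ent_{\uJ'}+\ent_{\uK'}-\ent_{\uJ}=0$. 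The two inclusions, together with the shared purity conditions, establish that the solution space of \eqref{eq:entropy-relations}--\eqref{eq:purity2} is precisely $\mathbb{S}_{\pmi}$.
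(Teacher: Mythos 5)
Your proof is correct and follows essentially the same route as the paper's: both directions reduce to the observation that a vanishing MI instance in $\bs{\uJ}$ must coarsen the partition $\Gamma(\uJ)$, so that the equation $\mi=0$ is a linear combination of the additivity relations for $\uJ$ and for the two unions of parts forming its arguments. Your explicit restriction-stability lemma and the telescoping sum make rigorous what the paper compresses into the remark that one ``can use the same decomposition'' for sub-unions of parts (and the converse direction it leaves as ``very similar reasoning''), but the underlying argument is the same.
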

\begin{proof}
    The subspace $\mathbb{S}_{\pmi}$ of an arbitrary PMI $\pmi$ was defined in \S\ref{subsec:KC-review} as the space of solutions to the system of equations given in Eq.~\eqref{eq:pmi-equations} and Eq.~\eqref{eq:purity}. We then need to show that each of these equations is a linear combination of the equations in Eq.~\eqref{eq:entropy-relations} and Eq.~\eqref{eq:purity2}, and vice versa. Notice that Eq.~\eqref{eq:purity} and \eqref{eq:purity2} are exactly the same, so we only need to focus on Eq.~\eqref{eq:pmi-equations} and \eqref{eq:entropy-relations}.

    In the first direction, consider a KC-PMI $\pmi$ and an instance of  mutual information $\mi\in\pmi$. By \Cref{defi:special-sets}, the $\bset$ $\bs{\uJ}$ that contains $\mi$ is in $\parv\cup\van$. Consider the partition $\Gamma(\uJ)$ given in Eq.~\eqref{eq:partition}, and let us write it as $\Gamma(\uJ)=\{\uI_k\}_{k\in [m]}$, where $m=n+\tilde{n}\geq 1$ (so that $X_k$ can be a single party).
    By \Cref{thm:partition} then, any $\mi\in\bs{\uJ}$ such that $\mi\in\pmi$ takes the form $\mi(\;\bigcup_{a\in A}\uI_a : \bigcup_{b\in B}\uI_b \;)$
    for some non-empty $A,B\subset [m]$ such that $A\cap B=\varnothing$ and $A\cup B=[m]$. We can then rewrite Eq.~\eqref{eq:pmi-equations} as follows
    \begin{align}
    \label{eq:combination1}
        0=\mi\left(\bigcup_{a\in A}\uI_a : \bigcup_{b\in B}\uI_b \right) = \; &
        \ent\left(\bigcup_{a\in A}\uI_a\right)+\ent \left(\bigcup_{b\in B}\uI_b\right) -\ent (\uJ) \nonumber\\
        & = \sum_{a\in A} \ent(\uI_a) + \sum_{b\in B} \ent(\uI_b) - \ent (\uJ) \nonumber\\
        & = \sum_{k\in [m]} \ent(\uI_k) -\ent (\uJ)
    \end{align}
    which is Eq.~\eqref{eq:entropy-relations}. The second step of Eq.~\eqref{eq:combination1} follows from the fact that for any subset of $\{\uI_a\}_{a\in A}$ the $\bset$ $\bs{\bigcup_{a\in A}\uI_a}$ is also in $\parv\cup\van$, since any $\mi'$ whose arguments are unions of subsets of $\{\uI_a\}_{a\in A}$ satisfies $\mi'\prec\mi$, and we can use the same decomposition as in the first line of Eq.~\eqref{eq:combination1} (and similarly for any subset of $\{\uI_b\}_{b\in B}$). 
    
    The converse follows from a very similar reasoning, essentially following the implications of Eq.~\eqref{eq:combination1} in the opposite direction.
\end{proof}

We stress that the similarities with graph models that we have highlighted are even more surprising considering that \Cref{thm:connectivity_of_H}, \Cref{cor:connected-components}, and \Cref{thm:entropy-relations} hold for arbitrary KC-PMIs, even the ones that violate SSA \cite{He:2022bmi}. 
Importantly, however, it should be clear that there is no contradiction with the fact that all holographic graph models, (as well as hypergraph models of \cite{Bao:2020zgx})
satisfy SSA, since these are different constructions from the correlation hypergraph representation of a KC-PMI. For example, the graph model that realizes the extreme ray of the SAC$_3$ whose KC-PMI is shown in Fig.~\ref{fig:PT3} is a star graph with four leaves corresponding to the various parties, and unit weight for each edge \cite{Bao:2015bfa}. On the other hand, the hypergraph representation of the same KC-PMI is a hypergraph with four vertices (instead of five) and five hyperedges corresponding to the $\bsets$ in the top two rows of Fig.~\ref{fig:PT3}. Naively one can imagine to assign weights to these hyperedges to construct a hypergraph model, which is obviously possible. However, it is straightforward to verify that there is no choice of weights such that the min-cut prescription to compute an entropy vector from this hypergraph model gives an entropy vector realizing the original KC-PMI. Instead, for KC-PMIs that can be realized by graph or hypergraph models, the correlation hypergraph should be used to extract information about the necessary topology of these models \cite{graph-construction}.

We conclude this subsection with two additional results about the structure of the hypergraph representation of a KC-PMI, for which we need to review a few more concepts from graph theory. Given an arbitrary set $A$ and a collection of subsets $\{A_1,A_2,\ldots,A_n\}$, the \textit{intersection graph} of this collection is a graph with $n$ vertices $\{v_1,v_2,\ldots,v_n\}$ and edges specified as follows: Any two distinct vertices $v_i,v_j$ are connected by an edge if and only if $A_i\cap A_j\neq \varnothing$. Given a hypergraph $\gf{H}$, its \textit{line graph}, denoted $\gf{L}$, is defined as the intersection graph of the collection of all its hyperedges. We will also need the notion of a \textit{clique} of a graph $\gf{G}$, which is defined as an induced sub-graph of $\gf{G}$ which is by itself a complete graph. We call a clique of $\gf{G}$ which is not contained in any other clique a \textit{max-clique}.

Given the correlation hypergraph $\hp$ of a KC-PMI $\pmi$, let $\lhp$ be its line graph, and $Q$ any max-clique of $\lhp$. 
In other words, the vertices of $\lhp$ correspond to subsystems $\uJ$ with positive $\bsets$, and $Q$ is any maximal subset of vertices of $\lhp$ which are pairwise-intersecting.\footnote{\,A vertex is an isolated vertex if it is not contained in any (hyper)edge, and it is by itself a max-clique.} For example, for the KC-PMI in Fig.~\ref{fig:N4-example}, $\lhp$ and its set of max-cliques are described in Fig.~\ref{fig:N4-example-LH}.

\begin{figure}[tbp]
    \centering
    \begin{tikzpicture}[scale=1]
    \draw (0,0) -- (-2,0);
    \draw (0,0) -- (0,2);
    \draw (0,0) -- (2,0);
    \draw (2,0) -- (4,0);
    \draw (6,0) -- (8,0);
    
    \filldraw[fill=Mcol] (0,0) circle (4pt);
    \filldraw[fill=PminusEcol] (-2,0) circle (4pt);
    \filldraw[fill=Mcol] (0,2) circle (4pt);
    \filldraw[fill=Mcol] (2,0) circle (4pt);
    \filldraw[fill=EminusMcol] (4,0) circle (4pt);
    \filldraw[fill=Mcol] (6,0) circle (4pt);
    \filldraw[fill=Mcol] (8,0) circle (4pt);
    
    \node[] () at (0,-0.5) {{\scriptsize $12$}};
    \node[] () at (-2,-0.5) {{\scriptsize $034$}};
    \node[] () at (0,2.5) {{\scriptsize $03$}};
    \node[] () at (2,-0.5) {{\scriptsize $04$}};
    \node[] () at (4,-0.5) {{\scriptsize $123$}};
    \node[] () at (6,-0.5) {{\scriptsize $01$}};
    \node[] () at (8,-0.5) {{\scriptsize $234$}};

    \node[] () at (0,-1) {{}};
    
    \end{tikzpicture}
    \caption{The ``non-trivial part'' of the complement graph of $\lhp$ for the KC-PMI in Fig.~\ref{fig:N4-example}, obtained by first taking the complement of $\lhp$ (i.e., the graph with same vertex set but complementary edge set), and then deleting each isolated vertex (i.e., each vertex which is connected to every other vertex in the original graph $\lhp$). Each vertex in the graph above corresponds to a hyperedge $e_{\uJ}$ of $\hp$ and is labeled by the corresponding $\uJ$, with the same color-coding as used in Fig.~\ref{fig:N4-example} for the corresponding $\bset$ classification. In total, $\lhp$ has six max-cliques which can conveniently written as $Q=\Omega\cup \widetilde{Q}$. Here $\Omega$ is the set of vertices which have been omitted in the graph above, and $\widetilde{Q}$ is any choice of a maximal set of vertices in the graph above such that, for each edge, at most one endpoint is in $\widetilde{Q}$. For example, for one of the max-cliques, we have $\widetilde{Q}=\{12,123,01\}$. As a preview of what we will discuss in \S\ref{sec:necessary}, notice that (for example) the vertices $\{04,123,01,234\}$ form a cycle in $\lhp$ without a chord, and that therefore $\lhp$ is \textit{not} a chordal graph (all these notions will be reviewed below). By the necessary condition that we will prove in \S\ref{sec:necessary}, this implies that the KC-PMI $\pmi$ cannot be realized by a simple tree graph model, and indeed the graph realization of $\pmi$ given in Fig.~\ref{fig:N4-example} is a tree, but it is not simple.}
    \label{fig:N4-example-LH}
\end{figure}

We denote by $Q^{\cap}$ the intersection of the hyperedges of $\hp$ which are elements of $Q$, i.e.,
\begin{equation}
    Q^{\cap}=\bigcap_{e_{\uJ}\in Q}\,e_{\uJ} \ .
\end{equation}
Notice that while each element of $Q$ is a vertex of $\lhp$, and therefore a hyperedge of $\hp$ corresponding to some subsystem, here it is viewed as a set of vertices of $\hp$, each one corresponding to a single party.
While we defined $Q^{\cap}$ for the correlation hypergraph representation of a KC-PMI, the same object could of course be defined for arbitrary hypergraphs, and it should be obvious that in that case its cardinality $|Q^{\cap}|$ can take any value. On the other hand, for the hypergraph representation of a KC-PMI, it turns out that $|Q^{\cap}|$ satisfies a very stringent upper-bound, and its specific value is intimately related to certain structural properties of the PMI. In fact, for the typical case of interest, such as the examples in the holographic context shown in Figs.~\ref{fig:PT3} and \ref{fig:N4-example}, there is no communal intersection, so $|Q^{\cap}|=0$. The general case is captured by the following result.

\begin{thm}
\label{thm:clique-intersection}
    For any \emph{KC-PMI} $\pmi$, and max-clique $Q$ of $\lhp$
    \begin{enumerate}[label={\emph{\footnotesize \roman*)}}]
        \item $|Q^{\cap}|\in\{0,1,2\}$,
        \item if $|Q^{\cap}|>0$ then $\subh{\nsp\setminus\ell}$ is disconnected for any $v_\ell\in Q^{\cap}$,
        \item $|Q^{\cap}|=2$ if and only if $\hp$ is disconnected and, letting $Q^{\cap}=\{v_{\ell_1},v_{\ell_2}\}$, one of its connected components is the 2-vertex connected graph with vertices $\{v_{\ell_1},v_{\ell_2}\}$.
    \end{enumerate}
\end{thm}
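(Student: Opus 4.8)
The plan is to first localize the problem and then treat the three claims largely together, with the bound in (i) as the real work. I would start from \Cref{lem:join}: if two positive $\bsets$ have intersecting supports, their union is again positive. Iterating this along connected chains gives two facts: any two distinct maximal elements of $\pos$ are disjoint (else their union is a strictly larger positive $\bset$), and the support of any connected sub-hypergraph of $\hp$ is itself a positive $\bset$. Hence the maximal positive $\bsets$ are exactly the vertex sets of the connected components of $\hp$. Since the members of a clique pairwise intersect they cannot straddle two disjoint components, so any max-clique $Q$ lies inside a single component $\uU$ (a maximal positive $\bset$); as $e_{\uU}$ meets every hyperedge inside $\uU$, maximality forces $e_{\uU}\in Q$ and $\uU=\bigcup_{e_{\uJ}\in Q}\uJ$. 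Writing $W=Q^\cap$, every member contains $W$, and no positive $\bset$ is strictly contained in $W$: such an $e_{\uK}$ would meet every member (all contain $W\supsetneq\uK$) and so lie in $Q$ by maximality, forcing $Q^\cap\subseteq\uK\subsetneq W$. By \Cref{thm:connectivity_of_H} this makes $\bs{\uK}\in\van$ for every $\uK\subsetneq W$, so all pairwise $\mi(\ell:\ell')$ with $\ell,\ell'\in W$ vanish and, by \Cref{thm:entropy-relations}, the entropies are additive on all proper subsets of $W$.

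\textbf{Part (ii).} This part I expect to be purely combinatorial. For $\ell\in W$ I first note $\bs{\nsp\setminus\ell}\notin\pos$: the hyperedge $e_{\nsp\setminus\ell}$ would intersect every member of $Q$ (each contains $\ell$ together with at least one other party), so by maximality it would belong to $Q$, forcing $\ell\notin Q^\cap$. Now suppose $\subh{\nsp\setminus\ell}$ were connected; since $\N\ge 2$ it has at least two vertices and hence no isolated vertex, so the union of all its hyperedges is the whole of $\nsp\setminus\ell$ and, by the connected-union consequence of \Cref{lem:join}, this union is a positive $\bset$. That gives $\bs{\nsp\setminus\ell}\in\pos$, a contradiction. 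Hence $\subh{\nsp\setminus\ell}$ is disconnected.

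\textbf{Parts (i) and (iii).} Here purity and subadditivity must enter, and this is where I expect the main obstacle. The combinatorial reductions alone do not cap $|W|$, since completely essential $\bsets$ of size $\ge 3$ exist (the ``parity'' pattern, with $\mi(\ell:\ell')=0$ pairwise but $\mi(\ell:\mathrm{rest})>0$), so the bound must come from the purity relations $\ent_{\uJ}=\ent_{\comp{\uJ}}$ cutting out $\mathbb{S}_{\pmi}$. Working with $\vec{\ent}\in\mathrm{int}(\face_{\pmi})$, I would combine the additivity on proper subsets of $W$ with the component decomposition $\ent_{\nsp\setminus\ell}=\sum_{C}\ent_{C}$ of \Cref{thm:entropy-relations} (the $C$ being the components produced by part (ii)) and the purity identity $\ent_{\ell}=\ent_{\nsp\setminus\ell}$. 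Specializing to three distinct $\ell_1,\ell_2,\ell_3\in W$, these relations together with the Araki--Lieb instances of subadditivity (i.e.\ $\ent_{\nsp\setminus\ell}\ge|\ent_{\uK}-\ent_{\uK'}|$ across the decorrelated parts, which are instances of $\mi\ge 0$ in the purified picture) over-determine the singleton entropies of $W$ and force both $\ent_{W}$ and the common bipartition value $\mu:=\sum_{\ell\in W}\ent_{\ell}-\ent_{W}=\mi(W_1:W_2)$ to vanish; but inside any positive member of $Q$ the corresponding MI instance is required to be strictly positive, a contradiction. This yields $|Q^\cap|\le 2$. Tracing the single place where equality survives (the cleanest sub-case being $\bs{W}\in\cess$ with $\uU=W$, where $\ent_{W}=0$ and the Araki--Lieb relations leave exactly $|W|=2$ admissible) then shows that $|Q^\cap|=2$ forces $W$ to be a two-vertex connected component of $\hp$ detached from the rest, which is precisely the forward direction of (iii). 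The converse of (iii) is immediate: if $\{v_{\ell_1},v_{\ell_2}\}$ is a component, then $e_{\{\ell_1,\ell_2\}}$ is an isolated vertex of $\lhp$ and forms the max-clique $Q=\{e_{\{\ell_1,\ell_2\}}\}$ with $Q^\cap=\{v_{\ell_1},v_{\ell_2}\}$.

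The hardest step is making this over-determination argument uniform across the three possible types ($\cess$, $\ncess$, $\pnote$) of the clique's top $\uU$ and across tops $\uU$ that strictly contain $W$. The delicate point is to pin down, for each central party $\ell\in W$, enough of the component structure of $\subh{\nsp\setminus\ell}$ that purity together with Araki--Lieb genuinely \emph{collapses} the relevant entropies rather than merely constraining them; getting this collapse in all cases, without case-by-case entropy bookkeeping, is the crux of the proof.
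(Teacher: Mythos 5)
Your reductions, your proof of part (ii), and the converse direction of (iii) are correct and essentially coincide with the paper's argument (the paper phrases (ii) as ``$\bs{\nsp\setminus\ell}\notin\pos$ hence vanishing or partial, hence $\subh{\nsp\setminus\ell}$ disconnected by \Cref{thm:connectivity_of_H}''; your contrapositive via \Cref{lem:join} is the same content). The problem is that part (i) and the forward direction of (iii) --- the actual substance of the theorem --- are not proven. You explicitly defer ``the crux'' to an unexecuted ``over-determination'' argument involving purity and Araki--Lieb inequalities, and that sketch cannot be checked because it is never instantiated: it is not shown which inequalities are combined, why they force $\ent_W=0$ and $\mu=0$ in each of the cases you enumerate, or why the collapse spares exactly $|W|=2$. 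As written, this is a genuine gap, not a compressed proof.

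The missing idea is also simpler than what you are reaching for, and it is combinatorial rather than inequality-based. You stopped at ``no positive $\bset$ is strictly contained in $W$,'' but the same maximality argument gives much more: if $v_{\ell_1},v_{\ell_2}\in Q^{\cap}$ are distinct, then \emph{any} positive $\bs{\uJ}$ with $\ell_1\in\uJ$ and $\ell_2\notin\uJ$ would have $e_{\uJ}$ meeting every member of $Q$ (all contain $v_{\ell_1}$), hence $e_{\uJ}\in Q$ by maximality, hence $Q^{\cap}\subseteq\uJ$, a contradiction; so every positive $\bset$ containing one of $\ell_1,\ell_2$ contains both. Applying this to $\uJ=\nsp\setminus\ell_2$ and $\uJ=\nsp\setminus\ell_1$ and invoking \Cref{thm:max-structure}/\Cref{thm:partition} (the party $\ell_1$ must sit in a singleton block of $\Gamma(\nsp\setminus\ell_2)$, and symmetrically), one gets $\mi(\ell_1:\nsp\setminus\{\ell_1,\ell_2\})\in\pmi$ and $\mi(\ell_2:\nsp\setminus\{\ell_1,\ell_2\})\in\pmi$; their sum equals $\mi(\ell_1\ell_2:\comp{\{\ell_1,\ell_2\}})$ by the purity relations alone (a linear identity, no Araki--Lieb needed), so $\{v_{\ell_1},v_{\ell_2}\}$ is forced to be a connected component of $\hp$ and the only positive $\bset$ meeting it is $\bs{\ell_1\ell_2}$. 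This is the forward direction of (iii), and it kills (i) immediately: a third vertex $v_{\ell_3}\in Q^{\cap}$ would have to lie in every member of $Q$, but the only member is $e_{\ell_1\ell_2}$. Your observation that completely essential $\bsets$ of size $\geq 3$ exist is a red herring here --- in those ``parity'' examples the positive $\bsets$ of the clique have empty common intersection, so they do not obstruct the bound.
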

\begin{proof}
    ii) Consider an arbitrary KC-PMI $\pmi$ and its correlation hypergraph $\hp$. Suppose there is a max-clique $Q$ of $\lhp$ such that $|Q^{\cap}|>0$, and let $v_{\ell}$ be an element of $Q^{\cap}$. Let $\uK=\nsp\setminus\ell$ and notice that for any $\uJ$ such that $|\uJ|\geq 2$ we have $\uK\cap\uJ\neq\varnothing$.
    It must then be that $\bs{\uK}\notin\pos$, otherwise $\hp$ has a hyperedge $e_{\uK}$ which is in any max-clique of $\lhp$, but if $e_{\uK}\in Q$ then $Q^{\cap}$ does not contain $v_\ell$. 
    Therefore $\bs{\uK}$ is either vanishing or partial, and by \Cref{thm:connectivity_of_H} $\subh{\nsp}$ is disconnected. 

    iii) One direction is obvious: if $\hp$ is disconnected and one of its connected components is $\{v_{\ell_1},v_{\ell_2}\}$,
    then the only hyperedge $e_{\uJ}$ such that $\{\ell_1,\ell_2\}\cap\uJ\neq\varnothing$ is $e_{\uJ}=\{v_{\ell_1},v_{\ell_2}\}$, implying that $Q=\{e_{\uJ}\}$ is a max-clique of $\lhp$, and $|Q^{\cap}|=2$.
    
    In the opposite direction, suppose there is a max-clique $Q$ of $\lhp$ such that $|Q^{\cap}|=2$, and let $Q^{\cap}=\{v_{\ell_1},v_{\ell_2}\}$. Notice that this requires that there is at least one positive $\bset$ $\bs{\uJ}$ such that $\{\ell_1,\ell_2\}\subseteq\uJ$.
    Furthermore, for any $\uJ$ such that $\ell_1\in\uJ$ and $\ell_2\notin\uJ$, it must be that $\bs{\uJ}\notin\pos$, otherwise $e_{\uJ}$ would belong to $Q$ (since any $e_{\uK}\in Q$ contains $v_{\ell_1}$ by the assumption about $Q^{\cap}$), but for any $e_{\uK}\in Q$ we would have $v_{\ell_2}\notin e_{\uJ}\cap e_{\uK}$ contradicting the assumption about $Q^{\cap}$. By the same argument, swapping $\ell_1$ and $\ell_2$, it also follows that $\bs{\uJ}\notin\pos$ for any $\uJ$ such that $\ell_1\notin\uJ$ and $\ell_2\in\uJ$. Therefore, all $\bsets$ $\bs{\uJ}$ such that $\uJ$ contains $\ell_1$ or $\ell_2$ but not both are either vanishing or partial. 

    Let us first consider the simpler case where they are all vanishing, and consider in particular $\bs{\uJ}$ for $\uJ=\nsp\setminus\ell_2$. Since $\bs{\uJ}\in\van$ we have $\mi(\ell_1:\nsp\setminus\{\ell_1,\ell_2\})\in\pmi$, and by replacing $\ell_1$ with $\ell_2$ in $\uJ$ we also have $\mi(\ell_2:\nsp\setminus\{\ell_1,\ell_2\})\in\pmi$. By the linear dependence among these MI instances it then follows that $\mi(\{\ell_1,\ell_2\}:\comp{\{\ell_1,\ell_2\}})\in\pmi$. Therefore $\hp$ is disconnected and there are two possibilities: Either $v_{\ell_1},v_{\ell_2}$ are both isolated vertices, or $\{v_{\ell_1},v_{\ell_2}\}$ is a connected component of $\hp$. In the first case, there is no $\bs{\uJ}\in\pos$ such that $\ell_1\in\uJ$ or $\ell_2\in\uJ$, and $Q^{\cap}=\varnothing$. So the only option is that $\bs{\ell_1\ell_2}\in\pos$ and $Q^{\cap}=\{v_{\ell_1},v_{\ell_2}\}$.

    Suppose now that there is at least one partial $\bs{\uJ}$ where $\uJ$ contains $\ell_1$ or $\ell_2$ but not both. Consider some $\bs{\uK}\in\text{Max}(\posj)$. Since by definition any such $\bs{\uK}$ is positive, it must be that $\ell_1\notin\uK$ (by the same argument used above, because $\ell_2\notin\uK$). This implies that $\ell_1$ is an element of the partition of $\uJ$ given by Eq.~\eqref{eq:partition}, and by \Cref{thm:partition} we have $\mi(\ell_1:\nsp\setminus\{\ell_1,\ell_2\})\in\pmi$. Swapping $\ell_1$ and $\ell_2$ we then also get $\mi(\ell_2:\nsp\setminus\{\ell_1,\ell_2\})\in\pmi$ and the rest of the proof is identical to the previous case. 
    
    i) For sufficiently large $\N$, suppose that for some KC-PMI there is a max-clique $Q$ in $\lhp$ such that $|Q^{\cap}|=r\geq 3$, and let $Q^{\cap}=\{v_{\ell_1},v_{\ell_2},\ldots,v_{\ell_r}\}$. Consider the first two elements of $Q^{\cap}$, it follows by (ii) that $\hp$ is disconnected and the only hyperedge that contains $v_{\ell_1}$ and $v_{\ell_2}$ is $e_{\uJ}$ with $\uJ=\{\ell_1\ell_2\}$, but in this case $Q^{\cap}$ does not contain any other element. Finally, to see that we can indeed have $|Q^{\cap}|=0$ it is sufficient to provide an example, which is given by the KC-PMI illustrated in Fig.~\ref{fig:PT3}.
\end{proof}

We stress that \Cref{thm:clique-intersection} also implies that the cases where $|Q^{\cap}|\in\{1,2\}$ are rather exceptional, since they require that at least one instance of the mutual information $\mi(\uJ:\uK)$ with $|\uJ|+|\uK|=\N$ vanishes; an example is shown in Fig.~\ref{fig:exmplae-clique-theorems}. In fact, if one is interested in ``genuine'' $\N$-party KC-PMIs, these cases can typically be ignored as they can be understood in terms of lifts of KC-PMIs for fewer parties \cite[Thm.6]{He:2022bmi}.\footnote{\,For the example in Fig.~\ref{fig:exmplae-clique-theorems}, the KC-PMI is realized by a product of three Bell pairs corresponding to the edges of the graph model, i.e.,  $\ket{\psi}_{12_1}\ket{\psi}_{02_2}\ket{\psi}_{32_3}$, where party 2 is a coarse-grained subsystem whose Hilbert space is the tensor product of those for parties $2_1,2_2,2_3$.} For the vast majority of KC-PMIs at any given $\N$, and in fact for all interesting cases, one effectively has $|Q^{\cap}|=0$.

\begin{figure}[tbp]
    \centering
    \begin{subfigure}{0.29\textwidth}
    \centering
    \begin{tikzpicture}[scale=1]
    \draw (0,0) -- (1,0);
    \draw (1,0) -- (1,1);
    \draw (1,0) -- (2,0);
    
    \filldraw (0,0) circle (2pt);
    \filldraw (1,0) circle (2pt);
    \filldraw (1,1) circle (2pt);
    \filldraw (2,0) circle (2pt);
    
    \node[] () at (1,-0.4) {{\scriptsize $2$}};
    \node[] () at (-0.3,0) {{\scriptsize $1$}};
    \node[] () at (1,1.3) {{\scriptsize $3$}};
    \node[] () at (2.4,0) {{\scriptsize $0$}};
    \node[] () at (1,-1) {};

    \node[red] () at (0.5,0.2) {{\scriptsize $1$}};
    \node[red] () at (1.15,0.5) {{\scriptsize $1$}};
    \node[red] () at (1.5,0.2) {{\scriptsize $1$}};
    \end{tikzpicture}
    \subcaption[]{}
    \end{subfigure}
    \begin{subfigure}{0.69\textwidth}
    \centering
    \begin{tikzpicture}[scale=0.5]
    \node[fill=PminusEcol, align=center, rounded corners] (0123) at (0,4) {{\footnotesize $\bs{0123}$}};
    \node[fill=PminusEcol, align=center, rounded corners] (012) at (-6,2) {{\footnotesize $\bs{012}$}};
    \node[fill=Vcol, align=center, rounded corners] (013) at (-2,2) {{\footnotesize $\bs{013}$}};
    \node[fill=PminusEcol, align=center, rounded corners] (023) at (2,2) {{\footnotesize $\bs{023}$}};
    \node[fill=PminusEcol, align=center, rounded corners] (123) at (6,2) {{\footnotesize $\bs{123}$}};
    \node[fill=Vcol, align=center, rounded corners] (01) at (-7.5,0) {{\footnotesize $\bs{01}$}};
    \node[fill=Mcol, align=center, rounded corners] (02) at (-4.5,0) {{\footnotesize $\bs{02}$}};
    \node[fill=Vcol, align=center, rounded corners] (03) at (-1.5,0) {{\footnotesize $\bs{03}$}};
    \node[fill=Mcol, align=center, rounded corners] (12) at (1.5,0) {{\footnotesize $\bs{12}$}};
    \node[fill=Vcol, align=center, rounded corners] (13) at (4.5,0) {{\footnotesize $\bs{13}$}};
    \node[fill=Mcol, align=center, rounded corners] (23) at (7.5,0) {{\footnotesize $\bs{23}$}};

    \node[red] () at (0,-1) {};
    \end{tikzpicture}
    \subcaption[]{}
    \end{subfigure}
    \par\bigskip\bigskip\bigskip
    \begin{subfigure}{0.49\textwidth}
    \centering
    \begin{tikzpicture}
    \draw[PminusEcol, rounded corners, very thick] (-1.5,-1.6) rectangle (4.5,2.8);
    \draw[PminusEcol, rounded corners, very thick] (-0.65,-0.7) rectangle (1.9,2.2);
    \draw[PminusEcol, rounded corners, very thick] (0.95,-0.9) rectangle (3.65,2.4);
    \draw[PminusEcol, rounded corners, very thick] (-1,-1.2) rectangle (4,0.5);
    
    \draw[Mcol!80!black, very thick] (0,0) -- (1.5,0);
    \draw[Mcol!80!black, very thick] (1.5,0) -- (1.5,1.5);
    \draw[Mcol!80!black, very thick] (1.5,0) -- (3,0);
    
    \filldraw (0,0) circle (2pt);
    \filldraw (1.5,0) circle (2pt);
    \filldraw (1.5,1.5) circle (2pt);
    \filldraw (3,0) circle (2pt);
    
    \node[] () at (1.5,-0.35) {{\scriptsize $v_2$}};
    \node[] () at (-0.35,0) {{\scriptsize $v_1$}};
    \node[] () at (1.5,1.8) {{\scriptsize $v_3$}};
    \node[] () at (3.4,0) {{\scriptsize $v_0$}};
    \node[] () at (1,-1) {};

    \node[] () at (0.7,0.2) {{\scriptsize $e_{12}$}};
    \node[] () at (1.25,0.75) {{\scriptsize $e_{23}$}};
    \node[] () at (2.25,0.2) {{\scriptsize $e_{02}$}};
    \node[] () at (-0.25,1.9) {{\scriptsize $e_{123}$}};
    \node[] () at (-0.65,-1) {{\scriptsize $e_{012}$}};
    \node[] () at (3.2,2.2) {{\scriptsize $e_{023}$}};
    \node[] () at (-1,2.5) {{\scriptsize $e_{0123}$}};
    \end{tikzpicture}
    \subcaption[]{}
    \end{subfigure}
    \begin{subfigure}{0.49\textwidth}
    \centering
    \begin{tikzpicture}
  \def\n{7} 
  \def\radius{2} 

  \newcommand{\getcolor}[1]{%
    \ifcase#1%
      \or PminusEcol%
      \or PminusEcol%
      \or PminusEcol%
      \or PminusEcol%
      \or Mcol%
      \or Mcol%
      \or Mcol%
    \fi%
  }
    
  \foreach \i in {1,...,\n} {
    \coordinate (P\i) at ({\radius*cos(360/\n*(\i-1))}, {\radius*sin(360/\n*(\i-1))});
  }

  \foreach \i in {1,...,\n} {
    \foreach \j in {1,...,\n} {
      \ifnum\i<\j
        \draw (P\i) -- (P\j);
      \fi
    }
  }

  \foreach \i in {1,...,\n} {
    \filldraw[fill=\getcolor{\i}, draw=black] (P\i) circle (3pt);
  }

    \node[] () at (-2.2,-1.3) {{\scriptsize $\{v_0,v_2\}$}};
    \node[] () at (-0.5,-2.3) {{\scriptsize $\{v_1,v_2\}$}};
    \node[] () at (1.9,-1.8) {{\scriptsize $\{v_2,v_3\}$}};

    \node[] () at (2.2,1.7) {{\scriptsize $\{v_0,v_1,v_2\}$}};
    \node[] () at (-2.5,1.2) {{\scriptsize $\{v_0,v_2,v_3\}$}};
    \node[] () at (3,0) {{\scriptsize $\{v_1,v_2,v_3\}$}};

    \node[] () at (-0.4,2.3) {{\scriptsize $\{v_0,v_1,v_2,v_3\}$}};
\end{tikzpicture}
    \subcaption[]{}
    \end{subfigure}
    \caption{A simple example of the special case of a KC-PMI with $|Q^\cap|=1$. (a) an $\N=3$ simple tree graph model $\gf{G}$ where a boundary vertex (2) is not a leaf. (b) the classification of the $\bsets$ for the KC-PMI $\pmi$ of $\gf{G}$. (c) the correlation hypergraph $\hp$ of $\pmi$. (d) the line graph $\lhp$ of $\hp$. In this simple example $\lhp$ is a complete graph and therefore it has a single max-clique $Q$ (the graph itself), for which $Q^\cap=\{v_2\}$. In agreement with \Cref{thm:clique-intersection}, the sub-hypergraph $\subh{013}$ is disconnected, as it is the trivial hypergraph with isolated vertices $\{v_0,v_1,v_3\}$, corresponding to the fact that $\bs{013}$ is vanishing. Notice that an analogous statement is also true for the graph model in (a), in the sense that $\gf{G}_{013}$ is completely disconnected.}
    \label{fig:exmplae-clique-theorems}
\end{figure}

Further insights into the structure of a KC-PMI can be gained by looking at another object closely related to $Q^\cap$. Given a KC-PMI, its hypergraph representation $\hp$, and a party $\ell\in\nsp$, we define the $\ell$-clique $ Q^{\ell}$ of $\lhp$ as follows
\begin{equation}
    Q^{\ell}=\{e_{\uJ},\; \ell\in\uJ\}.
\end{equation}
Clearly this set is a clique of $\lhp$,\footnote{\,As we will see momentarily it can be $Q^{\ell}=\varnothing$, but even this is a clique, despite being trivial.} although in general it is not necessarily a max-clique. The next result relates the inclusion of this particular clique into max-cliques, to the connectivity of the vertex $v_\ell$ in $\hp$.

\begin{thm}
\label{thm:ell-clique}
    For any \emph{KC-PMI} $\pmi$, and any party $\ell$, either $v_\ell$ is an isolated vertex of $\hp$ and $Q^{\ell}=\varnothing$, or $Q^{\ell}$ is contained in exactly one max-clique of $\lhp$.
\end{thm}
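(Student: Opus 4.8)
The two stated alternatives are mutually exclusive, and the first is immediate from \Cref{defi:hypergarph_rep}: $v_\ell$ is isolated in $\hp$ precisely when no positive $\bset$ contains $\ell$, i.e.\ when $Q^{\ell}=\varnothing$. So assume $v_\ell$ is not isolated, so that $Q^{\ell}\neq\varnothing$. Since any two hyperedges in $Q^{\ell}$ share the vertex $v_\ell$, $Q^{\ell}$ is a clique of $\lhp$, and as $\lhp$ is finite it is contained in at least one max-clique; the whole content is therefore \emph{uniqueness}. The plan is to reduce this to a statement about common neighbours, using the elementary fact that a clique $C$ of a finite graph lies in a unique max-clique if and only if the set $N(C)$ of vertices adjacent to all of $C$ induces a clique (in which case the unique max-clique is $C\cup N(C)$) --- for if $u,w\in N(C)$ were non-adjacent, then $C\cup\{u\}$ and $C\cup\{w\}$ would extend to distinct max-cliques. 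Thus it suffices to show that the common neighbours of $Q^{\ell}$ pairwise intersect.

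To control them I would first isolate the maximal positive $\bset$ through $\ell$. Set $\uI_\ell:=\bigcup\{\uK:\bs{\uK}\in\pos,\ \ell\in\uK\}$; all of these $\uK$ contain $\ell$ and hence pairwise intersect, so iterating \Cref{lem:join} gives $\bs{\uI_\ell}\in\pos$, making $\uI_\ell$ the unique maximum positive $\bset$ containing $\ell$, with $e_{\uI_\ell}\in Q^{\ell}$ and $Q^{\ell}=\{e_{\uK}:\bs{\uK}\in\pos,\ \ell\in\uK\}$. A common neighbour $e_{\uK}$ of $Q^{\ell}$ has $\ell\notin\uK$ (otherwise $e_{\uK}\in Q^{\ell}$) and meets $e_{\uI_\ell}$; by \Cref{lem:join} $\bs{\uK\cup\uI_\ell}\in\pos$, and since this contains $\ell$, maximality forces $\uK\subseteq\uI_\ell$. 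Hence every common neighbour is a hyperedge of $\subh{\uI_\ell}$, and the whole question becomes local to $\uI_\ell$.

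The crux is that any two common neighbours $e_{\uK},e_{\uK'}$ intersect, and I would prove this by induction on $|\uI_\ell|$, applying the dichotomy of \Cref{lem:max-structure} to $\text{Max}(\pos_{\subset\uI_\ell})$ and using \Cref{thm:connectivity_of_H} to track $v_\ell$ among the connected components of $\subh{\uI_\ell}$. When $v_\ell$ lies in a non-trivial component $\uK_{i_0}$, the requirement that each common neighbour meet every positive $\bset$ containing $\ell$ --- in particular $e_{\uK_{i_0}}$ --- localizes both $\uK$ and $\uK'$ inside $\uK_{i_0}$, reducing the claim to the strictly smaller subsystem $\uK_{i_0}$. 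The main obstacle is the degenerate configuration in which $v_\ell$ is an \emph{isolated} vertex of $\subh{\uI_\ell}$ while $\subh{\uI_\ell}$ still has at least two non-trivial components (so $\bs{\uI_\ell}\in\ncess$ with the maximal sub-$\bsets$ pairwise disjoint): there $Q^{\ell}=\{e_{\uI_\ell}\}$, its common neighbours are \emph{all} the hyperedges of $\subh{\uI_\ell}$, and these fail to pairwise intersect. I expect the key insight to be that this configuration simply cannot arise for a genuine KC-PMI.

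To exclude it one must use the linear (not merely order-theoretic) structure of a PMI. In that configuration each $\bset$ $\bs{\{\ell\}\cup\uK_i}$ is strictly between the maximal $\uK_i$ and $\uI_\ell$, hence not positive, so by \Cref{thm:connectivity_of_H} it lies in $\parv$ with $\Gamma(\{\ell\}\cup\uK_i)=\{\uK_i,\ell\}$, and \Cref{thm:entropy-relations} gives $\ent_{\{\ell\}\cup\uK_i}=\ent_{\uK_i}+\ent_\ell$. Combining these relations across $i$ with the purity identities $\ent_{\uJ}=\ent_{\comp{\uJ}}$ and $\ent_{\nsp}=0$ forces $\ent_\ell=0$, whence $\mi(\ell:\uI_\ell\setminus\ell)=0$, contradicting $\bs{\uI_\ell}\in\pos$. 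The cancellation is transparent when $\uI_\ell=\nsp$; the general case propagates it using that $\bs{\nsp}\notin\pos$ confines all of $\ent_{\uI_\ell}$ to the component carrying $\uI_\ell$, and I regard pinning down this bookkeeping uniformly (or, equivalently, phrasing the induction through the restriction of a KC-PMI to $\uI_\ell$) as the principal technical difficulty. Once the degenerate case is excluded, the component-and-induction argument shows $N(Q^{\ell})$ is a clique, so $Q^{\ell}$ extends to the unique max-clique $Q^{\ell}\cup N(Q^{\ell})$.
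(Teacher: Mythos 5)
Your reduction of the statement to ``the common neighbours of $Q^{\ell}$ pairwise intersect,'' and your observation that every such common neighbour is a positive $\bset$ strictly contained in the maximum positive $\bset$ $\uI_\ell$ through $\ell$, are both correct. But the argument you build on top of this has a genuine gap: your induction only engages branch~(i) of the dichotomy in \Cref{lem:max-structure}. When $\text{Max}(\pos_{\text{{\tiny $\subset$}}\uI_\ell})$ has the structure of branch~(ii) (pairwise-intersecting $\uK_i$ with pairwise unions equal to $\uI_\ell$), the sub-hypergraph $\subh{\uI_\ell}$ is connected, so there is no strictly smaller component containing $v_\ell$ to induct on, and the claimed localization ``each common neighbour meets $e_{\uK_{i_0}}$, hence lies inside $\uK_{i_0}$'' is simply false there: a positive $\uK$ with $\uK\cap\uK_{i_0}\neq\varnothing$ and $\uK\cup\uK_{i_0}=\uI_\ell$ is perfectly consistent with maximality. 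A concrete instance is Fig.~\ref{fig:exmplae-clique-theorems} with $\ell=1$: $e_{02}$ is a common neighbour of $Q^{1}$ that meets $e_{123}$ but is not contained in $\{1,2,3\}$. So the possibility of two \emph{disjoint} common neighbours in branch~(ii) is never excluded by your structural argument, and it cannot be excluded by order-theoretic means alone. In addition, even in the ``degenerate configuration'' you do isolate, your cancellation is only carried out for $\uI_\ell=\nsp$; for $\uI_\ell\subsetneq\nsp$ the purity relations tie $\ent_{\uJ}$ to $\ent_{\nsp\setminus\uJ}$ rather than to $\ent_{\uI_\ell\setminus\uJ}$, and you explicitly leave this bookkeeping open. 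As it stands the proposal is a plan with two unclosed cases rather than a proof.

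The paper's proof shows that the linear-dependence mechanism you reserve for one degenerate sub-case is in fact the uniform engine for \emph{all} cases, which is why no induction or case split on $\text{Max}(\pos_{\text{{\tiny $\subset$}}\uI_\ell})$ is needed. Assuming two distinct max-cliques $Q_1,Q_2\supseteq Q^{\ell}$, one first checks that there exist $e_{\uJ_1}\in Q_1\setminus Q_2$ and $e_{\uJ_2}\in Q_2\setminus Q_1$ with $\uJ_1\cap\uJ_2=\varnothing$ and $\ell\notin\uJ_1\cup\uJ_2$. Then $\bs{\comp{\uJ_1}}$ and $\bs{\uJ_1\cup\ell}$ are both forced to be partial, and---because any non-singleton block of their partitions $\Gamma$ containing $\ell$ would be a positive $\bset$ in $Q^{\ell}$ disjoint from $\uJ_1$, contradicting $Q^{\ell}\subseteq Q_1$---\Cref{thm:partition} yields $\mi(\ell:\uJ_1)=0$ and $\mi(\ell:\comp{\uJ_1}\setminus\ell)=0$. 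Adding these and using purity gives $2\ent_\ell=\mi(\ell:\nsp\setminus\ell)=0$, so $v_\ell$ is isolated, a contradiction. Working with the global complement $\comp{\uJ_1}$ rather than localizing to $\uI_\ell$ is precisely what makes the purity step immediate and removes the difficulty you flag. If you want to salvage your outline, replace the component induction by this single disjoint-pair argument.
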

\begin{proof}
    For an arbitrary KC-PMI $\pmi$, consider a party $\ell$ and notice that it follows immediately from the definitions of $Q^\ell$  and $\hp$ that $Q^\ell=\varnothing$ if and only if $v_\ell$ is an isolated vertex. Suppose then that $Q^\ell$ is non-empty, and that it is not a max-clique (otherwise the statement is trivially true since a max-clique is not contained in any other clique). Let $Q$ be an arbitrary max-clique that contains $Q^\ell$ (since $Q^\ell$ is not maximal there is always at least one such max-clique), and notice that for any element $e_{\uJ}\in (Q\setminus Q^\ell)$ it must be that $\ell\notin\uJ$, otherwise $e_{\uJ}$ would be in $Q^\ell$. 
  
    Suppose now that there are two distinct max-cliques $Q_1,Q_2$ that contain $Q^\ell$, and consider some $e_{\uJ_1}\in Q_1\setminus Q_2$ and $e_{\uJ_2}\in Q_2\setminus Q_1$. Notice that such $e_{\uJ_1},e_{\uJ_2}$ always exist by the fact that $Q_1,Q_2$ are assumed to be distinct and maximal, so neither of them can contain the other. Furthermore, also notice that it is always possible to choose $e_{\uJ_1},e_{\uJ_2}$ such that $\uJ_1\cap\uJ_2=\varnothing$, since if for a chosen $e_{\uJ_1}$ we have $\uJ_1\cap\uJ_2\neq\varnothing$ for each possible choice of $e_{\uJ_2}$, then $e_{\uJ_2}\in Q_1$ and $Q_1$ is not maximal.

    Let us assume that we have made such a choice for $e_{\uJ_1},e_{\uJ_2}$. Since $\ell\notin\uJ_1$, it must be that $\bs{\comp{\uJ_1}}\notin\pos$, otherwise $\bs{\comp{\uJ_1}}\in Q^\ell$ (since $\ell\in\comp{\uJ_1}$), and $Q_1$ would not contain $Q^\ell$ because $\uJ_1\cap\comp{\uJ_1}=\varnothing$. This implies that $\bs{\comp{\uJ_1}}$ is either partial or vanishing. However, notice that from $\uJ_1\cap\uJ_2=\varnothing$ we have $\uJ_2\subseteq\comp{\uJ_1}$ and by assumption $\bs{\uJ_2}\in\pos$ (since $e_{\uJ_2}\in\lhp$). Therefore it must be that $\bs{\comp{\uJ_1}}\in\parv$ because $\bs{\uJ_2}\preceq\bs{\comp{\uJ_1}}$, and by (i) of \Cref{thm:max-structure} it follows that $\bs{\comp{\uJ_1}}$ cannot be vanishing.

    Let then $\uK=\nsp\setminus(\uJ_1\cup\uJ_2\cup\ell)$ and rewrite $\comp{\uJ_1}$ as $\comp{\uJ_1}=\ell\cup\uJ_2\cup\uK$. Consider the partition of $\comp{\uJ_1}$ given by Eq.~\eqref{eq:partition} and notice that any non-singleton component cannot contain $\ell$, since otherwise the corresponding $\bset$ would be positive and it would be in $Q^\ell$ but not in $Q_1$ (since it has empty intersection with $\uJ_1$), contradicting the assumption that $Q_1$ contains $Q^\ell$. Therefore by \Cref{thm:partition} it must be that $\mi(\ell:\uJ_2\cup\uK)\in\pmi$.
    
    Consider now $\uJ'_1=\uJ_1\cup\ell$, and suppose that $\bs{\uJ'_1}\in\pos$. Then $\bs{\uJ'_1}\in Q^\ell$ (by the definition of $Q^\ell$) and $\bs{\uJ'_1}\in Q_1$ (by the fact that $\bs{\uJ_1}\in Q_1$ and any $\uK$ which has non-empty intersection with $\uJ_1$ has non-empty intersection with $\uJ'_1$) consistently with the assumption that $Q^\ell\subset Q_1$. However, since $\uJ_2\cap\uJ_1=\varnothing$ and $\ell\notin\uJ_2$, it follows that $\uJ_2\cap\uJ'_1=\varnothing$ and $\bs{\uJ'_1}\notin Q_2$. This contradicts the assumption that $Q^\ell\subset Q_2$ and it implies that $\bs{\uJ'_1}$ is either vanishing or partial. It must then be that $\bs{\uJ'_1}\in\parv$, since $\uJ_1\subset\uJ'_1$ and $\bs{\uJ_1}$ is positive. As before, any non-trivial component of the partition of $\uJ'_1$ given by Eq.~\eqref{eq:partition} of $\uJ'_1$ cannot contain $\ell$, implying again by \Cref{thm:partition} that $\mi(\ell:\uJ_1)\in\pmi$. 
    
    Finally, by linear dependence, $\mi(\ell:\uJ_2\cup\uK)\in\pmi$ and $\mi(\ell:\uJ_1)\in\pmi$ imply $\mi(\ell:\nsp\setminus\ell)\in\pmi$ and by \Cref{cor:connected-components} $v_{\ell}$ is an isolated vertex. As shown above, this is only possible if $Q^\ell$ is empty, implying that if $Q^\ell$ is non-empty it is contained in exactly one max-clique of $\lhp$.
\end{proof}

We leave it as an exercise for the reader to show that for any KC-PMI, and any two distinct parties $\ell_1$ and $\ell_2$, the cliques $Q^{\ell_1}$ and $Q^{\ell_2}$ of $\lhp$ coincide if and only if $Q^{\cap}=\{v_{\ell_1},v_{\ell_2}\}$.\footnote{\,Hint: consider the partitions $\Gamma(\nsp\setminus\ell_1)$ and $\Gamma(\nsp\setminus\ell_2)$.}

We expect \Cref{thm:clique-intersection} and \Cref{thm:ell-clique} to play a central role in a systematic approach to constructing holographic graph models that realize a given KC-PMI \cite{graph-construction}. These results in fact arose in the course of developing such a procedure, but we include them here, as—like many of the properties of graph models discussed in this work—they extend naturally to arbitrary KC-PMIs.
In particular, given a KC-PMI~$\pmi$ that satisfies the necessary condition for realizability by a simple tree~$\gf{T}$, derived in the following subsection, we expect that \Cref{thm:clique-intersection} may be used to determine whether $\gf{T}$ has boundary vertices that are not leaves, as illustrated in Fig.~\ref{fig:exmplae-clique-theorems}. Barring this possibility, we expect all boundary vertices to be leaves, with internal vertices corresponding to the max-cliques of $\lhp$. We then expect \Cref{thm:ell-clique} to determine the internal vertex to which the leaf associated to a given party should be connected, thereby completing the full graph construction.

\section{Necessary condition for realizability by simple trees}
\label{sec:necessary}

We are now ready to derive a necessary condition for the realizability of a KC-PMI by a simple tree graph model. We begin with an observation that is simply a rewriting of a result that we have already presented in \S\ref{subsec:graph_review}.

\begin{cor}
\label{cor:ew_connectivity-v2}
    For any simple graph model $\gf{G}$ and subsystem $\uJ$, if $\bs{\uJ}$ is positive then the subgraph $\gf{G}_\uJ$ induced by the minimal min-cut $\mmC_{\uJ}$ for $\uJ$ is connected.\footnote{\,While we phrased this lemma for the minimal min-cut, it is clear that it holds more generally for any min-cut.}
\end{cor}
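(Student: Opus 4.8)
The plan is to prove the contrapositive and reduce everything to \Cref{cor:ew-decomposition} (together with \Cref{lem:simple-G-homology}), which is exactly the tool that converts connectivity of the subgraph induced by a minimal min-cut into statements about vanishing mutual information. Since $\bs{\uJ}$ positive means that \emph{every} instance of the mutual information over a bipartition of $\uJ$ is strictly positive on $\vec{\ent}(\gf{G})$, it suffices to exhibit a single vanishing bipartition whenever $\gf{G}_\uJ$ is disconnected.

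Concretely, I would first assume $\gf{G}_\uJ$ is disconnected. By \Cref{lem:simple-G-homology}—and here the hypothesis that $\gf{G}$ is simple is essential, since it guarantees that each connected component of $\gf{G}_\uJ$ is itself induced by a minimal min-cut for a subsystem—there is a partition $\{\uJ_i\}_{i\in[m]}$ of $\uJ$ with $m\geq 2$ such that $\gf{G}_\uJ=\bigoplus_{i=1}^m \gf{G}_{\uJ_i}$. The next step is to build from this partition a single bipartition of $\uJ$ that witnesses a vanishing MI instance: I would set $\uK_1=\uJ_1$ and $\uK_2=\uJ\setminus\uJ_1=\bigcup_{i\geq 2}\uJ_i$, both nonempty because $m\geq 2$, so that $\mi(\uK_1:\uK_2)\in\bs{\uJ}$.

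Finally, I would apply \Cref{cor:ew-decomposition} with $\uI=\uJ$ to this instance. Each component $\uJ_i$ is contained entirely in $\uK_1$ (for $i=1$) or entirely in $\uK_2$ (for $i\geq 2$), and $m\geq 2$, so the corollary forces $\mi(\uK_1:\uK_2)=0$. This contradicts the positivity of $\bs{\uJ}$, which requires every one of its elements to be strictly positive; hence $\gf{G}_\uJ$ must be connected. I would also remark that the argument is insensitive to whether one uses the minimal min-cut or an arbitrary min-cut, since \Cref{thm:subgraph-connectivity} and \Cref{cor:ew-decomposition} already encode the relevant connectivity–correlation dictionary, establishing the footnoted generalization for free.

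There is essentially no deep obstacle here: the result is a clean repackaging of \Cref{cor:ew-decomposition}. The only point that requires care is that positivity of $\bs{\uJ}$ is a statement quantified over \emph{all} bipartitions of $\uJ$, so a single constructed bipartition with vanishing mutual information is enough to break it; and that the existence of the partition $\{\uJ_i\}$ relies crucially on simplicity of $\gf{G}$ via \Cref{lem:simple-G-homology} (for a non-simple graph the connected components of $\gf{G}_\uJ$ need not be induced by min-cuts of subsystems, as \Cref{subfig:non-cimple-G} illustrates, and the reduction would fail).
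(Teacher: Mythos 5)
Your proof is correct and takes essentially the same route as the paper: the paper's own proof is a one-line observation that the statement is the $m=1$ case of \Cref{cor:ew-decomposition} translated into $\bset$ language, and your contrapositive argument (disconnected $\Rightarrow$ $m\geq 2$ via \Cref{lem:simple-G-homology} $\Rightarrow$ a vanishing bipartition via \Cref{cor:ew-decomposition}) is just that same special case spelled out in detail.
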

\begin{proof}
   The statement is a translation of the special case of \Cref{cor:ew-decomposition} for $m=1$ in the language of $\bsets$.
\end{proof}

The next lemma is again general for arbitrary simple graph models (not necessarily trees) and it establishes a relation between certain topological data about $\gf{G}$ and the hypergraph representation of the KC-PMI realized by $\gf{G}$. 

\begin{lemma}
\label{lem:isomorphic_LH}
    For any \emph{KC-PMI} $\pmi$ that can be realized by a holographic graph model, any $\vec{\ent}_*\in\text{\emph{int}}(\face_{\pmi})$, and any simple graph model $\gf{G}$ such that $\vec{\ent}(\gf{G})=\vec{\ent}_*$, let
    \begin{equation}
    \label{eq:sigma}
        \Sigma=\{\mmC_{\uJ},\; \forall\,\uJ\;\; \text{\emph{such that}}\;\; \bs{\uJ}\in\pos\}.
    \end{equation}
    Then the intersection graph of $\Sigma$ is isomorphic to the line graph $\lhp$ of the correlation hypergraph of $\pmi$.
\end{lemma}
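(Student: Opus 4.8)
The plan is to exhibit an explicit graph isomorphism $\phi$ between the intersection graph of $\Sigma$ and the line graph $\lhp$, built directly from the one-to-one correspondence between positive $\bsets$, minimal min-cuts, and hyperedges. The key observation is that both graphs have vertex sets indexed by the \emph{same} collection of subsystems, namely $\{\uJ : \bs{\uJ}\in\pos\}$: by \Cref{defi:hypergarph_rep} the vertices of $\lhp$ are the hyperedges $e_{\uJ}$ with $\bs{\uJ}\in\pos$, while the elements of $\Sigma$ are the minimal min-cuts $\mmC_{\uJ}$ with $\bs{\uJ}\in\pos$. I would therefore set $\phi(\mmC_{\uJ})=e_{\uJ}$ and verify that it is a well-defined bijection preserving adjacency in both directions.

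First I would check that $\phi$ is a well-defined bijection. The assignment $\uJ\mapsto e_{\uJ}$ is a bijection onto the hyperedge set by construction of the correlation hypergraph, so the only nontrivial point is that $\uJ\mapsto\mmC_{\uJ}$ is injective on $\{\uJ : \bs{\uJ}\in\pos\}$, i.e.\ that $\Sigma$ contains one element for each such $\uJ$. This is immediate from part (ii) of \Cref{thm:minimal-min-cuts}: if $\mmC_{\uJ_1}=\mmC_{\uJ_2}$ then $\mmC_{\uJ_1}\subseteq\mmC_{\uJ_2}$ and $\mmC_{\uJ_2}\subseteq\mmC_{\uJ_1}$ force $\uJ_1\subseteq\uJ_2$ and $\uJ_2\subseteq\uJ_1$, hence $\uJ_1=\uJ_2$. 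Consequently both vertex sets are in canonical bijection with $\{\uJ : \bs{\uJ}\in\pos\}$, and $\phi$ is a bijection.

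The heart of the argument is adjacency preservation. Two distinct vertices $\mmC_{\uJ_1},\mmC_{\uJ_2}$ are joined by an edge in the intersection graph of $\Sigma$ exactly when $\mmC_{\uJ_1}\cap\mmC_{\uJ_2}\neq\varnothing$, whereas the images $e_{\uJ_1},e_{\uJ_2}$ are joined in $\lhp$ exactly when $e_{\uJ_1}\cap e_{\uJ_2}\neq\varnothing$, i.e.\ when $\uJ_1\cap\uJ_2\neq\varnothing$. These two conditions are equivalent by part (iii) of \Cref{thm:minimal-min-cuts}, which in contrapositive form asserts precisely that $\mmC_{\uJ_1}\cap\mmC_{\uJ_2}\neq\varnothing$ if and only if $\uJ_1\cap\uJ_2\neq\varnothing$. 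Thus $\phi$ sends edges to edges and non-edges to non-edges, and together with the previous paragraph this establishes that $\phi$ is a graph isomorphism.

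I do not anticipate a serious obstacle, since the statement is essentially a repackaging of \Cref{thm:minimal-min-cuts}; the only care needed is bookkeeping. I would note explicitly that the index set $\{\uJ : \bs{\uJ}\in\pos\}$ is determined by $\pmi$ alone, so it is literally identical for $\Sigma$ and for the hyperedges of $\hp$, and that the hypotheses—$\gf{G}$ simple with $\vec{\ent}(\gf{G})=\vec{\ent}_*\in\text{int}(\face_{\pmi})$—serve only to guarantee that $\gf{G}$ genuinely realizes $\pmi$ (so that, via \Cref{cor:ew_connectivity-v2}, each $\mmC_{\uJ}$ with $\bs{\uJ}\in\pos$ indeed induces a connected subgraph, and the minimal min-cuts in $\Sigma$ are the meaningful objects attached to $\pmi$). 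Since \Cref{thm:minimal-min-cuts} is purely combinatorial and holds for any graph model, simplicity enters only through this clean party-to-boundary-vertex correspondence and could be relaxed by the standard reduction of a non-simple model to an equivalent simple one.
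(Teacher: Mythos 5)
Your proposal is correct and follows essentially the same route as the paper, whose proof simply cites the definition of $\Sigma$, the definition of $\lhp$, and part (iii) of \Cref{thm:minimal-min-cuts}; you have merely spelled out the bookkeeping (the canonical bijection via the common index set $\{\uJ:\bs{\uJ}\in\pos\}$, injectivity from part (ii), adjacency from part (iii)) that the paper leaves implicit.
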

\begin{proof}
    This follows immediately from the definition of $\Sigma$ in Eq.~\eqref{eq:sigma}, the definition of $\lhp$, and (iii) in \Cref{thm:minimal-min-cuts}.
\end{proof}

In the special case where $\gf{G}$ is a simple tree $\gf{T}$, \Cref{lem:isomorphic_LH} implies that the line graph $\lhp$ of the KC-PMI realized by $\gf{T}$ is isomorphic to the intersection graph of a family of subtrees of $\gf{T}$. Graphs of this type are sometimes called \textit{subtree graphs} and it is a well known fact in graph theory that they are completely characterized by the following result \cite{book:intersection_graphs}.

\begin{thm}[Buneman, Gavril, and Walter]
\label{thm:subtree}
     A graph is a subtree graph if and only if it is a chordal graph.\footnote{\,A graph is said to be \textit{chordal} if all cycles of four or more vertices have a chord.}
\end{thm}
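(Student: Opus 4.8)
The plan is to prove the two implications separately, with both resting on a single structural lemma about subtrees of a tree, the \emph{Helly property}: if $T_1,\dots,T_m$ are subtrees of a tree $T$ that pairwise intersect, then $\bigcap_{i} T_i \neq \varnothing$. I would establish this first, by induction on $m$. The base case $m=2$ is the definition, and for the inductive step one uses the fact that paths in a tree are unique: given three connected subtrees with nonempty pairwise intersections, the unique paths joining representative vertices of the intersections must all pass through a common ``median'' vertex, which then lies in all three. This median/unique-path observation is really the combinatorial engine of the entire theorem.

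For the direction \emph{subtree graph} $\Rightarrow$ \emph{chordal}, I would argue by contradiction. Suppose $G$ is realized by subtrees $\{T_v\}$ of a tree $T$ and contains a chordless cycle $v_1 v_2 \cdots v_k v_1$ with $k\geq 4$. For each cyclic edge $v_i v_{i+1}$ I pick a tree-vertex $x_i\in T_{v_i}\cap T_{v_{i+1}}$, and join $x_{i-1}$ to $x_i$ by the unique path of $T$, which lies entirely inside the connected subtree $T_{v_i}$. Concatenating these paths produces a closed walk in the acyclic graph $T$, which must therefore backtrack somewhere; the turnaround vertex then belongs to two subtrees $T_{v_i}, T_{v_j}$ whose indices are non-consecutive around the cycle, forcing the forbidden chord $v_i v_j$ and giving the contradiction.

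For the converse, \emph{chordal} $\Rightarrow$ \emph{subtree graph}, I would induct on $|V(G)|$ using Dirac's theorem, which supplies a simplicial vertex $v$ (one whose neighborhood $N(v)$ is a clique). By induction, $G-v$ admits a subtree representation $\{T_w\}$ on a tree $T'$. Since $N(v)$ is a clique, the subtrees $\{T_w : w\in N(v)\}$ pairwise intersect, so by the Helly lemma they share a vertex $p\in T'$. I then attach a new leaf $q$ to $T'$ at $p$, extend each $T_w$ with $w\in N(v)$ to contain $q$, and set $T_v=\{q\}$. The single-vertex subtree $T_v$ meets exactly the subtrees of $N(v)$, while extending along one fresh leaf creates no new intersections among the old subtrees (they already met at $p$), so the enlarged family realizes $G$ precisely.

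The step I expect to be the main obstacle is the Helly lemma together with the unique-path bookkeeping it powers in the forward direction. Both ultimately reduce to the acyclicity of $T$, but converting ``pairwise intersection'' into ``common intersection,'' and ``the closed walk must backtrack'' into ``two \emph{non-consecutive} subtrees overlap,'' requires a careful choice of representative vertices to rule out the degenerate configurations in which the relevant tree-paths share only their endpoints. Once the lemma and this case analysis are in hand, Dirac's theorem and the inductive gluing in the converse are entirely routine.
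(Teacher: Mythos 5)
This is a classical result that the paper does not prove at all — it simply cites \cite{book:intersection_graphs} — so any comparison is between your proposal and the standard literature proofs rather than an argument in the paper. Your converse direction (chordal $\Rightarrow$ subtree graph) is correct and is the standard argument: Dirac's simplicial vertex, induction, the Helly property for subtrees to locate a common point $p$, and the attachment of a fresh leaf $q$ at $p$. Your sketch of the Helly lemma is also sound (the $m=3$ case via the median of three representative points, each pairwise path lying in one of the three subtrees; the general case by replacing two subtrees with their intersection).

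The genuine gap is in the forward direction. The mechanism you propose — ``the closed walk in $T$ must backtrack; the turnaround vertex belongs to two subtrees with non-consecutive indices'' — is false as stated. A turnaround (entering and leaving a vertex along the same tree edge) occurs precisely at the junction points $x_i$ where $P_i$ ends and $P_{i+1}$ begins, and these witness only \emph{consecutive} overlaps, which force no chord. Concretely, take $k=4$ and let the union of the four paths be a star with center $c$ and leaves $a,b,d$, with $x_1=a$, $x_2=b$, $x_3=d$, $x_4=c$: the closed walk is $c\,a\,c\,b\,c\,d\,c$, its only turnarounds are at $a,b,d$, each shared by a consecutive pair, while the chord-forcing intersection $P_2\cap P_4=\{c\}$ occurs at a vertex that is \emph{not} a turnaround (and is invisible to any edge-crossing parity count, since the two paths meet in a single vertex without sharing an edge). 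No choice of representatives repairs this; what is needed is a different argument. Two standard options: (i) for $k=4$, the four-point condition in trees — of the three pairings of $\{x_1,x_2,x_3,x_4\}$ into two paths, at most one pairing is disjoint, so either $P_2\cap P_4\neq\varnothing$ or $P_1\cap P_3\neq\varnothing$, either of which forces a chord — followed by an induction on $k$ that deletes one $x_i$ and replaces two consecutive paths by the direct path between their outer endpoints; or (ii) abandon the cycle argument entirely and induct on $|V(T)|$: at a leaf $t$ of $T$, either some subtree equals $\{t\}$ (its vertex is then simplicial, and removing a simplicial vertex preserves the property of being chordal-certifiable) or every subtree containing $t$ also contains its neighbor, so $t$ can be pruned without changing the intersection graph.
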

\begin{proof}
    See for example \cite[Theorem 2.4]{book:intersection_graphs}.
\end{proof}

By combining \Cref{cor:ew_connectivity-v2}, \Cref{lem:isomorphic_LH} and \Cref{thm:subtree} we can easily derive the following necessary condition for the realizability of an entropy vector by a simple tree holographic graph model.

\begin{thm}
\label{thm:necessary-condition}
    An entropy vector $\vec{\ent}_*$ can be realized by a simple tree graph model only if its \emph{PMI} $\pmi$ is a \emph{KC-PMI} and the line graph $\lhp$ of the correlation hypergraph of $\pmi$ is a chordal graph.
\end{thm}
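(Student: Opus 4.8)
The plan is to assemble the three ingredients already in place---\Cref{cor:ew_connectivity-v2}, \Cref{lem:isomorphic_LH}, and \Cref{thm:subtree}---into a single chain of implications, so that most of the conceptual work has in fact already been done and what remains is to verify that the pieces fit together. First I would dispose of the claim that $\pmi$ must be a KC-PMI: any entropy vector realizable by a graph model is realizable by a stabilizer state \cite{hayden2016holographic}, hence satisfies SSA, so the PMI of $\vec{\ent}_*$ (whose interior contains $\vec{\ent}_*$ itself by definition of the PMI map) is SSA-compatible and therefore, as noted in \S\ref{subsec:KC-review}, a KC-PMI. This part is immediate and requires no new argument, so the content of the theorem is really the chordality statement.

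For that statement, suppose $\vec{\ent}_*=\vec{\ent}(\gf{T})$ for some simple tree $\gf{T}$, and consider the family $\Sigma=\{\mmC_{\uJ},\ \bs{\uJ}\in\pos\}$ of \Cref{lem:isomorphic_LH}. The first step is to observe that each $\mmC_{\uJ}$ in $\Sigma$ induces a \emph{connected} subgraph of $\gf{T}$: this is exactly \Cref{cor:ew_connectivity-v2}, which says that positivity of $\bs{\uJ}$ forces $\gf{T}_{\uJ}$ to be connected. Since a connected induced subgraph of a tree is itself a tree, every element of $\Sigma$ is the vertex set of a subtree of $\gf{T}$, and hence $\Sigma$ is a family of subtrees of the common host tree $\gf{T}$.

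The second step is to identify the combinatorics of how these subtrees intersect with the adjacency structure of $\lhp$. By \Cref{lem:isomorphic_LH}, the intersection graph of $\Sigma$ is isomorphic to $\lhp$; concretely, two subtrees $\mmC_{\uJ}$ and $\mmC_{\uK}$ share a vertex if and only if $\uJ\cap\uK\neq\varnothing$ (by (iii) of \Cref{thm:minimal-min-cuts}), which is precisely the condition for the hyperedges $e_{\uJ}$ and $e_{\uK}$ to be adjacent in $\lhp$. Thus $\lhp$ is, up to isomorphism, the intersection graph of a family of subtrees of a tree, i.e.\ a subtree graph, and invoking \Cref{thm:subtree} immediately yields that $\lhp$ is chordal, which completes the proof.

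I do not anticipate a genuine obstacle here, since the statement is essentially a corollary of the preceding machinery. The only point demanding care is the interface between the graph-theoretic notion of connectivity used in \Cref{cor:ew_connectivity-v2} and the set-theoretic intersection condition underlying \Cref{lem:isomorphic_LH}: one must check that ``the vertex sets of two subtrees intersect'' is indeed the correct notion of edge in the intersection graph, and that \Cref{thm:minimal-min-cuts}(iii) supplies exactly this, so that the subtree representation and the line graph genuinely coincide rather than merely resembling one another. Once this identification is secured, the chordality follows from the classical Buneman--Gavril--Walter characterization without further work.
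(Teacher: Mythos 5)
Your proposal is correct and follows essentially the same route as the paper's own proof: dismiss the non-KC case via SSA, use \Cref{cor:ew_connectivity-v2} to turn the minimal min-cuts in $\Sigma$ into subtrees of $\gf{T}$, identify the intersection graph of $\Sigma$ with $\lhp$ via \Cref{lem:isomorphic_LH}, and conclude chordality from \Cref{thm:subtree}. The only difference is cosmetic—you spell out the stabilizer-state justification for SSA and the interface between vertex-set intersection and adjacency in $\lhp$, which the paper leaves implicit.
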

\begin{proof}
    Consider an entropy vector $\vec{\ent}_*$ whose PMI $\pmi$ is a KC-PMI (otherwise the proof is trivial because $\vec{\ent}_*$ violates at least one instance of SSA and it cannot be realized by any graph model). By \Cref{lem:isomorphic_LH}, the intersection graph of $\Sigma$ is isomorphic to the line graph $\lhp$ of the hypergraph representation $\hp$ of $\pmi$. Suppose now that there exists a simple tree graph model $\gf{T}$ such that $\vec{\ent}(\gf{T})=\vec{\ent}_*$. By \Cref{cor:ew_connectivity-v2}, any element of $\Sigma$ is connected, and it is then a subtree of $\gf{T}$. Therefore $\lhp$ is isomorphic to the intersection graph of a collection of subtrees of a tree, and by \Cref{thm:subtree} it is a chordal graph.
\end{proof}

Given this necessary condition, one may wonder about how efficiently it can be verified. Given an arbitrary entropy vector $\vec{\ent}\in\mathbb{R}^{\D}_+$ which belongs to the SAC, it is straightforward to check which instances of the mutual information vanish, and therefore determine the corresponding PMI $\pmi$ and its complement $\cpmi$. If $\vec{\ent}$ is chosen arbitrarily, $\pmi$ is not necessarily a KC-PMI. The KC condition can be verified by checking whether $\pmi$ is a down-set, but depending on the situation, in practice it might be faster to simply verify whether $\vec{\ent}$ satisfies all instances of SSA.\footnote{\,While in general SSA is a stronger condition than KC, it is anyway a necessary condition for the realizability of an entropy vector by any graph model.} Assuming that this is the case, it is then straightforward to determine the positive $\bsets$, since this just amounts to ascertaining which $\bsets$ do not contain any element of $\pmi$. For an arbitrary KC-PMI, the number of positive $\bsets$ can be large, and $\lhp$ has a very complicated structure. Nevertheless, it is a well known fact in graph theory that verifying whether a graph is chordal can be done in linear time using the lexicographic breadth-first search algorithm \cite{lex-breadth-first}.

Suppose now that for a given entropy vector $\vec{\ent}$ we have checked that it satisfies SSA, so that its PMI $\pmi$ is a KC-PMI, and that $\lhp$ \textit{not} chordal. Does this mean that $\vec{\ent}$ cannot be realized by \textit{any} graph model? To see that this is not the case, we can now make precise a comment that we made at the end of \S\ref{subsec:graph_review}, and show that for any $\N\geq 3$, any entropy vector in the interior of the $\N$-party SAC cannot be realized by a simple tree. Indeed, notice that for any such entropy vector all $\bsets$ are positive. The $\bsets$ $\bs{01},\bs{12},\bs{23},\bs{30}$ then correspond to vertices of $\lhp$, and form a cycle without a chord. 
On the other hand, it is trivial to construct, for any $\N$, entropy vectors in the interior of the $\N$-party SAC that can be realized by graph models. An example is the entropy vector of any graph model obtained from a complete graph with $\N+1$ vertices where $\partial V=V$, for an arbitrary choice of weights.

Finally, let us briefly comment on the sufficiency of the condition in \Cref{thm:necessary-condition}. Suppose that we are given an entropy vector $\vec{\ent}$ whose PMI is a KC-PMI, and such that $\lhp$ \textit{is} chordal. Does this imply that $\vec{\ent}$ is realizable by a simple tree graph model? While we currently have no proof, we have tested this possibility in several examples and suspect that this might well be the case. Notice that \Cref{thm:subtree} is an ``if and only if'' statement, implying that if $\lhp$ is chordal, then it is the intersection graph of a collection of subtrees of a tree. It is not hard to see \cite{graph-construction} that one can then construct a tree that accommodates a choice of cuts for the various subsystems whose induced subgraphs have the correct connectivity properties given in \Cref{thm:subgraph-connectivity} and \Cref{cor:ew-decomposition}.\footnote{\,A simple situation where (naively) this appears to be false is the case where $\vec{\ent}$ has some vanishing components. In this case, any graph model that reproduces $\vec{\ent}$ must be disconnected, and $\vec{\ent}$ cannot be realized by a tree even if $\lhp$ is chordal. However, this is not a true obstruction, as one can simply set to 0 some of the weights on the tree constructed from the data in $\lhp$, or equivalently delete the corresponding edges, to obtain a forest that realizes $\vec{\ent}$.} What remains to be shown is that there exists a choice of edge weights such that these cuts are indeed min-cuts, and their costs correctly reproduce the components of $\vec{\ent}$ \cite{2025wip}.

\section{Discussion}
\label{sec:discussion}

The main goal of this work was to show that many structural properties of holographically realizable entropy configurations, that can be deduced from the analysis of the connectivity of entanglement wedges, completely generalize to arbitrary quantum states. The key tool for our derivations is  the concept of $\bsets$, and we have shown that in this language the description of KC-PMIs takes on a remarkably nice form. In particular,  when reformulated in the language of the correlation hypergraph, several similarities with well known properties of holographic graph models \cite{Bao:2015bfa}, and more generally hypergraph models \cite{Bao:2020zgx}, become completely manifest. One immediate application of this technology was the derivation of a necessary condition (\Cref{thm:necessary-condition}) for the realizability of an entropy vector by a simple tree graph model \cite{Hernandez-Cuenca:2022pst}, which can be tested efficiently. We conclude by discussing several possible directions for future work.

\paragraph{Correlation ``schemes'':} It is important to note that, given an arbitrary number of parties $\N$, and an arbitrary hypergraph $\gf{H}$ with $\N+1$ vertices, $\gf{H}$ does not necessarily correspond to the correlation hypergraph
of a KC-PMI. In other words, there does not necessarily exist an $\N$-party KC-PMI $\pmi$ such that $\gf{H} = \hp$. This is essentially because an arbitrary choice of instances of the mutual information is not necessarily closed under linear dependence. We have already seen an example in Fig.~\ref{fig:EsubsetP-not-suff}, where we presented a down-set $\ds$ in the $\N=3$ MI-poset which is not a PMI, and for which the reconstruction formula Eq.~\eqref{eq:recovery2} gives the correct $\cds$. Notice that $\cds$ does admit a description in terms of a hypergraph according to \Cref{defi:hypergarph_rep}, specifically the hypergraph with 4 vertices and a single hyperedge that contains all of them. 

On the other hand, it is also true that an arbitrary hypergraph does not, in general, respect structural results such as those in \Cref{thm:connectivity_of_H}, \Cref{thm:clique-intersection}, and \Cref{thm:ell-clique}. As a simple example of a violation\footnote{\,Here and below use the term ``violation'' as a shorthand to indicate where the main statement of the theorem would not hold in absence of the KC-PMI assumption.  In other words, it is not a violation of the theorem as stated but rather of a putative generalization which relaxes the assumption.} of \Cref{thm:connectivity_of_H}, consider the hypergraph with three vertices $\{v_0,v_1,v_2\}$ and hyperedges $e_{01},e_{02}$. If we construct an up-set $\cds$ such that the $\bsets$ $\bs{01},\bs{02}$ are positive, then by \Cref{lem:join} it follows that $\bs{012}$ is necessarily positive, and should have been a hyperedge. These violations of \Cref{thm:connectivity_of_H} however, can easily be fixed by taking an appropriate ``closure'' of the set of hyperedges \cite{2025wip}. An example of violation of \Cref{thm:clique-intersection} is again provided by the down-set in Fig.~\ref{fig:EsubsetP-not-suff}, since the corresponding line graph has a single vertex and $|Q^\cap|=4$. Finally, as an example of violation of \Cref{thm:ell-clique}, we leave it as an exercise for the reader to verify that, given the following collection of $\bsets$ $\xx= \{\bs{123},\bs{24},\bs{30}\}$, the complementary down-set $\ds$ to the up-set $\cds=\,\Uparrow \!\xx$  is not a PMI. Furthermore, the set of positive $\bsets$ of $\cds$ specifies an hypergraph whose line graph has two max-cliques, and they both contain the clique $Q_1$. Notice that this example does not violate \Cref{thm:clique-intersection}, since for both max-cliques, $|Q^\cap|=0$.

Considering these examples, one can therefore imagine defining a ``correlation scheme'' as an arbitrary hypergraph that satisfies all the properties derived in this work. Although the structure of the space of such objects is not immediately clear, it may possess additional properties not described here. At the very least, it is reasonable to expect that this analysis could provide new intuition for deriving a better ``approximation" of the set of all KC-PMIs \cite{He:2022bmi}, in the same way that $\bsets$ improve upon the previous characterization based solely on down-sets. In particular, it builds in the key part of the linear dependence among MI instances through \Cref{thm:positiveness}.

\paragraph{Algorithmic derivation of KC-PMIs:} As the number of parties $\N$ grows, the explicit derivation of KC-PMIs quickly becomes a challenging computational problem due to the large dimension of entropy space $\D = 2^{\N} - 1$ and the number of inequalities $\genfrac\{\}{0pt}{1}{\N+2}{3}$. An algorithm to find all 1-dimensional KC-PMIs at any given $\N$ was developed in \cite{He:2024xzq} and was shown to be efficient enough to solve this problem at least up to $\N=6$. While this algorithm can be adapted to also compute higher-dimensional PMIs, this computation is even more challenging due to their large number. The explicit computation of KC-PMIs can be of great utility to investigate some of the questions outlined below, and it is reasonable to hope that a formulation in terms of $\bsets$ might make the algorithm in \cite{He:2024xzq} more efficient (for example via a reduction of the search space to the correlation schemes mentioned above, which automatically implement at least some of the linear dependences among the MI instances).

\paragraph{SSA-compatible PMIs:} It was shown in \cite{He:2022bmi} that for $\N \geq 4$, not all KC-PMIs are SSA-compatible, and it was later found in \cite{He:2024xzq} that starting from $\N \geq 6$, this holds even for 1-dimensional KC-PMIs. However, we stress that all the technology and results developed in this work apply to arbitrary KC-PMIs, regardless of whether they are SSA-compatible. This is quite surprising, given the numerous similarities between the correlation hypergraph representation of a KC-PMI introduced here and the graph or hypergraph models of \cite{Bao:2015bfa} and \cite{Bao:2020zgx}, which always yield entropy vectors obeying strong subadditivity (SSA).

Since SSA is clearly a necessary condition for the realizability of a KC-PMI by a quantum state, it is important to better understand how to characterize SSA-compatible KC-PMIs. Given the relative simplicity of the $\N=4$ case, the full set of KC-PMIs can be computed explicitly \cite{He:2022bmi} and might serve as a useful testing ground to analyze the structure of the correlation hypergraph representation in these cases. In particular, it would be important to determine whether it is possible to identify some structural property of the correlation hypergraph representation that, at least for this low party number, distinguishes SSA-compatible KC-PMIs from SSA-violating ones. If this is not possible, it may be necessary to refine the correlation hypergraph representation so that it encodes additional information about the PMI, such as the specific instances of SSA that are saturated.

\paragraph{Hypergraph models and stabilizer states:} The entropy vectors realized by the hypergraph models of \cite{Bao:2020zgx} were shown in \cite{Walter:2020zvt} to belong to the entropy cone of stabilizer states. Although this inclusion was proven to be strict in \cite{Bao:2020mqq}, it remains possible that all KC-PMIs realizable by quantum states can also be realized by hypergraph models. It is also still a possibility that these KC-PMIs are, in fact, simply the SSA-compatible ones. Given the similarities we have already mentioned, the correlation hypergraph representation of a KC-PMI introduced in this work appears to be the right starting point to explore these possibilities in greater detail.

In particular, a key difference between the correlation hypergraph representation of a KC-PMI and hypergraph models is that the former includes only boundary vertices. A natural starting point would therefore be a generalized version of \Cref{cor:ew_connectivity-v2} for hypergraphs, to establish a relation between connected cuts on a hypergraph model and the line graph of the correlation hypergraph of the underlying PMI. This step is crucial for obtaining a model in which the entropies are determined by the solution to a minimization problem. In fact, this is also the case for the even more restricted case of holographic models, to which we now turn.

\paragraph{Holographic graph models:} 
Our initial motivation for studying $\bsets$ was to investigate the necessary and sufficient conditions for the realizability of an entropy vector by a graph model, and in particular, to explore, for realizable entropy vectors, the dependence of the topology of a graph realization on the structure of the PMI. 

As we already mentioned at the end of \S\ref{sec:necessary}, while in this work we have only proven a necessary condition for realizability by simple trees, we believe that this condition is also sufficient. In other words, we conjecture that any KC-PMI $\pmi$ for which $\lhp$ is chordal — henceforth referred to as \textit{chordal} PMIs — can be realized by a simple tree graph model, and we hope to report on this soon \cite{2025wip}. If this is true, it would imply, via \cite{hayden2016holographic}, that all chordal PMIs can be realized by quantum states. This would provide, for arbitrary $\N$, a characterization of part of the boundary of the quantum entropy cone.

We have already seen that even if this result holds, the necessary condition we derived is certainly not necessary for the realizability of KC-PMIs by more general graphs. For $\N=6$, \cite{He:2024xzq,comments} provide an extensive list of 1-dimensional SSA-compatible PMIs that are non-chordal yet, but nevertheless can be realized by graph models. It is noteworthy that the majority of these KC-PMIs are realized in \cite{He:2024xzq} by graph models with tree topology, although they are (necessarily) not simple. Since any non-simple tree graph model can be transformed into a simple one by relabeling the boundary vertices and increasing the number of parties to some $\N' \geq \N$ \cite{Hernandez-Cuenca:2022pst}, it follows from our results that after this lift to $\N'$ parties, the resulting KC-PMIs are chordal.

One may then hope that studying the conditions under which a non-chordal KC-PMI lifts to a chordal one might shed new light on the conjectures of \cite{Hernandez-Cuenca:2022pst} regarding the building blocks of the holographic entropy cone.

\acknowledgments

V.H. would like to thank Sergio Hern\'andez-Cuenca and Remi Seddigh for useful conversations.  V.H. has been supported in part by the U.S. Department of Energy grant DE-SC0009999 and by funds from the University of California.   M.R. acknowledges support from UK Research and Innovation (UKRI) under the UK government’s Horizon Europe guarantee (EP/Y00468X/1).
V.H. acknowledges the hospitality of the Kavli Institute for Theoretical Physics (KITP) during early stages of this work. 

There is no underlying data associated with this work.

For the purpose of open access, the authors have applied a Creative Commons Attribution (CC BY) licence to any Author Accepted Manuscript version arising from this submission.

\bibliography{hypergraph_rep}
\bibliographystyle{utphys}

\end{document}